\newtheorem{theorem}{Theorem}[section]
\newtheorem{lemma}[theorem]{Lemma}
\newtheorem{corollary}[theorem]{Corollary}
\newtheorem{definition}{Definition}[section]
\newtheorem{example}{Example}
\newtheorem*{claim}{Claim}
\newcommand{\ignore}[1]{}
\newcommand{\sstree}[1]{\tikz[level distance=3ex,baseline=-1ex]{#1}}
\newcommand{\stree}[2]{\sstree{\node{$\bullet^2$} child{node{$#1$}} child{node{$#2$}};}}
\newcommand{\down}{\!\downarrow}
\newcommand{\Mm}{\mathcal{M}}
\newcommand{\es}{\emptyset}
\newcommand{\set}[1]{\{#1\}}
\renewcommand{\S}{\Sigma}
\newcommand{\Aa}{\mathcal A}
\newcommand{\Bb}{\mathcal B}
\newcommand{\Cc}{\mathcal C}
\newcommand{\Hh}{\mathcal H}
\newcommand{\Pp}{\mathcal P}
\newcommand{\Ss}{\mathcal S}
\newcommand{\Rr}{\mathcal R}
\newcommand{\Ll}{\mathcal L}
\newcommand{\Tt}{\mathcal T}
\newcommand{\LT}{\Ll\Tt}
\newcommand{\ord}{\mathit{ord}}
\newcommand{\Unb}{\mathit{Diag}}
\newcommand{\restr}{{\restriction}}
\newcommand{\lista}{\mathit{list}}
\newcommand{\merge}{\mathit{merge}}
\newcommand{\Vars}[1]{\overline{\mathbf{x}_#1}} 
\newcommand{\tr}{\mathit{tr}}
\newcommand{\trcum}{\mathit{tr}_\mathit{cum}}
\newcommand{\Nat}{\mathbb{N}}
\renewcommand{\r}{\mathbf r}
\newcommand{\goestoA}[2]{\stackrel {#2} {\longrightarrow}_{#1}}
\newcommand{\goesto}[1]{\goestoA {} {#1}}
\newcommand{\lang}[1]{\mathcal{L}(#1)}
\newcommand{\trans}[1]{\mathcal{T}(#1)}
\newcommand{\sem}[1]{\lbrack\!\lbrack #1 \rbrack\!\rbrack}%
\begin{document}
	
\title{The Diagonal Problem for Higher-Order Recursion Schemes is Decidable}

\authorinfo{Lorenzo~Clemente \thanks{This work was partially supported by the Polish National Science Centre grant 2013/09/B/ST6/01575.}
\and Pawe\l~Parys\thanks{This work was partially supported by the National Science Center (decision DEC-2012/07/D/ST6/02443).}}
           {University of Warsaw\\Warsaw, Poland}
           {\{l.clemente,parys\}@mimuw.edu.pl}
\authorinfo{Sylvain~Salvati\and Igor~Walukiewicz}
           {University of Bordeaux, CNRS, INRIA\\Bordeaux, France}
           {\{sylvain.salvati,igw\}@labri.fr}

\maketitle


\keywords downward closure, separability problem, diagonal problem,
higher-order recursion schemes, higher-order OI grammars.

\begin{abstract}
  A non-deterministic recursion scheme recognizes a language of finite trees.
  This very expressive model can simulate, among others, higher-order pushdown automata with collapse.
  We show decidability of the diagonal problem for schemes.
  This result has several interesting consequences.
  In particular, it gives an algorithm that computes the downward closure of languages of words recognized by schemes.
  In turn, this has immediate application to separability problems and reachability analysis of concurrent systems.
\end{abstract}


\section{Introduction}

The \emph{diagonal problem} is a decision problem with a number of
interesting algorithmic consequences. It is a central subproblem for
computing the downward closure of languages of words~\cite{Zetzsche:ICALP:2015}, as well as for the
problem of separability by piecewise-testable languages~\cite{CzerwinskiMartensRooijenZeitounZetzsche}. It is
used in deciding reachability of a certain type of parameterized concurrent
systems~\cite{LaTorreMuschollWalukiewicz:2015}.
In its original formulation over finite words, the problem asks, for a given set of letters $\S$
and a given language of words $L$, whether for every number $n$ there is a word in $L$ where
every letter from $\S$ occurs at least $n$ times.
In this paper, we study a generalization of the diagonal problem for languages of finite trees recognized by non-deterministic higher-order recursion schemes.

\emph{Higher-order recursion schemes} are algorithmically manageable
abstractions of higher-order programs. Higher-order features are now
present in most mainstream languages like Java, JavaScript, Python, or
C++. Higher-order schemes, or, equivalently, simply typed
lambda-calculus with a fixpoint combinator, are a formalism that can
faithfully model the control flow in higher-order programs. In this paper, we
consider non-deterministic higher-order recursion schemes as recognizers of languages of finite trees.
In other words we consider higher-order OI grammars~\cite{io_oi_damm,kobele2015}.
This is an expressive formalism covering many other models such as
indexed grammars \cite{Aho:JACM:1968},
ordered multi-pushdown automata \cite{BreveglieriCherubiniCitriniCrespi-Reghizzi:Ordered:1996},
or the more general higher-order pushdown automata with collapse \cite{HagueMurawskiOngSerre:Collapsible:2008}
(cf. also the equivalent model of ordered tree-pushdown automata \cite{ClementeParysSalvatiWalukiewicz:FSTTCS:2015}).

Our main result is a procedure for solving the diagonal problem for higher-order schemes.
This is a missing ingredient to obtain several new decidability results for this model.
It is well-known that schemes have a decidable emptiness problem~\cite{Ong:LICS:2006},
and it can be shown that they are closed under rational linear transductions,
and in particular they form a full trio when restricted to finite word languages.
In this context, a result by Zetzsche~\cite{Zetzsche:ICALP:2015}
entails computability of the \emph{downward closure} of languages of words recognized by higher-order schemes.
Moreover, a recent result by Czerwi\'nski, Martens, van Rooijen, and Zeitoun~\cite{CzerwinskiMartensVanRooijenZeitoun:2015}
entails that the \emph{separability by piecewise testable
  languages} is decidable for languages recognized by higher-order schemes.
Finally, a third example comes from La Torre, Muscholl, and Walukiewicz~\cite{LaTorreMuschollWalukiewicz:2015}
showing how to use downward closures
to decide reachability in parameterized asynchronous shared-memory
concurrent systems where every process is a higher-order scheme.

\looseness=-1
While the examples above show that the diagonal problem is intimately connected to downward closures%
\footnote{In fact, the diagonal problem, separability by piecewise testable languages, and computing the downward closure
are inter-reducible for full trios \cite{CzerwinskiMartensRooijenZeitounZetzsche}.},
the computation of the downward closure is an important problem in its own right.
The downward closure of a language offers an effective abstraction thereof.
Since the subword relation is a well quasi-order \cite{Higman:1952}, the downward
closure of a language is always a regular language determined by a finite set of
forbidden patterns. 
 This abstraction is thus particularly interesting
for complex languages, like those not having a semilinear Parikh
image. While the downward closure is always regular, it is not always
possible to effectively construct a finite automaton for it.
This is obviously the case for classes with undecidable emptiness (since the downward closure preserves emptiness),
but it is also the case for relatively better behaved classes for which the emptiness problem is decidable,
such as Church-Rosser languages \cite{GruberHozerKutrib:TCS:2007}, and lossy channel systems \cite{Mayr:TCS:2003}.

The problem of computing the downward closure of a language has
attracted a considerable attention recently. Early results show how
to compute it for context-free languages~\cite{vanLeeuwen:1978,Courcelle:1991} (cf. also \cite{BachmeierLuttenbergerSchlund:2015}),
for Petri-net languages~\cite{DownwardPN:ICALP:2010},
for stacked counter automata \cite{Zetzsche:STACS:2015},
and context-free FIFO rewriting systems and $0L$-systems \cite{AbdullaBoassonBouajjani:2001}.
More recently, Zetzsche~\cite{Zetzsche:ICALP:2015} has given an
algorithm for indexed grammars, or equivalently for second-order
pushdown automata. Hague, Kochems, and
Ong~\cite{DBLP:conf/popl/HagueKO16} have made an important
further advance by showing how to compute the downward closure of the 
language of pushdown automata of arbitrary order. In this paper, we
complete the picture by giving an algorithm for the more general model of higher-order pushdown automata with collapse \cite{HagueMurawskiOngSerre:Collapsible:2008}.
We use the fact that these automata recognize the same class of languages as higher-order recursion schemes,
and we work with the latter model instead.

Let us briefly outline our approach.
While are mainly interested in higher-order recursion schemes (HORSes) generating finite words,
for technical reasons we also need to consider narrow trees,
i.e., trees with a bounded number of paths.
In this we follow an idea of Hague et al.~\cite{DBLP:conf/popl/HagueKO16} who have used this
technique for higher-order pushdown automata (without collapse). 
For a HORS $\Ss$ and a set of letters $\Sigma$, 
the diagonal problem asks whether
for every $n\in\Nat$ there is a tree
generated by $\Ss$ in which every letter from $\Sigma$ appears at
least $n$ times. Our goal is an algorithm solving this problem. 
When $\Ss$ is of order $0$, we have a regular grammar,
for which the diagonal problem can be solved by direct inspection.
For higher orders,
apply a transformation that decreases the order by one.
The order is decreased in two steps.
First, we ensure that the HORS generates only narrow trees: we construct
a HORS $\Ss'$, of the same order as $\Ss$, generating only narrow trees
and such that the diagonal problems for $\Ss$ and $\Ss'$ are equivalent. 
Then, in the narrow HORS $\Ss'$ we lower the order by one: 
we create a HORS $\Ss''$ that is of order smaller by one than $\Ss'$
(but no longer narrow), and such that 
the diagonal problems for $\Ss'$ and $\Ss''$ are equivalent.

While narrowing the HORS is relatively easy to achieve,
the main technical difficulty is order reduction.
This point is probably better explained in terms of
higher-order pushdown automata.
If a higher-order pushdown automaton of order $n$ accepts
with an empty stack then an accepting computation has no choice but
to pop out level-$n$ stacks one by one. In other words, for every
configuration the level-$n$ return points are easily
predictable. Using this we can eliminate them obtaining an automaton
of order $n-1$. When we allow the collapse operation the situation
changes completely: a configuration may have arbitrary many level-$n$
return points, and different computations may use different return points.

In this paper we prefer to use HORSes rather than higher-order pushdown automata with collapse.
Our solution resembles the one from \cite{AsadaKobayashi:ICALP16},
where a word-generating HORS is turned into a tree-generating HORS of order lower by one,
whose frontier language
(the language of words written from left to right in the leaves)
is exactly the language of the original word-generating HORS.
If our narrow trees were of width one (i.e., word-generating),
we could just invoke \cite{AsadaKobayashi:ICALP16},
since their transformation preserves in particular the cardinality of the produced letters.
While in general we need to handle narrow trees instead of words
(a more general input than in \cite{AsadaKobayashi:ICALP16}),
we only prove that our construction preserves the number of their occurrences
(and not their order, thus having a result weaker than in \cite{AsadaKobayashi:ICALP16}).
While the two results are thus formally incomparable,
it is worth remarking that our construction does actually preserve the order of symbols belonging to the same branch of the narrow tree.
%

After some preliminaries in Section~\ref{sec:preliminaries},
we state formally our main result and some of its consequences in Section~\ref{sec:result}.
The rest of the paper is devoted to the proof.
In Section~\ref{sec:narrowing}, we present a transformation of a scheme to a narrow one that preserves the order,
and in Section~\ref{sec:lowering} we present the reduction of a narrow scheme to a scheme of a smaller order (but not necessarily narrow).
Both reductions preserve the diagonal problem.
Finally, in Section~\ref{sec:conclusions}, we conclude with some further considerations.


\section{Preliminaries}
\label{sec:preliminaries}

\paragraph{Higher-order recursion schemes.}

We use the name ``sort'' instead of ``simple type'' or ``type''
to avoid confusion with the types introduced later.
The set of \emph{sorts} is constructed from a unique basic sort $o$ using a binary operation $\to$.
Thus $o$ is a sort, and if $\alpha,\beta$ are sorts, so is $\alpha\to\beta$.
The order of a sort is defined by: $\ord(o)=0$, and $\ord(\alpha\to\beta)=\max(1+\ord(\alpha),\ord(\beta))$.
By convention, $\to$ associates to the right, i.e., $\alpha\to\beta\to\gamma$ is understood as $\alpha\to(\beta\to\gamma)$.
Every sort $\alpha$ can be uniquely written as $\alpha_1\to\alpha_2\to\ldots\to\alpha_n\to o$.
The sort $o\to\dots\to o\to\alpha$ with $r$ occurrences of $o$ is denoted $o^r\to \alpha$, where $o^0\to \alpha$ is simply $\alpha$.

The set of \emph{terms} is defined inductively as follows.
For each sort $\alpha$ there is a countable set of \emph{variables} $x^\alpha,y^\alpha,\dots$ and a countable set of \emph{nonterminals} $A^\alpha,B^\alpha,\dots$; all of them are terms of sort $\alpha$.
There is also a countable set of \emph{letters} $a,b,\dots$; out of a letter $a$ and a sort $\alpha$ of order at most $1$ one can create a \emph{symbol} $a^\alpha$ that is a term of sort $\alpha$.
Moreover, if $K$ and $L$ are terms of sort $\alpha\to\beta$ and
$\alpha$, respectively, then $(K\,L)^\beta$ is a term of sort $\beta$. 
For $\alpha=(o^r\to o)$ we often shorten $a^\alpha$ to $a^r$, and we call $r$ the \emph{rank} of $a^r$.
Moreover, we omit the sort annotation of variables, nonterminals, or terms, 
but note that each of them is implicitly
assigned a particular sort.
We also omit some parentheses when writing terms and denote $(\dots
(K\,L_1) \dots L_n)$ simply by $K L_1\dots L_n$.
A term is called \emph{closed} if it uses no variables.

We deviate here from usual definitions in the detail that letters itself are unranked, and thus out of a single letter $a$ one may create a symbol $a^r$ for every rank $r$.
This is convenient for us, as during the transformations of HORSes described in Sections \ref{sec:narrowing} and \ref{sec:lowering} we need to change the rank of tree nodes, without changing their labels.
Notice, however, that in terms a letter is used always with a particular rank.

A \emph{higher-order recursion scheme} (HORS for
short) is a pair $\Ss=(A_\mathit{init},\Rr)$, where $A_\mathit{init}$
is the \emph{initial nonterminal} that is of sort $o$, 
and $\Rr$ is a finite set of rules of the form $A^\alpha\,x_1^{\alpha_1}\,\dots\,x_k^{\alpha_k}\to K^o$
where $\alpha = \alpha_1\to\dots\to\alpha_k\to o$
and $K$ is a term that uses only variables from the set $\{x_1^{\alpha_1},\dots,x_k^{\alpha_k}\}$.
The order of $\Ss$ is defined as the highest order of a nonterminal for which there is a rule in $\Ss$.
We write $\Rr(\Ss)$ to denote the set of rules of a HORS
$\Ss$. 
Observe that our schemes are \emph{non-deterministic} in the sense that
$\Rr(\Ss)$ can have many rules with the same  
nonterminal on the left side. A scheme with at most one rule for
each nonterminal is called \emph{deterministic}.

Let us now describe the dynamics of HORSes.
Substitution is defined as expected:
\begin{mathpar}
	A[M/x]=A,\and
	a^r[M/x]=a^r,\and
	x[M/x]=M,\and
	y[M/x]=y\mbox{ if }y\neq x,\and
	(K\,L)[M/x]=K[M/x]\,L[M/x].
\end{mathpar}
We shall use the substitution only when $M$ is closed, so there is no need to perform $\alpha$-conversion.
We also allow simultaneous substitutions: we write
$K[M_1/x_1,\dots,M_k/x_k]$ to denote the simultaneous substitution of
$M_1$, \dots, $M_k$ respectively for $x_1$, \dots, $x_k$.
We notice that when the terms $M_i$ are closed, this amounts to apply
the substitutions $[M_i/x_i]$ (with $i\in\{1,\dots,k\}$) in any order.

A HORS $\Ss$  defines a reduction relation $\to_\Ss$ on closed terms:
\begin{mathpar}
	\inferrule{(A\,x_1\,\dots\,x_k\to K)\in\Rr(\Ss)}{A\,M_1\,\dots\,M_k\to_\Ss K[M_1/x_1,\dots,M_k/x_k]}
	\and
	\inferrule{
			K_l\to_\Ss K_l'\mbox{ for some }l\in\{1,\dots,r\}
		\\
			K_i=K_i'\mbox{ for all }i\neq l
		}{
			a^r\,K_1\,\dots\,K_r\to_\Ss a^r\,K_1'\,\dots\,K_r'
		}
\end{mathpar}
We thus apply some of the rules of $\Ss$ to one of the outermost nonterminals in the term.

We are interested in finite trees generated by HORSes.
A closed term $L$ of sort $o$ is a \emph{tree} if it does not contain any nonterminal.
A HORS $\Ss$ \emph{generates} a tree $L$ from a term $K$ if $K\to_\Ss^* L$;
when we do not mention the term $K$ we mean generating from the
initial nonterminal of $\Ss$. Since a scheme may have more than one
rule for some nonterminals, it may generate more than one tree.
We can view a HORS of order $0$ essentially as a finite tree automaton,
thus a HORS of order $0$ generates a regular language of finite trees.

Let $\Delta$ be a finite set of symbols of rank $0$ (called also \emph{nullary} symbols).
A tree $K$ is \emph{$\Delta$-narrow} if it has exactly $|\Delta|$ leaves, each of them labeled by a different symbol from $\Delta$.
A HORS is called \emph{$\Delta$-narrow} if it generates only $\Delta$-narrow trees, and it is called \emph{narrow} if it is $\Delta$-narrow for some $\Delta$.
We are particularly interested in $\Delta$-narrow HORSes for
$|\Delta|=1$; trees generated by them consist of a single branch
and thus can be seen as words.

\paragraph{Transductions.}

A (bottom-up, nondeterministic) \emph{finite tree transducer} (FTT) is a tuple $\Aa = (Q, Q_F, \delta)$,
where $Q$ is a finite set of control states,
$Q_F \subseteq Q$ is the set of final states,
and $\delta$ is a finite set of transitions of the form
\begin{align*}
	&a^r\, (p_1, x_1)\, \dots\, (p_r, x_r) \goesto {} q, t
		\quad \textrm { or } \\
	&p, x_1 \goesto {} q, t \qquad (\textrm{\it $\varepsilon$-transition})
\end{align*}
where $a$ is a letter, $p, q, p_1, \dots, p_r$ are states,
$x_1,\dots,x_r$ are variables of sort $o$,
and $t$ is a term built of variables from $\set{x_1, \dots, x_k}$ ($\set { x_1}$, respectively) and symbols, but no nonterminals.
An FTT $\Aa$ defines in a natural way a binary relation $\trans \Aa$ on trees \cite{tata2007}.
We say that an FTT is \emph{linear} if no term $t$ on the right of transitions contains more than one occurrence of the same variable.

We show that HORSes are closed under linear transductions.
The construction relies on the reflection operation~\cite{broadbent10:_recur_schem_logic_reflec},
in order to detect unproductive subtrees. 
\begin{restatable}{theorem}{thmtransd}
	\label{thm:HORS:transd}
	HORSes are effectively closed under linear tree transductions.
\end{restatable}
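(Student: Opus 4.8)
The plan is to build, from a HORS $\Ss$ and a linear FTT $\Aa=(Q,Q_F,\delta)$, a new HORS $\Ss'$ with $\lang{\Ss'}=\trans{\Aa}(\lang{\Ss})$, by simulating the bottom-up run of $\Aa$ inside the scheme. The construction proceeds in two stages. In the first stage I use \emph{reflection} to enrich the trees generated by $\Ss$ with information about which states $\Aa$ can assign to each subtree; this is needed to deal with subtrees whose output $\Aa$ discards but which must still admit a valid run. In the second stage I perform a \emph{state-passing} transformation that threads a target state through every ground-sorted position and emits, for each nondeterministic choice of transitions, the corresponding output term, using the nondeterminism of the scheme to realize the nondeterminism of $\Aa$.

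For the first stage, I drop the output components from $\delta$ and read the transitions purely as a nondeterministic bottom-up tree automaton $\Bb$ (with $\varepsilon$-moves) over the input alphabet. For every tree $t$, $\Bb$ determines the set $S(t)\subseteq Q$ of states reachable at the root of $t$. By the reflection theorem of Broadbent et al., I transform $\Ss$ into a scheme generating, for each $t\in\lang{\Ss}$, the same tree with every node annotated by $S$ of its subtree. The point of this annotation is the following: a transition $a^r\,(p_1,x_1)\cdots(p_r,x_r)\goesto{} q,u$ may, by linearity, omit some variable $x_i$ from the output $u$, so child $i$ contributes nothing to the output yet must still carry a valid $\Bb$-run ending in $p_i$; the annotation lets the scheme check $p_i\in S(\text{child }i)$ locally, instead of having to inspect the discarded subtree.

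For the second stage, I fix a sort translation $\sem{o}$ that replaces the base sort by a $Q$-indexed family and lifts homomorphically, $\sem{\alpha\to\beta}=\sem{\alpha}\to\sem{\beta}$; concretely this is realized by taking $|Q|$-indexed copies of every nonterminal and variable, which duplicates but never raises the order, so $\ord(\Ss')=\ord(\Ss)$. A ground term originally generating a tree is now to generate, in its $q$-indexed component, any output that $\Aa$ can produce while ending in state $q$. Where $\Ss$ builds a node $a^r\,K_1\cdots K_r$, the new scheme instead nondeterministically selects a transition $a^r\,(p_1,x_1)\cdots(p_r,x_r)\goesto{} q,u$ together with an $\varepsilon$-closure from $q$ to the demanded target state, and produces $u$ with each retained $x_i$ replaced by the $p_i$-indexed component of $K_i$ and each omitted $x_i$ discarded after checking $p_i\in S(\text{child }i)$ via the reflection annotation; $\varepsilon$-transitions are handled by extra rules renaming the target state. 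The initial nonterminal of $\Ss'$ then demands the $q$-indexed output for a nondeterministically chosen final state $q\in Q_F$.

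Correctness follows by induction over reductions: the $q$-indexed component of a ground term $K$ generates exactly the trees that $\Aa$ outputs while ending in $q$ on some tree generated by $K$. The \textbf{main obstacle} is reconciling discarded subtrees with the scheme formalism. In a HORS every argument must still be reduced, so a discarded child cannot simply be erased; yet its output must not appear in the result, while its productivity and its ability to reach the required state must be guaranteed. Reflection is exactly what resolves this: the $S$-annotation certifies a valid $\Bb$-run on the discarded subtree, so the scheme may legitimately drop its output term rather than substitute it. Linearity is essential throughout, since each $x_i$ occurs at most once in every output $u$; this means no child output is ever duplicated, so the state-passing translation never has to copy a higher-order computation across several states, keeping the construction expressible as a genuine HORS.
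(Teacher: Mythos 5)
Your construction is correct, but it takes a genuinely different route from the paper's. The paper factors every linear FTT into a \emph{complete} FTT (one that discards no subtree) composed with a \emph{restriction} (a single-state relabeling that may drop children): the complete part is handled by a pure state-passing construction via schemes with states (Lemma~\ref{lemma:HORS:transd:complete}), and reflection is confined to the restriction part, where only \emph{productivity} matters --- it is applied with the two-element lattice (empty/non-empty) to prune unproductive branches (Lemma~\ref{lemma:HORS:productive}), after first removing trees using symbols the restriction cannot read (Lemma~\ref{lemma:HORS:restrict-to-symbols}), so that discarding a subterm is sound (Lemma~\ref{lem:HORS:restriction}). You instead do everything in one pass, which forces your reflection to carry more information: not mere non-emptiness, but the set of automaton states reachable at the root of some generatable tree, i.e.\ an interpretation in the finite lattice $\Pp(Q)$ rather than the two-element one. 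This is legitimate --- the reflection machinery works for any finite monotone interpretation and preserves the order --- and your $Q$-indexed copies realizing $\sem{o}$ are the same device the paper uses to convert a scheme with states back into a plain HORS. One point you should state more carefully: for a \emph{non-deterministic} scheme, reflection (applied after the standard determinization with $+$-nodes, a detour you gloss over) does not annotate a node of a specific generated tree $t$ with $S(\text{subtree of } t)$; it annotates it with the union of $S(t')$ over \emph{all} trees $t'$ generatable from that position in the derivation. Fortunately this weaker annotation is exactly what your construction needs, since a discarded child only has to admit \emph{some} generated tree with a run reaching $p_i$, so the argument survives; but the claim as you literally wrote it is not what reflection provides. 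The trade-off between the two proofs: the paper's decomposition is more modular, each lemma is elementary, and the reflection is the simplest possible; yours is more direct, a single product construction, at the price of a richer (though still finite) reflection lattice and a correctness statement that must be phrased with the union semantics in mind.
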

A family of word languages is a \emph{full trio} if it is effectively closed under rational (word) transductions.
Since rational transductions on words are a special case of linear tree transductions,
we obtain the following corollary of Theorem~\ref{thm:HORS:transd}.
\begin{corollary}
	\label{cor:HORS:trio}
	Languages of finite words recognized by HORSes form a full trio.
\end{corollary}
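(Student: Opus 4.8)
The plan is to reduce closure under rational word transductions to Theorem~\ref{thm:HORS:transd}, following the remark preceding the statement, by viewing words as monadic (single-branch) trees. Fix a word alphabet $\Sigma$ and a fresh nullary symbol $e$, and encode a word $w=a_1a_2\cdots a_n\in\Sigma^*$ as the single-branch tree $\widehat{w}=a_1^1\,(a_2^1\,(\cdots(a_n^1\,e)\cdots))$, where every letter $a_i$ is used with rank $1$ and $e$ labels the unique leaf; the empty word maps to $e$. This is a bijection between $\Sigma^*$ and the set of $\{e\}$-narrow trees all of whose internal symbols have rank $1$, and it matches the word-generating ($|\Delta|=1$) schemes of the preliminaries: a word language $L\subseteq\Sigma^*$ is \emph{recognized by a HORS} exactly when the tree language $\{\widehat{w}:w\in L\}$ is generated by a HORS.

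First I would show that every rational transduction $R\subseteq\Sigma^*\times\Gamma^*$ is realized on these encodings by a \emph{linear} FTT. Recall that $R$ is computed by a nondeterministic one-way finite transducer $T$ whose moves read an input letter (or $\varepsilon$) and emit an output word. Since a bottom-up traversal of $\widehat{w}$ visits the letters from the leaf $e$ upward — that is, in the reverse of $T$'s reading order, and moreover stacks the emitted output blocks in reverse order as well — I simulate not $T$ but its edge-reversed transducer $T'$ (reverse all transitions and swap initial and final states, keeping the output labels in place), which is again a one-way finite transducer. Concretely $\Aa_R$ has the control states of $T'$; the leaf rule $e\goesto{} q_0, e$ installs $T'$'s initial state and the end marker; a reading move of $T'$ on letter $a$ from $p$ to $q$ emitting $v=b_1\cdots b_m$ becomes the transition $a^1\,(p,x_1)\goesto{} q,\, b_1^1\,(b_2^1\,(\cdots(b_m^1\,x_1)\cdots))$; output-only ($\varepsilon$-input) moves become $\varepsilon$-transitions; and $T'$'s final states are final in $\Aa_R$. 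Each right-hand side contains $x_1$ exactly once, so $\Aa_R$ is linear, as Theorem~\ref{thm:HORS:transd} requires. The two reversals — of reading order and of block-stacking order — cancel, so that $\trans{\Aa_R}(\widehat{w})=\{\widehat{v}:(w,v)\in R\}$.

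It then remains to assemble the pieces. Given a HORS $\Ss$ recognizing $L$, i.e.\ generating $\{\widehat{w}:w\in L\}$, I apply Theorem~\ref{thm:HORS:transd} to the linear transduction $\Aa_R$ and obtain a HORS $\Ss'$ generating $\trans{\Aa_R}(\{\widehat{w}:w\in L\})=\{\widehat{v}:v\in R(L)\}$. Every tree in this language is again a monadic encoding, so $\Ss'$ is a word-recognizing HORS and it recognizes exactly the word language $R(L)$; moreover the construction of $\Aa_R$ and the application of Theorem~\ref{thm:HORS:transd} are effective. As $R$ ranged over arbitrary rational transductions, the family of word languages recognized by HORSes is effectively closed under rational transductions, hence is a full trio.

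The hard part will be getting the directionality bookkeeping exactly right in the simulation by a \emph{linear, bottom-up} tree transducer. A bottom-up FTT both reads the single branch from the leaf upward and builds its output branch from the leaf upward, so it reverses the order of letters read and the order of output blocks stacked; one must check that simulating the edge-reversed transducer makes these two reversals cancel, that $\varepsilon$-input moves of $T$ translate to FTT $\varepsilon$-transitions without ever duplicating $x_1$ (linearity being exactly the hypothesis of Theorem~\ref{thm:HORS:transd}), and that every output tree is a well-formed monadic encoding — a single chain of rank-$1$ symbols ending in $e$ — so that the resulting tree language indeed decodes to the word language $R(L)$.
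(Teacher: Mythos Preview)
Your proposal is correct and follows the same route as the paper: the paper simply remarks that rational word transductions are a special case of linear tree transductions and invokes Theorem~\ref{thm:HORS:transd}, without spelling out the encoding or the direction-reversal bookkeeping you work through. Your edge-reversal argument is a legitimate way to fill in those elided details.
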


\section{The Main Result}
\label{sec:result}

We formulate the main result and state some of its consequences.

\begin{definition}[Diagonal problem]
  For a higher-order recursion scheme $\Ss$, and a set of letters
  $\S$, the predicate $\Unb_\S(\Ss)$ holds if for every $n\in \Nat$
  there is a tree $t$ generated by $S$ with at least $n$ occurrences
  of every letter from $\S$. The \emph{diagonal problem} for schemes is to
  decide whether $\Unb_\S(\Ss)$ holds for a given scheme $\Ss$ and a set $\S$.
\end{definition}

\begin{theorem}\label{thm:main}
  The diagonal problem for higher-order recursion schemes is decidable.
\end{theorem}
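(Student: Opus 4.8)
The plan is to proceed by induction on the order of the scheme, reducing the diagonal problem for a scheme of order $n$ to the same problem for a scheme of order $n-1$, and to settle the order-$0$ case directly.

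In the base case $\Ss$ has order $0$, so it is essentially a finite tree automaton and generates a regular language of finite trees. For such a language the set of letter-count vectors is effectively semilinear, and $\Unb_\Sigma(\Ss)$ is exactly the question whether this set is simultaneously unbounded in all coordinates indexed by $\Sigma$ (that is, for every $n$ there is a vector whose every $\Sigma$-coordinate is at least $n$). Since a finite union is simultaneously unbounded if and only if one of its linear components is (by a pigeonhole argument on the finitely many components), and a single linear component $b+\langle p_1,\dots,p_k\rangle$ is simultaneously unbounded if and only if for each letter of $\Sigma$ some period vector $p_j$ is positive in the corresponding coordinate, the order-$0$ case is decidable by direct inspection.

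For the inductive step, assume decidability for all schemes of order at most $n-1$ and let $\Ss$ have order $n\geq 1$. I would reduce in two stages, each preserving the predicate $\Unb_\Sigma$. First I narrow the scheme (Section~\ref{sec:narrowing}): I construct a scheme $\Ss'$ of the same order $n$, but $\Delta$-narrow for a suitable finite set $\Delta$ of nullary symbols, with $\Unb_\Sigma(\Ss)$ holding if and only if $\Unb_\Sigma(\Ss')$ holds. This is sound because the diagonal predicate only counts occurrences of the letters of $\Sigma$ and ignores their positions, so one may restrict attention to trees with a bounded number of branches while retaining the relevant counts. Then, working on the narrow scheme $\Ss'$, I lower the order by one (Section~\ref{sec:lowering}): I produce a scheme $\Ss''$ of order $n-1$ (no longer necessarily narrow) with $\Unb_\Sigma(\Ss')$ holding if and only if $\Unb_\Sigma(\Ss'')$ holds. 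Applying the induction hypothesis to $\Ss''$ then decides $\Unb_\Sigma(\Ss)$, closing the induction.

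The main obstacle is the order-reduction step. Following the idea of Asada and Kobayashi, I would turn the highest-order arguments of $\Ss'$ into letters written at the frontier of a tree generated by a scheme of order $n-1$, arranging the construction so that the number of occurrences of each letter is preserved even though its position along the branches need not be. The difficulty, which is invisible for width-one (word-generating) narrow trees where one could simply invoke the existing construction, is to handle genuinely branching narrow trees: each of the boundedly many branches must be accounted for simultaneously, and the bookkeeping tracking how each high-order argument is eventually consumed must be faithful while preserving only the \emph{cardinalities} of $\Sigma$-letters rather than their order. It is exactly the narrowness secured by the first stage---the bound on the number of branches---that keeps this bookkeeping finite and makes the reduction correct.
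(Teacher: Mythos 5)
Your proposal is correct and takes essentially the same route as the paper: induction on the order, with the order-$0$ base case settled directly (the paper's ``direct inspection,'' which your semilinear-image argument merely makes explicit), and the inductive step done by first narrowing the scheme while preserving $\Unb_\Sigma$ (Theorem~\ref{thm:narrowing}) and then lowering the order by one via the construction of Section~\ref{sec:lowering}, whose correctness is exactly Lemmata~\ref{lem:sound} and~\ref{lem:compl}. The only cosmetic difference is your informal gloss on why narrowing is sound (the actual argument loses a logarithmic factor in the counts, which still preserves unboundedness), but since you invoke the narrowing theorem itself, this does not affect correctness.
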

\begin{proof}
  The proof is by induction on the order of a HORS $\Ss$. It relies on
  results from the next two sections. 
  If $\Ss$ has order $0$, then $\Ss$ can be converted to an equivalent
  finite automaton on trees,
  for which the diagonal problem can be solved by direct inspection.
%
For $\Ss$ of order greater than $0$, we first convert $\Ss$ to a narrow
HORS $\Ss'$
such that $\Unb_\S(\Ss)$ holds iff $\Unb_{\S}(\Ss')$ holds 
(Theorem~\ref{thm:narrowing}).
%
Then, 
we employ the construction from
Section~\ref{sec:lowering} and obtain a HORS $\Ss''$ of order smaller
by $1$ than the order of $\Ss'$. By Lemmata~\ref{lem:sound}
and~\ref{lem:compl}:~
$\Unb_{\S}(\Ss')$ holds iff $\Unb_{\S}(\Ss'')$ holds.
\end{proof}

The main theorem allows to solve some other problems for
higher-order schemes. The \emph{downward closure} of a language of words is the set
of its (scattered) subwords. 
Since the
subword relation is a well quasi-order~\cite{Higman:1952}, the downward closure of any
language of words is regular. The main theorem implies that the downward
closure can be computed for HORSes generating languages of finite words,
or, in our terminology, $\set{e^0}$-narrow HORSes, where $e^0$ is a
nullary symbol acting as an end-marker.

\begin{corollary}
  There is an algorithm that given an $\set{e^0}$-narrow HORS $\Ss$
  computes a regular expression for the downward closure of the
  language generated by $\Ss$.
\end{corollary}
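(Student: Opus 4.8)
The plan is to combine the decidability of the diagonal problem (Theorem~\ref{thm:main}) with Zetzsche's characterization of those full trios for which downward closures are computable. First I would observe that a $\set{e^0}$-narrow HORS $\Ss$ recognizes a language of \emph{words}: every generated tree has a single branch terminating in the leaf $e^0$, so reading the node labels from the root spells a finite word, and $\lang{\Ss}$ denotes the resulting word language. By Higman's lemma~\cite{Higman:1952} the downward closure $\lang{\Ss}\down$ is regular; the content of the statement is that a finite automaton, hence a regular expression, for it can be effectively produced.

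The key external ingredient is Zetzsche's theorem~\cite{Zetzsche:ICALP:2015}: for every full trio $\Cc$, the downward closures of languages in $\Cc$ are computable if and only if the diagonal problem (the simultaneous unboundedness problem) is decidable for $\Cc$. We are in a position to apply it, because Corollary~\ref{cor:HORS:trio} provides the full trio of word languages recognized by HORSes, and Theorem~\ref{thm:main} provides decidability of the diagonal problem. Concretely, Zetzsche's algorithm computes $\lang{\Ss}\down$ by querying, for finitely many rational images of $\lang{\Ss}$ of the bounded shape $a_1^*\cdots a_k^*$, whether each letter $a_i$ can be made to occur arbitrarily often. Each such query is precisely an instance $\Unb_\S(\cdot)$ of the diagonal problem, and the required rational image can be produced inside the full trio by the effective closure under rational transductions underlying Corollary~\ref{cor:HORS:trio}.

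So the steps, in order, are: (i) interpret $\Ss$ as a word-language recognizer; (ii) for each bounded approximation required by Zetzsche's procedure, construct, via the rational-transduction closure, a HORS recognizing the corresponding word language; (iii) decide the diagonal problem for that HORS using Theorem~\ref{thm:main}; and (iv) assemble the answers into the finite automaton for $\lang{\Ss}\down$ as prescribed by Zetzsche, and read off a regular expression.

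The main obstacle I expect is one of matching formulations rather than of new mathematics: one must check that the diagonal problem solved in Theorem~\ref{thm:main} (over trees, for an arbitrary letter set $\S$) specializes exactly to the simultaneous-unboundedness queries that Zetzsche's reduction generates (over words, for bounded languages), and that each intermediate language arising in the reduction is again recognized by a HORS obtained effectively. Both points are ensured by the full-trio closure of Corollary~\ref{cor:HORS:trio} together with the fact that a $\set{e^0}$-narrow HORS is a special case of the tree-generating HORSes to which Theorem~\ref{thm:main} applies.
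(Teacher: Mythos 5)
Your proposal is correct and takes essentially the same route as the paper: it combines the full-trio property of HORS word languages (Corollary~\ref{cor:HORS:trio}) with decidability of the diagonal problem (Theorem~\ref{thm:main}) and Zetzsche's result~\cite{Zetzsche:ICALP:2015} to compute the downward closure. The additional detail you supply about how Zetzsche's procedure reduces to simultaneous-unboundedness queries on rational images is a faithful unpacking of what the paper's two-line proof leaves implicit.
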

\begin{proof}
  By Corollary~\ref{cor:HORS:trio},
  word languages generated by schemes are closed under rational transductions.
  In this case,
  Theorem~\ref{thm:main} together with a result of Zetzsche~\cite{Zetzsche:ICALP:2015}
  can be used to compute the downward closure of a language generated by a HORS.
\end{proof}

Piecewise testable languages of words are boolean combinations of languages of
the form $\S^*a_1\S^*a_2\dots\S^*a_k\S^*$ for some
$a_1,\dots,a_k\in\S$.
Such languages talk about possible orders of occurrences of
letters. The problem of separability by piecewise testable languages
asks, for two given languages of words, whether there is a piecewise testable
language of words containing one language and disjoint from the other. A
separating language provides a simple explanation of the disjointness
of the two languages~\cite{hofman_et_al:LIPIcs:2015:4987}.

\begin{corollary}
  There is an algorithm that given two $\set{e^0}$-narrow HORSes
  decides whether there is a piecewise testable language separating the
  languages of the two HORSes.
\end{corollary}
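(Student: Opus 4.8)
The plan is to reduce separability by piecewise testable languages to the diagonal problem, which is decidable by Theorem~\ref{thm:main}, and thereby invoke the general machinery available for full trios. The two input HORSes are $\set{e^0}$-narrow, so by our conventions they generate languages of finite words; by Corollary~\ref{cor:HORS:trio} these languages form a full trio, i.e.\ the class is effectively closed under rational transductions (and, as a consequence, under intersection with regular languages). This is precisely the setting in which the result of Czerwi\'nski, Martens, van Rooijen, and Zeitoun~\cite{CzerwinskiMartensVanRooijenZeitoun:2015} applies: for a full trio, separability by piecewise testable languages is decidable provided the diagonal problem for the class is decidable. The inter-reducibility of the three problems for full trios~\cite{CzerwinskiMartensRooijenZeitounZetzsche} makes this connection explicit.

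Concretely, I would recall the characterization underlying that reduction: two languages $L_1,L_2$ fail to be piecewise-testable separable exactly when, for every bound $k$, both languages realize a common ``deep'' pattern of alternations of length $k$. The decision procedure of~\cite{CzerwinskiMartensVanRooijenZeitoun:2015} turns this condition into finitely many instances of the diagonal problem, where each instance is obtained from $L_1$ and $L_2$ by applying a rational transduction and intersecting with a regular language. Since HORS word languages are a full trio, each such derived language is again generated by an $\set{e^0}$-narrow HORS that can be constructed effectively from the two input schemes.

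The final step is to feed each of these finitely many derived schemes to the algorithm of Theorem~\ref{thm:main}, which decides whether $\Unb_\S$ holds for the appropriate alphabet $\S$, and then to aggregate the answers as prescribed by the reduction of~\cite{CzerwinskiMartensVanRooijenZeitoun:2015} to obtain the verdict on separability. The only point requiring care, and the main obstacle, is to check that every operation used in that reduction stays within the closure properties we have established --- namely that each step is a rational transduction or an intersection with a regular language --- so that the derived languages remain HORS-recognizable and the diagonal oracle is genuinely applicable to them. Given the full-trio closure from Corollary~\ref{cor:HORS:trio}, this is a matter of bookkeeping, and no new constructions on schemes beyond Theorem~\ref{thm:HORS:transd} and Theorem~\ref{thm:main} are needed.
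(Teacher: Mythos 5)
Your proposal is correct and follows essentially the same route as the paper: invoke the full-trio property of HORS word languages (Corollary~\ref{cor:HORS:trio}), apply the result of Czerwi\'nski et al.\ reducing piecewise-testable separability to the diagonal problem for any class effectively closed under rational transductions, and then decide the resulting diagonal-problem instances via Theorem~\ref{thm:main}. The paper treats the reduction as a black box (citing \cite{CzerwinskiMartensRooijenZeitounZetzsche}), whereas you additionally sketch its internal structure; that extra detail is harmless but not needed.
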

\begin{proof}
  This is an immediate consequence of a result of Czerwi\'nski et
  al.~\cite{CzerwinskiMartensRooijenZeitounZetzsche}
  who show that for any class of languages effectively
  closed under rational transductions, the problem reduces to solving the
  diagonal problem.
\end{proof}
The final example concerns deciding reachability in parameterized
asynchronous shared-memory systems~\cite{DBLP:conf/fsttcs/Hague11}. 
In this model one instance of a process, called leader, communicates with
an undetermined number of instances of another process, called
contributor. 
The communication is implemented by common registers on which the
processes can perform read and write operations; however, operations of
the kind of test-and-set are not possible.
The reachability problem asks if for some number of instances of the
contributor the system has a run writing a designated value to a register. 

\begin{corollary}
  The reachability problem for parameterized asynchronous shared-memory
  systems is decidable for systems where leaders and contributors are
  given by $\set{e^0}$-narrow HORSes.
\end{corollary}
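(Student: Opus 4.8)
The plan is to combine the preceding corollary on downward closures with the framework of La Torre, Muscholl, and Walukiewicz~\cite{LaTorreMuschollWalukiewicz:2015}. The key idea of that work is that, in the absence of test-and-set operations and with an unbounded number of contributors, whether the system has a run writing the designated value depends only on the \emph{downward closures} of the languages of action sequences (the reads and writes on the shared registers) produced by the leader and by a single contributor. Intuitively, since contributors communicate only through reads and writes and may be replicated arbitrarily, the set of behaviours available in the limit is captured by the subword-closure of each process's individual behaviour: every action sequence a process can perform, together with all of its subwords, becomes available, and nothing beyond that.

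First I would model each process: a leader or contributor given by a $\set{e^0}$-narrow HORS generates a language of finite words over the alphabet of shared-memory actions, with $e^0$ acting as an end-marker exactly as in the downward-closure corollary. Hence the behaviour of each process is a word language recognized by a word-generating HORS. By that corollary, the downward closures of these languages are effectively computable as regular expressions. Feeding these regular abstractions into the reduction of~\cite{LaTorreMuschollWalukiewicz:2015} replaces the unbounded parameterized system by a finite-state computation, on which reachability is decidable; this yields the claim.

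The main obstacle is not in our construction but at the interface: one must verify that reachability for this class of parameterized systems genuinely depends only on the per-process downward closures, and that word-generating HORSes lie precisely in the class of recognizers to which the reduction of~\cite{LaTorreMuschollWalukiewicz:2015} applies. Once this is confirmed, decidability follows by plugging the effective computation of downward closures provided by the previous corollary into that reduction.
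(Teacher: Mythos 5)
Your proposal founders on its central claim: it is \emph{not} true that reachability of the parameterized system depends only on the downward closures of the leader's and the contributor's languages. Downward-closing the contributor is unsound, because a contributor's read is a genuine constraint --- it can fire only when the register currently holds the required value --- and erasing reads enables writes that the exact system can never perform. Concretely, let the contributor language be $C_1=\set{\mathit{read}(a)\,\mathit{write}(b)}$ and $C_2=C_1\cup\set{\mathit{write}(b)}$, and let the leader simply try to read $b$ and then write the designated value, where no process ever writes $a$. Then $C_1$ and $C_2$ have the same downward closure, yet with $C_1$ the target is unreachable (no contributor can get past $\mathit{read}(a)$, so $b$ is never written), while with $C_2$ it is reachable. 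So no algorithm that looks only at the two downward closures can decide the problem, and the ``finite-state computation'' you envisage at the end cannot exist in the form you describe. The hedge at the end of your proposal (``one must verify that reachability \ldots depends only on the per-process downward closures'') points at exactly the step that fails.

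The reduction of La Torre et al.~\cite{LaTorreMuschollWalukiewicz:2015} is essentially \emph{asymmetric}, and this asymmetry is what the paper's proof exploits. Because contributors can be replicated (the copycat argument), what the contributor pool can supply to the leader is a monotonically growing \emph{set} of writable values --- a persistent resource --- whereas what the leader can supply to the contributors is order-sensitive and is captured via subwords of its behaviour; accordingly, the downward closure is applied (after suitable rational transductions) only on the \emph{leader} side, while the contributor side turns into reachability (emptiness) questions about rational-transduction images of the contributor's language. Hence the two ingredients actually needed are: computability of downward closures for the leader's class, given by the preceding corollary via Theorem~\ref{thm:main}, and the fact that the contributor's class is a full trio with decidable emptiness, given by Corollary~\ref{cor:HORS:trio} together with decidability of emptiness for schemes~\cite{Ong:LICS:2006}. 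Your proposal never invokes this second ingredient; replacing your symmetric ``both downward closures'' step by this asymmetric use of the reduction is precisely what the paper's one-line proof does.
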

\begin{proof}
  La Torre et al.~\cite{LaTorreMuschollWalukiewicz:2015}
  show how to use the downward closure of the
  language of the leader to reduce the reachability problem for a
  parameterized system to the
  reachability problem for the contributor. Being a full trio
  is sufficient for this reduction to work.
\end{proof}


\section{Narrowing the HORS}\label{sec:narrowing}

The first step in our proof of Theorem~\ref{thm:main} is to convert a scheme to a narrow
scheme. The property of being narrow is essential for the second step, as lowering
the order of a scheme works only for narrow schemes. This approach
through narrowing has been used by Hague et
al.~\cite{DBLP:conf/popl/HagueKO16} for
higher-order pushdown automata. Here we deal with recursion schemes,
which are equivalent to higher-order pushdown automata with collapse. 

\ignore{
The idea behind narrowing is quite intuitive.
Consider a tree $t$ having $n$ occurrences of a certain symbol $b$.
We consider the branch in $t$ where, at each step, we go to the subtree having the maximal number of occurrences of $b$.
If $k$ is the maximal rank, at each step we thus retain at least $1/k$ of the total number of remaining $b$'s,
and therefore on the chosen branch we see at least $\log_k(n)$ total occurrences of $b$.
We can check that we are on such a maximal branch by noticing that it satisfies the following property:
Thus, we can effectively linearize a tree into a single branch by preserving unboundedness of a fixed symbol $b$.
When we consider multiple symbols $\S$, the situation is a bit more complicated,
since it might be the case that we need to choose different branches for different symbols.
However, this can be done by generating independently a branch for each symbol in $\S$,
and we thus obtain  narrow trees with at most $|\S|$ branches.

This observation implies that for our purposes it is enough to convert a scheme $\Ss$ generating finite trees
into a scheme $\Ss'$ generating all paths in the trees generated by $\Ss$
with an additional labeling expressing the above property.
Then there will be a set $\S'$ of labels such that $\Unb_{\set{b}}(\Ss)$ is
equivalent to  $\Unb_{\S'}(\Ss')$.
}

\ignore{
	The idea behind narrowing is quite intuitive.
	Consider a binary tree, and suppose that we are interested in the number
	of occurrences of a certain symbol $b$ in it. 
	Consider a path that, at each node, selects the subtree containing the larger number of $b$'s,
	and label the node by $b'$ if either: (i) it has already label $b$,
	or (ii) the successor of the node that is not on the path has a descendant labeled $b$.
	Then, the tree has $n$ occurrences of $b$, if, and only if, we encounter at least $\log n$ labels $b'$ on this path.
	This observation implies that for our purposes it is enough to convert
	a scheme $\Ss$ 
	generating trees to a scheme $\Ss'$ generating all paths in the trees
	generated by $\Ss$ with the additional labeling.
	Then  $\Unb_{\set{b}}(\Ss)$ will be equivalent to  $\Unb_{\set{b'}}(\Ss')$.
	The general situation is a bit more complicated since we are
	interested in the unboundedness problem not just for a single letter,
	but for a set of letters $\S$.
	In this case, different letters may have different witnessing paths,
	so $\Ss'$ should generate not a single path but a narrow tree.
	The number of paths in the narrow tree will be bounded by the size of $\S$.

	The next lemma describes how to create an additional labeling of a
	scheme. 
	It relies on the reflection
	operation~\cite{broadbent10:_recur_schem_logic_reflec}. 
	Later we
	will show how to use the lemma to realize the labeling described above.
	The proof is presented in Appendix~\ref{app:lem:prod-aut}.
	It is rather long but relatively standard.
	%
	%
	\begin{restatable}{lemma}{lemmaprodaut}
	  \label{lem:prod-aut}
		Let $\Ss$ be a HORS, let $\Aa$ be a non-deterministic finite tree automaton (reading trees generated by $\Ss$), and let $Q'$ be a subset of its set of states.
		We can create a HORS $\Ss'$ of the same order as $\Ss$ generating trees obtained from run trees of $\Aa$ on trees generated by $\Ss$ 
		(trees generated by $\Ss$ with labels replaced by states of an accepting run of $\Aa$), 
		by restricting those run trees to nodes labeled by states in $Q'$.
	\end{restatable}
	Using Lemma \ref{lem:prod-aut}, we can implement the above idea of restricting
	trees generated by $\Ss$ to $|\Sigma|$ paths. 
	The resulting HORS will be narrow.


	\begin{corollary}\label{coro:narrowing}
	  For a HORS $\Ss$ and a set of symbols $\S$, one can construct a
	  narrow HORS $\Ss'$ of the same order as $\Ss$, and sets of symbols
	  $\S_1,\dots,\S_k$ such that $\Unb_\S(\Ss)$ holds iff there is
	  $i\in \{1,\dots,k\}$ for which $\Unb_{\S_i}(\Ss')$ holds.
	\end{corollary}

	\todo[inline]{L: we should note that $k$ is exponential in $|\Sigma|$.}

	\begin{proof}
	  First, for a technical reason that will be clear towards the end of
	  the proof, we assume that all trees generated from $\Ss$ have at
	  least $|\S|$ leaves. If it is not the case, we can always add a
	  new initial nonterminal from which we generate a  tree with a root having to
	  one side a fixed tree with $|\S|-1$ leaves, and to the other side a tree
	  generated by the original scheme.

	  For every letter $b\in \S$ consider a finite tree automaton $\Aa_b$
	  that chooses a path in a tree, and on this path assumes a special
	  state $q_b$ in a node if the node is labeled by $b$, or if $b$ appears
	  in the subtree rooted in one of the successors of the node that are
	  not on the chosen path.
	  Let $Q'_b$ be the set of states used by the automaton $\Aa_b$ on the chosen
	  path; in particular, $q_b\in Q'_b$.

	Consider now the product automaton $\Aa$ constructed from all $\set{\Aa_b}_{b\in \S}$.
	States of $\Aa$ are tuples of states, and the transitions are done
	independently on each coordinate.  
	This product chooses $|\S|$ paths in a tree, in the sense that given a
	run of $\Aa$ its restriction to nodes labeled with states  where
	at least one component is in $\bigcup\set{Q'_b : b\in \S}$ is a 
	tree with at most $|\Sigma|$ leaves.

	Suppose that every symbol from $\S$ appears at least $n$ times in a tree $t$.
	In this case $\Aa$ has a run where for every $b\in\S$ the state $q_b$ appears
	at least $\log_r n$ times, where $r$ is the maximal rank of symbols used by $\Ss$.
	States of $\Aa$ are tuples of states, and $q_b$ may appear in
	different tuples, but then there is a tuple appearing in the run at
	least $(\log_r n)/|Q|$ times where $Q$ is the set of states of $\Aa$.
	Thus we have a function $\sigma:\S\to Q$ such that $\Aa$ has a run on
	$t$ where each state $\sigma(b)$, for $b\in\S$, appears at least
	$(\log_r n)/|Q|$ times in the tree.  
	We call such a function $\sigma$, \emph{$n$-correct} for $t$.  
	Observe that for every $b$, the state $\sigma(b)$ is a tuple
	containing the state $q_b$. 
	A function with this property is called \emph{choice function}.
	To summarize, we have shown that  $\Unb_\S(\Ss)$ holds iff there is a choice function
	$\sigma$ such that for arbitrary $n\in \Nat$ there is a $t$ generated
	by $\Ss$ and a run of $\Aa$ on $t$ for which $\sigma$ is $n$-correct.

	The last step in the reduction is to take care of the leaves of runs
	of $\Aa$. The definition of a narrow scheme requires that there should be
	an alphabet $\S'_0$ of nullary symbols such that every symbol from
	$\S'_0$ appears at precisely one leaf. 
	When we look at a run of $\Aa$, every path chosen by the automaton
	ends in a different state.
	It may happen, though, that there are less than $|\S|$ paths in the chosen
	restriction, since the paths for some two letters may turn out to
	be the same. 
	As we have assumed that every tree generated by $\Ss$ has at least
	$|\S|$ leaves, we can make automaton $\Aa$ choose some additional
	dummy paths if needed. 
	Moreover, we can make $\Aa$ guess the positions of the leaves, and finish its
	run in the $i$-th leaf of the tree, counting from left to right, in the special
	state $q^e_i$. Let $\Aa'$ be the automaton obtained from $\Aa$ after
	these modifications, and let $\S'_0=\set{q^e_i \mid i\in\{1,\dots,|\S|\}}$. 

	Take $\Ss'$ to be the scheme obtained by Lemma~\ref{lem:prod-aut} from
	$\Ss$, $\Aa'$, and the set $Q'$ consisting of $\S'_0$ and all the
	states of $\Aa'$ having at least
	one state in $\bigcup\set{Q'_b : b\in \S}\cup\S'_0$. 
	Thus, $\Ss'$ generates narrow trees where leaves are uniquely labeled
	by symbols from $\S'_0$. We have: $\Unb_\S(\Ss)$ holds iff there
	is some choice function $\sigma:\S\to Q$ such that
	$\Unb_{\set{\sigma(b) :b\in \S}}(\Ss')$ holds.
	\end{proof}
}

The idea behind narrowing is quite intuitive.
Consider a binary tree, and suppose that we are interested in the number
of occurrences of a certain letter $a$, that may appear only in leaves. 
Consider a path that, at each node, selects the subtree containing the larger number of $a$'s,
and let's label the node by $a$ if the successor of the node that is not on the path has an $a$-labeled descendant.
Then, if the original tree had $n$ occurrences of $a$, then on the selected path we put between $\log n$ and $n$ labels $a$.
The lower bound holds since, whenever a subtree is selected,
at most half of the $a$'s is discarded (on the other subtree),
and this happens a number of times equal to the number of $a$'s on the resulting path.
This observation implies it suffices to convert a scheme $\Ss$ generating trees
to a scheme $\Ss'$ generating all paths (words) in the trees generated by $\Ss$ with the additional labeling.
Then  $\Unb_{\set{a}}(\Ss)$ will be equivalent to $\Unb_{\set{a}}(\Ss')$.

The general situation is a bit more complicated
since we are interested in the diagonal problem not just for a single letter,
but for a set of letters $\S$.
In this case, different letters may have different witnessing paths,
so $\Ss'$ should generate not a single path but a narrow tree
whose number of paths is bounded by $|\S|$.

\begin{theorem}
	\label{thm:narrowing}
    For a HORS $\Ss$ and a set of letters $\S$,
	one can construct a set of nullary symbols $\Delta$ of size $|\S|$
	and a $\Delta$-\emph{narrow} HORS $\Ss'$ of the same order as $\Ss$,
	such that $\Unb_\S(\Ss)$ holds if, and only if, $\Unb_{\S}(\Ss')$ holds.
\end{theorem}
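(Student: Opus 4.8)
The plan is to realise the passage from $\Ss$ to $\Ss'$ as a single \emph{linear tree transduction} and then appeal to Theorem~\ref{thm:HORS:transd} to obtain a HORS of the same order generating the image. Concretely, I would design a bottom-up nondeterministic finite tree transducer $\Aa$ that reads a tree $t$ generated by $\Ss$ and outputs the $\Delta$-narrow tree sketched above: for every letter $b\in\S$ it guesses a root-to-leaf path and, at each node on that path, it emits a copy of $b$ exactly when $b$ is discarded there, i.e.\ when the node itself carries the label $b$, or when $b$ occurs in the subtree hanging off one of the \emph{other} children. The union of these $|\S|$ guessed paths is a subtree $S$ of $t$ with at most $|\S|$ leaves; the transducer outputs $S$, relabelling every branching node by a fresh symbol $\star\notin\S$ and inserting the emitted copies of $b$ as unary nodes along the corresponding branch.

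For the transducer to be finite-state it suffices to carry, as its bottom-up state, the set of letters of $\S$ occurring in the subtree read so far (a subset of $\S$); this is exactly the information needed to decide, at each node, to which child each path may continue (only to a child whose subtree still contains $b$) and whether a copy of $b$ must be emitted. Crucially, the transduction $\trans\Aa$ is \emph{linear}: only the outputs of the children lying on some guessed path are reused, each exactly once, while the off-path children contribute solely through their state-level occurrence information. To meet the definition of a $\Delta$-narrow HORS I would first, as a harmless preprocessing step, arrange that every tree generated by $\Ss$ has at least $|\S|$ leaves (prepending a fixed context labelled only by symbols outside $\S$), and then let $\Aa$ guess enough additional dummy paths so that $S$ has exactly $|\S|$ leaves, which it labels bijectively by the fresh nullary symbols in $\Delta=\{d_1,\dots,d_{|\S|}\}$.

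Correctness rests on two elementary counting bounds relating, for each $b\in\S$, the number $n_b$ of occurrences of $b$ in $t$ to the number $E_b$ of copies of $b$ emitted along its path. On one hand, along any branch the emitted copies are charged to pairwise disjoint witnesses (the node itself, or a discarded sibling subtree), so $E_b\le n_b$. On the other hand, if each path always descends into the child holding the largest number of remaining $b$'s, then every emission decreases that number by at most a factor of $r$, the maximal rank, whence $n_b\le r^{E_b+O(1)}$ and so $E_b\ge\log_r n_b-O(1)$. The first inequality gives the implication $\Unb_\S(\Ss')\Rightarrow\Unb_\S(\Ss)$: from a tree of $\Ss'$ in which every $b$ occurs at least $m$ times we recover, in its preimage $t$, at least $m$ occurrences of every $b$. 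The second inequality gives the converse $\Unb_\S(\Ss)\Rightarrow\Unb_\S(\Ss')$: given, for each $n$, a tree $t$ witnessing $\Unb_\S(\Ss)$, the transducer can follow the maximising paths and output a narrow tree in which every $b$ occurs at least $\log_r n-O(1)$ times, and letting $n\to\infty$ gives $\Unb_\S(\Ss')$.

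The main obstacle is keeping the construction \emph{linear} while simultaneously witnessing \emph{all} letters of $\S$: a naive solution that copies $t$ once per letter and narrows each copy independently would produce $|\S|$ independent paths but would no longer be a linear transduction, and hence could not be fed to Theorem~\ref{thm:HORS:transd} without risking an increase in order. Routing all paths through the single shared subtree $S$ is what restores linearity, at the cost of the mild bookkeeping above: nodes shared by several paths, paths that merge, and the padding needed to reach exactly $|\S|$ distinct $\Delta$-labelled leaves. The remaining point to check is that the reflection-based construction underlying Theorem~\ref{thm:HORS:transd} preserves the order of $\Ss$ (so that $\Ss'$ has the same order, as required by the induction in Theorem~\ref{thm:main}), which holds because it annotates terms with finitely much control information without altering their sorts.
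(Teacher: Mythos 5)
Your proposal is correct and takes essentially the same approach as the paper: the paper also builds a nondeterministic \emph{linear} bottom-up tree transducer that selects one branch per letter of $\Sigma$, emits a unary copy of each letter occurring at least once in a discarded sibling subtree (tracked by finite states recording subsets of $\Sigma$, alongside states recording which symbols of $\Delta$ have been output), and then invokes Theorem~\ref{thm:HORS:transd}, with the same two counting bounds (at most $n$ and at least $\log n$ emissions) establishing the equivalence of the diagonal problems. The only cosmetic differences are that the paper first normalizes $\Ss$ to use binary symbols with $\Sigma$-letters confined to leaves, and it meets the exactly-$|\Delta|$-leaves requirement by letting a single input leaf output a bouquet of $\Delta$-symbols rather than by your padding-and-dummy-paths preprocessing.
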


\begin{proof}
	We start by assuming that $\Ss$ uses only symbols of rank $2$
        and $0$, where additionally letters from $\Sigma$ appear only
        in leaves.
	The general situation can be easily reduced to this one, by applying a tree transduction that replaces every node by a small fragment of a tree built of binary symbols, with the original label in a leaf.
	
	Then, we consider a linear bottom-up transducer $\Aa$ from trees produced by $\Ss$
	to narrow trees.
	As labels in the resulting trees we use:%
	\begin{inparaenum}[(i)]
	\item	new leaf symbols $\Delta = \set{e_1^0, \dots, e_{|\S|}^0}$,
	\item	unary symbols $a^1$ for all $a\in\Sigma$, and
	\item	new auxiliary symbols $\bullet^k$ (of rank $k\geq 1$).
	\end{inparaenum} 
	For each set of letters $\Gamma \subseteq \Sigma$,
	$\Aa$ contains a state $p_\Gamma^?$ making sure that each letter from $\Gamma$ occurs at least once in the input tree.
	%
	Moreover, for each nonempty set of leaf labels $\Delta' \subseteq \Delta$,
	$\Aa$ contains a state $p_{\Delta'}$ that outputs only $\Delta'$-narrow trees.
	The final state of $\Aa$  is $p_\Delta$.
	Transitions are as follows:
	\begin{align*}
		\textrm{(Branch)} && a^2 \, (p_{\Delta_1}, x_1) \, (p_{\Delta_2}, x_2)
			&\goesto {} p_{\Delta_1 \cup \Delta_2}, \bullet^2 \, x_1\, x_2\,, \\
		\textrm{(Leaf)} && &\hspace{-3em}a^0
			\goesto {} p_{\set {e_{i_1},\dots,e_{i_k}}}, \bullet^k\,e_{i_1}\,\dots\,e_{i_k}\,, \\
		\textrm{(Choose${}_1$)} && a^2 \, (p_{\Delta_1}, x_1) \, (p_{\Gamma}^?, x_2)
			&\goesto {} p_{\Delta_1}, a_1^1 (\cdots (a_k^1 \, x_1))\,, \\
		\textrm{(Choose${}_2$)} && a^2 \, (p_{\Gamma}^?, x_1) \, (p_{\Delta_2}, x_2)
			&\goesto {} p_{\Delta_2}, a_1^1 (\cdots (a_k^1 \, x_2))\,.
	\end{align*}
	where $\Delta_1$ and $\Delta_2$ are \emph{disjoint} subsets of $\Delta$, where $i_1<\dots<i_k$, 
	and where $ \Gamma = \set{a_1, \dots, a_k}\subseteq\Sigma$.
	%
	%
	Intuitively, rules of types (Branch) and (Leaf) make sure that we output narrow trees,
	and rules of types (Choose${}_i$) select a branch
	and output (only) letters that appear at least once in the discarded subtree.
	States $p_\Gamma^?$ check that each letter in $\Gamma$ occurs at least once, as follows:
	\begin{align*}
		\textrm{(Check${}_2$)} && a^2 \, (p_{\Gamma_1}^?, x_1) \, (p_{\Gamma_2}^?, x_2)
			&\goesto {} p_{\Gamma_1 \cup \Gamma_2}^?, e_1^0 \\
		\textrm{(Check${}_0$)} && a^0
			&\goesto {} p_{\set a}^?, e_1^0
	\end{align*}
	The set $\trans{p_\Gamma^?}(\set t)$ is either a single leaf or $\emptyset$,
	depending on whether $t$ satisfies the condition or not.
	The choice of $e_1^0$ on the right side of the transitions is not important,
	since, in the way states $p_\Gamma^?$ are used, 
	it only matters whether the input can be successfully parsed,
	and not what the output actually is.

	It is clear that the image of state $p_{\Delta'}$ is always a language of $\Delta'$-narrow trees.
	Correctness follows from the following claim.
	\begin{claim}
		Let $t$ be an input tree.
		Then,%
		\begin{inparaenum}[(i)]
		\item	if $t$ has at least $n$ occurrences of every letter $a \in \Sigma$, then
			$\trans{\Aa}(t)$ contains a tree with at least $\log n$ occurrences of every letter $a \in \Sigma$, and
		\item	if $\trans{\Aa}(t)$ contains a tree with at least $n$ occurrences of every letter $a \in \Sigma$, then
			$t$ has at least $n$ occurrences of every letter $a \in \Sigma$.
		\end{inparaenum}
	\end{claim}
	%
	%
	\ignore{
	\begin{proof}[Proof of the claim]
		Suppose that every letter $a$ from $\S$ appears at least $n$ times in the tree $t$.
		In this case, by selecting at each step the branch containing the larger number of $a$'s,
		we see that $a$ is output at least $\log n$ times by $\Aa$.
		The other direction is immediate, since every time $\Aa$ outputs a letter $a$,
		an occurrence of this letter appears somewhere in the input tree $t$.
	\end{proof}
	}
	To conclude the proof, let $T$ be the transduction $\trans \Aa$ realized by $\Aa$.
	By Theorem~\ref{thm:HORS:transd},
	there exists a HORS $\Ss'$ of the same order as $\Ss$ with $\lang {\Ss'} = T(\lang \Ss)$.
	First, it is clear that $\lang {\Ss'}$ is a language of $\Delta$-narrow trees.
	Second, thanks to the claim above, $\Unb_\S(\Ss)$ holds if, and only if, $\Unb_\S(\Ss')$ holds.
\end{proof}


\section{Lowering the Order}\label{sec:lowering}

Let $\Ss$ be a $\Delta$-narrow HORS of order $k \geq 1$,
and let $\Sigma$ be a finite set of letters.
The goal of this section is to construct a HORS $\Ss'$ of order $k-1$
s.t. $\Unb_\Sigma(\Ss)$ holds if and only if $\Unb_\Sigma(\Ss')$ holds.

Let $\bullet$ be a fresh letter, not used in $\Ss$, and not in $\Sigma$.
We will use it to label auxiliary nodes of trees generated by $\Ss'$.
We say that two trees $K_1$, $K_2$ are \emph{equivalent} if, for each letter $a\neq\bullet$,
they have the same number of occurrences of $a$.
The resulting HORS $\Ss'$ will have the property that for every tree generated by $\Ss$ there exists an equivalent tree generated by $\Ss'$, and for every tree generated by $\Ss'$ there exists an equivalent tree generated by $\Ss$.
Then surely $\Unb_\Sigma(\Ss)$ holds if and only if $\Unb_{\Sigma}(\Ss')$ holds.

Let us explain the idea of lowering the order of a scheme on two simple
examples. Consider the following transformation on sorts that removes arguments of
sort $o$:
\begin{equation*}
o\down =o,\quad \text{and}\quad 
(\beta\to \gamma)\down=
\begin{cases}
\gamma\down & \text{if $\beta=o$,}\\
(\beta\down) \to (\gamma\down) & \text{otherwise.}
\end{cases}
\end{equation*}
We have that the order of $\alpha\down$ is $\max(0,\ord(\alpha)-1)$. 

Very roughly our construction will take a scheme and produce a scheme
of a lower order by changing every nonterminal of sort $\alpha$ to a
nonterminal of sort $\alpha\down$.
This is achieved by outputting immediately arguments of sort $o$ instead of passing them to nonterminals.

\begin{example}\upshape
	Consider the scheme
	\begin{equation*}
	  S\to F\,e^0,\qquad F\,x\to x, \qquad F\,x\to F\,(b^1\,x)\,.
	\end{equation*}
	This scheme generates words of the form $(b^1)^ne^0$. It can be
	transformed to an equivalent scheme:
	\begin{equation*}
	  S'\to\stree{F'}{e^0}
	  \qquad F'\to \bullet^0\qquad F'\to\stree{F'}{b^0}
	\end{equation*}
	where we have used a graphical notation for terms; in standard notation
	the first rule would be $S'\to
	\bullet^2\,F'\,e^0$. 
	Now both $b$ and $e$ are used with rank $0$;
	we have also used auxiliary symbols $\bullet^2$ and $\bullet^0$.
	Observe that the new scheme has smaller order as the sorts of $S'$ and
	$F'$ are $o$.  
	The new scheme is equivalent to the initial one since a derivation of
	$(b^1)^ne^0$ can be matched by the derivation of a tree with one $e^0$ and
	$b^0$ appearing $n$ times:
	%
	\begin{center}
	  \begin{tikzpicture}[level distance=3ex]
	    \node{$\bullet^2$} 
	    child{node{$\bullet^2$} 
	      child{node{$\bullet^2$}
	        child[dotted,level distance=3ex]{
	          node{$\bullet^2$}
	          child[solid]{node{$\bullet^0$}} 
	          child[solid]{node(b){$b^0$}}}
	        child{node{$b^0$}}} 
	      child{node(a){$b^0$}}} child{node{$e^0$}};

	     \draw[decorate,decoration={brace,raise=7pt,amplitude=6pt}] 
	  (a.center) --node[midway, auto, outer sep=7pt]{$n$}
	  (b.center);
	  \end{tikzpicture}
	\end{center}
\end{example}
\begin{example}
	\label{ex:two}\upshape
	Let us now look at a more complicated example. This time we take the following
	scheme of order $2$:
	\begin{align*}
	  &S\to F\, b^1\, e^0,\quad\ F\,g\,x\to g\,x,\quad\ F\,g\,x\to a^1\,(F\,(B\,g)\,(c^1\,x)),\\
	 &B\,g\,x\to b^1\,(g\,x)\,.
	\end{align*}
	Here $g$ has sort $o\to o$, and $x$ has sort $o$.
	This scheme generates words of the form $(a^1)^n(b^1)^{n+1}(c^1)^ne^0$. We transform it
	into a scheme of order $1$: 
	\begin{align*}
	  &S'\to\stree{F'\,b^0}{e^0}\quad &&F'\,g'\to g'\\
	&F'\,g'\to \sstree{\node{$\bullet^3$} child{node{$a^0$}} child{node{$F'\,(B'\,g')$}}
	  child{node{$c^0$}};} &&B'\,g'\to\stree{b^0}{g'}
	\end{align*}
	The latter scheme generates trees of the form:

	  \begin{tikzpicture}[level distance=3.8ex, sibling distance = 8ex]
	    \node{$\bullet^2$}%
	    child{node{$\bullet^3$}%
	      child{node{$a^0$}}%
	      child[dotted,level distance = 7ex]{%
	        node{$\bullet^3$}%
	        child[solid,level distance=3ex]{node{$a^0$}}%
	        child[solid,level distance=3ex]{node{$t_b^{n}$}}%
	        child[solid,level distance=3ex]{node(b){$c^0$}%
	        }}
	    child{node(a){$c^0$}}}%
	  child{node{$e^0$}};%
	  \draw[decorate,decoration={brace,raise=7pt,amplitude=6pt}] 
	  (a.center) --node[midway, auto, outer sep=12pt]{$n$}
	  (b.center);
	  \end{tikzpicture}
	  \begin{tikzpicture}[level distance = 3ex]
	    \node at (-1.5,0){$t_b^{n} = $};

	    \node (a) {$\bullet^2$}%
	    child{node{$b^0$}}%
	    child{node{$\bullet^2$}%
	      child{node{$b^0$}}%
	      child[dotted]{
	        node (b) {$\bullet^2$}%
	        child[solid]{node{$b^0$}}%
	        child[solid]{node{$b^0$}}%
	      }
	    }%
	    ;%
	      \draw[decorate,decoration={brace,raise=7pt,amplitude=6pt}] 
	  (a.center) --node[midway, auto, outer sep=9pt]{$n$}
	  (b.center);
	  \end{tikzpicture}
\end{example}

The intuition behind the above two examples is as follows. Consider
some closed term $K$ of sort $o$, and its subterm $L$ of sort $o$. In
a tree generated by $K$, the term $L$ will be used to generate some
subtrees. Take a tree where $L$ generates exactly $k$ subtrees. Then
we can create a new term starting with a symbol $\bullet^{k+1}$:
in the first subtree we put $K$ with $L$ replaced by $\bullet^0$,
and in the $k$ remaining subtrees we put $L$. From this new term
we can generate a tree similar to the initial one: the subtrees
generated by $L$ are moved closer to the root, but the multisets of
letters appearing in the tree do not change. We do this with every
subterm of sort $o$ on the right hand side of every rule of $\Ss$.  In
the obtained system, whenever an argument has sort $o$ then it is
$\bullet^0$. Because of this, we can just drop arguments of sort $o$. This
is what our translation $\alpha\down$ on sorts does, and this is what
happens in the two examples above.
Since the original schemes from the two examples
generated words, and all arguments were eventually used to generate a subword, for
every subterm of sort $o$ the multiplication factor $k$ was always $1$.

The crucial part of this argument was the information on the number of
times $L$ will be used in $K$. This is the main technical problem we
need to address. We propose a special type system for tracking the use
of closures of sort $o$. It will non-deterministically guess the number
of usages, and then enforce derivations that conform to this
guess. The reason why such a \emph{finite} type system can exist is that $\Ss$ is
$\S_0$-narrow, which, in turn, implies that $L$ can be used to generate at most
$|\S_0|$ subtrees of a tree.

\looseness=-1
In the sequel we assume w.l.o.g.~that in $\Ss$ the only rule from the initial nonterminal is $A_\mathit{init}\to A\,e_1^0\,\dots\,e_{|\Delta|}^0$ (for some nonterminal $A$) where $\Delta=\{e_1^0,\dots,e_{|\Delta|}^0\}$, 
and no other rule uses a nullary symbol nor the initial nonterminal $A_\mathit{init}$.
To ensure this condition, we perform the following simple transformation of the HORS.
Every rule $B\,x_1\,\dots\,x_k\to K$ in $\Rr(\Ss)$ is replaced by $B\,y_1\,\dots\,y_{|\Delta|}\,x_1\,\dots\,x_k\to K'$, where $K'$ is obtained by replacing in $K$ every use of a symbol $e_i^0\in\Delta$ by $y_i$, 
and every use of a nullary symbol not being in $\Delta$ by an arbitrary $y_i$ (this symbol anyway does not appear in any tree generated by $\Ss$),
and every use of a nonterminal $C$ by $C\,y_1\,\dots\,y_{|\Delta|}$ (the sort of every nonterminal is changed from $\alpha$ to $o^{|\Delta|}\to\alpha$).
Additionally a new rule $A_\mathit{init}\to A\,e_1^0\,\dots\,e_{|\Delta|}^0$ is added, where $A_\mathit{init}$ is a fresh nonterminal that becomes initial, and $A$ is the nonterminal that was initial previously.
It is easy to see that this transformation does not change the set of
generated trees.
It also does not increase the order,
since in this section we assume that $\Ss$ has order at least $1$.

\begin{figure*}
  \centering
  \begin{mathpar}
    \inferrule{ }{\Gamma,x:\lambda\vdash x:\lambda} \and \inferrule{
    }{\Gamma\vdash A:(\emptyset,\tau)} \and \inferrule{ }{\Gamma\vdash
      a^0:(\{a^0\},\r)} \and \inferrule{r\geq 1}{\Gamma\vdash a^r
      :(\emptyset,\{(S_1,\r)\}\to\dots\to\{(S_r,\r)\}\to\r)} \and
    \inferrule{
      \begin{array}{l}
        \Gamma\vdash L:(S_0,\{(S_1,\tau_1),\dots,(S_k,\tau_k)\}\to\tau)\qquad
        \Gamma\vdash M:(S_i,\tau_i)\mbox{ for each }i\in\{1,\dots,k\}
       \end{array}
    }{ \Gamma\vdash L\,M:(S_0\cup S_1\cup\dots\cup S_k,\tau) }
\quad \mbox{ provided that }
        S_0\cap(S_1\cup\dots\cup S_k)=\emptyset
  \end{mathpar}
  \caption{Type system for tracing nullary symbols in a term}
  \label{fig:type}
\end{figure*}

\subsection{Type System}

We now present a type system whose main purpose is to track 
nullary symbols that eventually will end as leaves of a generated
tree. The type of a term will say which nullary symbols are already present in
the term and which will come from each of its arguments. 


For every sort $\alpha=(\alpha_1\to\dots\to\alpha_k\to o)$ we define the set $\Tt^\alpha$ of \emph{types} of sort $\alpha$
and the set $\LT^\alpha$ of \emph{labeled types} of sort $\alpha$
by induction on $\alpha$.
Labeled types in $\LT^\alpha$ are just pairs $(S,\tau)\in\Pp(\Delta)\times\Tt^\alpha$, where if $\alpha=o$ we require that $S\neq\emptyset$.
The support of a set $\Lambda$ of labeled types is the subset $\Lambda^{\neq\emptyset}$
of its elements $(S, \tau) \in \Lambda$ with $S \neq \emptyset$.
A set of labeled types $\Lambda$ is \emph{separated}
if there are no two distinct $(S, \tau)$ and $(S', \tau')$ in $\Lambda$
s.t. $S \cap S' \neq \emptyset$.
Types in $\Tt^\alpha$ are of the form $\Lambda_1\to\dots\to\Lambda_k\to\r$,
where $\r$ is a distinguished type corresponding to sort $o$,
$\Lambda_i$ is a subset of $\LT^{\alpha_i}$ for each $i\in\{1,\dots,k\}$
s.t. $\{\Lambda_1^{\neq\emptyset}, \dots, \Lambda_k^{\neq\emptyset}\}$ are pairwise disjoint
and $\Lambda_1 \cup \cdots \cup \Lambda_k$ is separated.
%
%
%
Let us emphasize that $\Lambda_i$ for $\alpha_i=o$ can only contain pairs $(S,\tau)$ with $S\neq\emptyset$.
We fix some (arbitrary) order $<$ on elements of $\LT^\alpha$ for every sort $\alpha$.


Types do not describe all the possible trees generated by a term,
but rather restrict the generating power of a term. 
Intuitively, a labeled type $(S_0,\r)$ assigned to a closed term of sort $o$ says
that we are interested in generating trees that are $S_0$-narrow.
A functional type $(S_0,\Lambda\to\tau)$ says that the term becomes of type
$(S,\tau)$ when taking an argument that will be used only with labeled types from $\Lambda$.
Here, $S$ equals $S_0$ plus the symbols $S_1 \cup \cdots \cup S_k$
generated by an argument of type $\Lambda = \set{(S_1,\tau_1),\dots,(S_k,\tau_k)}$.
%


A \emph{type environment} $\Gamma$ is a set of bindings of variables of the form $x^\alpha:\lambda$, where $\lambda\in\LT^\alpha$;
we may have multiple bindings $x^\alpha:\lambda_1, \dots, x^\alpha:\lambda_n$ for the same variable
(which we also abbreviate as $x^\alpha:\set{\lambda_1, \dots, \lambda_n}$),
however $\set{\lambda_1, \dots, \lambda_n}$ must be separated in the sense above.
A \emph{type judgment} is of the form $\Gamma\vdash
M^\alpha:\lambda$, where again $\lambda\in\LT^\alpha$. 

The rules of the type system are given in Figure~\ref{fig:type}.
A \emph{derivation} is a tree whose nodes are labeled by
type judgments constructed according to the rules of the type system  
(we draw a parent below its children,
unlikely the usual convention for trees).
For the proof it will be convenient to assume that a derivation is an ordered tree:
in the application rule the premise with $L$ is the first sibling
followed by the premises with $M$ ordered using our fixed ordering on $(S_i,\tau_i)$, without repetitions.
We say that $D$ is a \emph{derivation for} $\Gamma\vdash M:\lambda$, or that $D$ \emph{derives} $\Gamma\vdash M:\lambda$, if this type judgment labels the root of $D$.
All the nodes of derivations are required to be labeled by valid type judgments,
thus all the restrictions on types from the definition of $\Tt^\alpha$ stay in force;
in particular, in the application rule for $L M$, the sets $S_1, \dots, S_k$ are disjoint.

\subsection{Transformation}

Once we have the type system, we can show how the HORS $\Ss$ is transformed into the HORS $\Ss'$.

A term of type $\tau$ will be transformed into a term of sort $\tr(\tau)$.
This sort is defined by induction on the structure of $\tau$, as follows:
\begin{itemize}
	\item	$\tr(\r)=o$, and
	\item	if $\tau=(\Lambda\to\tau')\in\Tt^{\alpha\to\beta}$ with $\Lambda \! = \! \set{(S_1,\tau_1)\!<\!\dots\!<\!(S_k,\tau_k)}$,
	then	we have
		\begin{align*}
			&\tr(\tau)=\left\{\begin{array}{ll}
				\tr(\tau_1)\to\dots\to\tr(\tau_k)\to\tr(\tau')&\mbox{if }\alpha\neq o,\\
				\tr(\tau')&\mbox{if }\alpha=o.
			\end{array}\right.
		\end{align*}
\end{itemize}
We see that if $\tau\in\Tt^\alpha$, then
$\ord(\tr(\tau))=\max(0,\ord(\alpha)-1)$.
This translation is a refined version of the translation
$\alpha\down$ on sorts that we have seen earlier in the examples.

The nonterminals of $\Ss'$ will be the nonterminals of $\Ss$ labeled
with types.
For every nonterminal $A$ from $\Ss$, of some sort $\alpha$, and for every $\tau$ such that $(\emptyset,\tau)\in\LT^\alpha$, in $\Ss'$ we consider a nonterminal $A\restr_\tau$ of sort $\tr(\tau)$.
Moreover, for every variable $x$ used in $\Ss$, being of some sort $\alpha\neq o$, and for every $\lambda=(S,\tau)\in\LT^\alpha$, in $\Ss'$ we consider a variable $x\restr_\lambda$ of sort $\tr(\tau)$.

Before defining the rules of $\Ss'$, we need to explain how to
transform terms to match the transformation on types.
This transformation is guided by derivations.
We define a term $\tr(D)$, where $D$ is a derivation for $\Gamma\vdash K:\lambda$, as follows:
\begin{itemize}
	\item	If $K=a^r$ is a symbol, then $\tr(D)=a^0$.
	\item	If $K=x^\alpha$ is a variable, then $\tr(D)=\bullet^0$ if $\alpha=o$, and $\tr(D)=x\restr_\lambda$ otherwise.
	\item	If  $K=A$ is a nonterminal, then $\tr(D)=A\restr_\tau$
          provided that $\lambda=(\emptyset,\tau)$.
	\item	Suppose that $K=L\,M$ is an application.
		Then in $D$ we have a subtree $D_0$ deriving $\Gamma\vdash L:(S_0,\Lambda\to\tau)$,
		where $\Lambda = \set {\lambda_1<\dots<\lambda_k}$,
		and for each $i\in\{1,\dots,k\}$ a subtree $D_i$ deriving $\Gamma\vdash M:\lambda_i$.
		If the sort of $M$ is $o$, then we take $\tr(D)=\tr(D_0)$;
		otherwise, $\tr(D)=\tr(D_0)\,\tr(D_1)\,\dots\,\tr(D_k)$.
\end{itemize}
We notice that for $\lambda=(S,\tau)$ the sort of $\tr(D)$ is indeed $\tr(\tau)$.

We see that arguments of sort $o$ are ignored while transforming an application.
Because of that, we need to collect the result of the transformation for all those subtrees of the derivation that describe terms of sort $o$.
This is realized by the $\trcum$ operation that returns a list of terms of sort $o$.
When $D$ is a derivation for a term of sort $\alpha$, and subtrees of $D$ starting in the children of the root are $D_1,\dots,D_m$, then
\begin{align*}
	\trcum(D)=\left\{\!\!\!\begin{array}{ll}
		\tr(D);\trcum(D_1);\dots;\trcum(D_m)&\mbox{if }\alpha=o,\\
		\trcum(D_1);\dots;\trcum(D_m)&\mbox{otherwise.}\!
	\end{array}\right.
\end{align*}
For a list $R_1;\dots;R_k$ of terms of sort $o$, let us define the term
$\merge(R_1;\dots;R_k)$ as $\bullet^k\,R_1\,\dots\,R_k$.
Finally, for a substitution $\eta$, and a list of terms $\lista$ we
write $\lista[\eta]$ for the list where the substitution is performed on
every term of the $\lista$.

\begin{example}\upshape
	To see an example of such a translation take the term $b^1\,(g\,x)$ that is
	on the right side of the rule for $B$ in Example~\ref{ex:two}.
	For readability of types, we write $(e,\r)$ instead
	of $(\set{e^0},\r)$. We take an environment  $\Gamma\equiv x:(e,\r),
	g:(\es,\set{(e,\r)}\to \r)$ and consider the derivation presented in
	Figure~\ref{fig:der}.
	\begin{figure}[tb]
	\resizebox{7.7cm}{!}{
	\begin{mathpar}
	  \inferrule{
	{\Gamma\vdash b^1:(\emptyset,\set{(e,\r)}\to \r)}\quad 
	\inferrule{\Gamma\vdash g:(\es,\set{(e,\r)}\to \r)\quad \Gamma\vdash x:(e,\r)}{\Gamma\vdash g\,x : (e,\r)}}
	{\Gamma\vdash b^1\,(g\,x):(e,\r)}
	\end{mathpar}  
	}
	  \caption{An example derivation}
	  \label{fig:der}
	\end{figure}
	Calling this derivation $D$, we have $\tr(D)=b^0$ and
	$\trcum(D)=b^0;g\restr_{(\es,\set{(e,\r)}\to 
	  \r)};\bullet^0$. Finally, 
		$\merge(\trcum(D))=\bullet^3\,b^0\,{g\restr_{(\es,\set{(e,\r)}\to\r)}}\,\bullet^0$
	is the (slightly perturbed) result of the transformation of the rule for $B$ in Example~2.
\end{example}

The new HORS $\Ss'$ is created as follows.
The rule $A_\mathit{init}\to A\,e_1^0\,\dots\,e_{|\Delta|}^0$ from the initial nonterminal of $\Ss$ 
is replaced by $A_\mathit{init}\to\merge(A_{\tau_0};e_1^0;\dots;e_{|\Delta|}^0)$ where $\tau_0=\{(\{e_1^0\},\r)\}\to\dots\to\{(\{e_{|\Delta|}^0\},\r)\}\to\r$.
For every other rule of $\Ss$ of the form
$A^\alpha\,x_1^{\alpha_1}\,\dots\,x_k^{\alpha_k}\to K$ we create a
rule in $\Ss'$ for every derivation of $K$. 
More precisely, for each $i\in\{1,\dots,k\}$
consider the (separated) set of labeled types $\Lambda_i = \set{\lambda_{i,1}<\dots<\lambda_{i,n_i}}$,
where $\lambda_{i,j}=(S_{i,j},\tau_{i,j})$ for every $i,j$.
For every derivation $D$ of the form
$x_1 : \Lambda_1, \dots, x_k : \Lambda_k \vdash
K:(\bigcup_{i\in\{1,\dots,k\}}\bigcup_{j\in\{1,\dots,n_i\}}S_{i,j},\r)$
we create a rule 
\begin{align*}
	A_\tau\,\Vars{1}\,\dots\,\Vars{k}\to\merge(\trcum(D))\,,
\end{align*}
where $\tau=(\Lambda_1\to\dots\to\Lambda_k\to\r)\in\Tt^\alpha$,
and $\Vars{i}$ denotes $x_i\restr_{\lambda_{i,1}}\,\dots\,x_i\restr_{\lambda_{i,n_i}}$ if $\alpha_i\neq o$,
and the empty sequence of variables if $\alpha_i=o$ (for $i\in\{1,\dots,k\}$).

The correctness of the transformation is described by the following two lemmata,
which are proved in the next two subsections. 
Their statements refer to the notion of equivalence introduced at the beginning of this section.

\begin{restatable}[Soundness]{lemma}{lemmasoundness}
  \label{lem:sound}
	For every tree generated by $\Ss'$ there exists an equivalent tree generated by $\Ss$.
\end{restatable}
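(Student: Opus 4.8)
The plan is to prove soundness by showing that every derivation tree $\tr(D)$ produced by $\Ss'$ can be ``lifted'' back to a genuine $\Ss$-derivation of an equivalent tree. Concretely, I would establish a simulation invariant: whenever $\Ss'$ reduces a term that arose as $\merge(\trcum(D))$ for some derivation $D$ of a term $K$ of sort $o$ in $\Ss$, the reduction can be matched by $\Ss$-reductions of $K$ (possibly several of them, or none, with the type system controlling how many copies of each sort-$o$ subterm are needed) so that the multiset of non-$\bullet$ letters is preserved. The key point is that the $\bullet^k$ nodes introduced by $\merge$ collect exactly the sort-$o$ subterms that $\Ss$ would generate separately, and the type system's bookkeeping of the sets $S$ guarantees each nullary symbol from $\Delta$ is accounted for exactly once.

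First I would set up the central lemma by strengthening the statement to open terms and arbitrary derivations: for every derivation $D$ deriving $\Gamma \vdash K : (S,\r)$ with $K$ of sort $o$, and for every closed substitution of the variables in $\Gamma$ by $\Ss$-terms, there is an $\Ss$-reduction producing a tree equivalent to the tree $\Ss'$ generates from $\merge(\trcum(D))$ under the corresponding transformed substitution. I would prove this by induction on the length of the $\Ss'$-reduction, case-splitting on the shape of $K$ and which rule of $\Ss'$ fires. The heart of the argument is the nonterminal-unfolding case: when $\Ss'$ applies a rule $A_\tau\,\Vars{1}\dots\Vars{k} \to \merge(\trcum(D'))$ coming from an $\Ss$-rule $A\,x_1\dots x_k \to K'$, I would match it with the corresponding $\Ss$-rule $A\,M_1\dots M_k \to K'[M_1/x_1,\dots]$, using that $\tr$ commutes with substitution in the appropriate sense and that $\trcum$ faithfully records the sort-$o$ arguments that $\tr$ dropped.

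The main obstacle will be establishing the substitution/commutation lemma relating $\tr(D)$ and $\trcum(D)$ to substitution: since arguments of sort $o$ are discarded by $\tr$ and instead threaded through $\trcum$ into $\bullet^k$ nodes, proving that transforming-then-reducing agrees (up to equivalence) with reducing-then-transforming requires carefully tracking how the dropped sort-$o$ closures reappear as siblings under $\merge$, and how the disjointness condition $S_0 \cap (S_1 \cup \dots \cup S_k) = \emptyset$ in the application rule ensures no nullary symbol is duplicated or lost. I would isolate this as a technical substitution lemma stating that, for a derivation $D$ of $L\,M$ with $M$ of sort $o$, the contribution of $M$ to $\trcum(D)$ is precisely the transformation of a derivation for $M$, so that the $\Ss$-side copy of $M$ and the $\Ss'$-side $\bullet$-sibling carry identical letter multisets.

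Finally, I would conclude: applying the strengthened lemma to the initial rule $A_\mathit{init} \to \merge(A_{\tau_0}; e_1^0; \dots; e_{|\Delta|}^0)$, where $\tau_0$ is the type assigning each $e_i^0$ its own singleton label, yields that any tree generated by $\Ss'$ from $A_\mathit{init}$ is equivalent to a tree $\Ss$ generates from $A\,e_1^0\dots e_{|\Delta|}^0$, which by our normalizing assumption on the initial rule is exactly a tree generated by $\Ss$. Since equivalence ignores only the fresh symbol $\bullet$ and preserves the count of every letter in $\Sigma$, this gives the soundness statement. I expect the inductive case analysis to be routine once the substitution lemma is in place; the genuine difficulty is purely in that commutation lemma.
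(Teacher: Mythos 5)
Your overall strategy coincides with the paper's: simulate each $\Ss'$-reduction step by an $\Ss$-step, with the technical heart being a substitution lemma relating $\tr$ and $\trcum$ to substitution (the paper's Lemma~\ref{lem:s-subst}), handle duplication via the disjointness constraints of the type system, and conclude by instantiating at the initial rule. However, there is a genuine gap in your step-matching argument: it ignores the \emph{order} in which $\Ss'$ performs its reductions, and for non-leftmost reductions the matching you describe does not exist. Recall that reduction (in both schemes) is permitted only at outermost nonterminals, so in a term $A\,N_1\,\dots\,N_k$ the arguments $N_i$ cannot be reduced before $A$ is unfolded. But the transformation hoists every sort-$o$ argument out of its context: in $\merge(\trcum(D))$ the contributions $\trcum(D_{i,j})$ of the arguments $N_i$ sit as siblings of $\tr(D)$ under the top-level $\bullet$, where they \emph{are} outermost, so $\Ss'$ may reduce inside them before touching the head nonterminal $A\restr_\tau$ contained in $\tr(D)$. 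Such a step can be matched neither by zero $\Ss$-steps (your invariant --- that the current $\Ss'$-term arises as $\merge(\trcum(D))$ for a derivation $D$ of the current $\Ss$-term --- then fails, since the sibling is now ahead of the corresponding subterm of the $\Ss$-term) nor by several $\Ss$-steps (the only available $\Ss$-step unfolds $A$, after which the $\Ss'$-term, whose head is still the unexpanded $A\restr_\tau$, no longer corresponds to the $\Ss$-term).

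The paper closes exactly this hole with Lemma~\ref{lem:s-leftmost}: any reduction sequence of $\Ss'$ reaching a tree can be reordered into a \emph{leftmost} one of the same length, and the single-step simulation (Lemma~\ref{lem:s-step}) is proved only for leftmost steps, where the shape of $\merge(\trcum(D))$ forces the fired redex either to be the head nonterminal $A\restr_\tau$ (matched by unfolding $A$ in $\Ss$) or to lie below a symbol $a^r$ (where $\Ss$ may also reduce, and induction applies); Corollary~\ref{cor:sound2} then re-normalizes to leftmost at every induction step, using the $\merge$-equivalence transfer of Lemma~\ref{lem:s-merge-equiv}. Your proof needs this normalization --- or an equivalent standardization argument on the $\Ss$ side showing that permitting reductions under nonterminal arguments does not enlarge the generated tree language --- inserted before your induction on the length of the $\Ss'$-reduction. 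Once that ingredient is added, the rest of your plan matches the paper's proof; but contrary to your closing remark, the commutation lemma is not the only genuine difficulty.
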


\begin{restatable}[Completeness]{lemma}{lemmacompleteness}
  \label{lem:compl}
	For every tree generated by $\Ss$ there exists an equivalent tree generated by $\Ss'$.
\end{restatable}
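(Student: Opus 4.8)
The plan is to derive the lemma from a single simulation statement proved by strong induction on the length of a generating reduction in $\Ss$. Write $K_1\approx K_2$ when two closed trees contain, for every letter $a\neq\bullet$, the same number of occurrences of $a$ (the equivalence fixed at the start of the section). The statement I would prove is: for every closed term $K$ of sort $o$ and every tree $t$ with $K\to_\Ss^* t$ there is a derivation $D$ of $\emptyset\vdash K:(S,\r)$, where $S\subseteq\Delta$ is the set of leaf symbols of $t$, such that $\Ss'$ generates from $\merge(\trcum(D))$ some tree $t'\approx t$. Instantiating this at $K=A\,e_1^0\,\dots\,e_{|\Delta|}^0$ (using that, after the preprocessing fixing the initial rule, $A\,e_1^0\,\dots\,e_{|\Delta|}^0\to_\Ss^* t$) and prepending the rule $A_\mathit{init}\to\merge(A\restr_{\tau_0};e_1^0;\dots;e_{|\Delta|}^0)$ of $\Ss'$ yields the lemma, because the type of $A$ is then forced to be $\tau_0$ and $\merge(\trcum(D))$ coincides with the body of that rule.

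The base case $K=t$ is immediate: one reads a derivation $D$ off the structure of $t$, and since $\tr$ sends every $a^r$ to $a^0$ and only inserts $\bullet$-nodes, $\merge(\trcum(D))$ is already a nonterminal-free tree, which is $\approx t$. For the inductive step I would split on the shape of $K$. If $K=a^r\,K^{(1)}\,\dots\,K^{(r)}$, every reduction occurs inside the $K^{(l)}$, so I apply the induction hypothesis to each strictly shorter reduction $K^{(l)}\to_\Ss^* t_l$ and recombine the resulting $\Ss'$-derivations under the matching $\bullet$-structure produced by $\trcum$ and $\merge$. The essential case is a head reduction $K=A\,M_1\,\dots\,M_k\to_\Ss R[M_1/x_1,\dots,M_k/x_k]=:K_1$, where the head $A$ is the only redex. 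Here I would use two auxiliary lemmas. A Decomposition Lemma states that the derivation $D_1$ for $K_1$ granted by the induction hypothesis factors as a derivation $D_R$ of $x_1\!:\!\Lambda_1,\dots,x_k\!:\!\Lambda_k\vdash R:(S,\r)$ together with, for each $\lambda_{i,j}\in\Lambda_i$, a derivation $D_{i,j}$ of $\emptyset\vdash M_i:\lambda_{i,j}$; the $\Lambda_i$ are obtained by collecting the types appearing at the occurrences of each $x_i$, and the separation and disjointness conditions of the type system hold precisely because $t$ is $\Delta$-narrow. This fixes $\tau=\Lambda_1\to\dots\to\Lambda_k\to\r$ and hence selects the $\Ss'$-rule $A\restr_\tau\,\Vars{1}\,\dots\,\Vars{k}\to\merge(\trcum(D_R))$. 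A Substitution Lemma then describes $\trcum(D_1)$ in terms of $\trcum(D_R)$: higher-order arguments are substituted exactly as the rule prescribes, $\tr(D_{i,j})$ replacing $x_i\restr_{\lambda_{i,j}}$, while for arguments of sort $o$ (whose variables $\tr$ erases to $\bullet^0$) the lists $\trcum(D_{i,j})$ are the subtrees that $\trcum$ and $\merge$ gather. Building $D$ for $K$ from $A:(\emptyset,\tau)$ and the $D_{i,j}$ and firing the selected rule reduces $\merge(\trcum(D))$ to a term $\approx\merge(\trcum(D_1))$, after which the induction hypothesis for $K_1$ supplies a tree $t'\approx t$.

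I expect the Substitution Lemma, together with the bookkeeping around it, to be the main obstacle. Because $\tr$ discards arguments of sort $o$ while $\trcum$ and $\merge$ relocate the subtrees those arguments generate toward the root, firing the $A\restr_\tau$-rule does not reproduce $\merge(\trcum(D_1))$ on the nose: the sort-$o$ contributions resurface under the outer $\merge$ and leave $\bullet^0$-placeholders inside the rule body, so one only obtains a tree with the same multiset of non-$\bullet$ letters, i.e.\ an $\approx$-equivalent but differently shaped term. Consequently the invariant must be phrased up to $\approx$ from the outset, and the step above requires that reachability in $\Ss'$ of a tree $\approx t$ is insensitive to such $\bullet$-reshuffling, so that from any term $\approx\merge(\trcum(D_1))$ obtained this way $\Ss'$ still generates a tree $\approx t$. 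Verifying this reduces to checking that each sort-$o$ contribution is collected exactly once, which is exactly where $\Delta$-narrowness is indispensable: it forces every closure of sort $o$ to carry a nonempty leaf-set, the separation condition then permits at most one labeled type per such leaf-set, and the disjointness of the $S_{i,j}$ bounds the number of reuses of a closure by $|\Delta|$. This both keeps the type system finite and guarantees that distinct reused copies of a single $M_i$ claim disjoint leaves, so that no letter is lost or double-counted under merging; the non-determinism of $\Ss$ is used freely here, since the $|\Lambda_i|$ derivations $D_{i,j}$ of one closure may describe different subtrees, one for each occurrence of $x_i$ in $R$.
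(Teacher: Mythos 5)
Your overall route is the same as the paper's: a backwards simulation along the $\Ss$-reduction (your simulation statement is essentially the paper's Corollary~\ref{cor:compl2}, your base case is Lemma~\ref{lem:c-start}, and your Decomposition/Substitution Lemmas correspond to Lemma~\ref{lem:c-subst} and its use in Lemma~\ref{lem:c-step}). But there is a genuine gap at the heart of your head-reduction case: your Decomposition Lemma is not true as stated, because it presupposes that the derivation $D_1$ for $K_1=R[M_1/x_1,\dots,M_k/x_k]$ assigns, to all occurrences of a given $M_i$ that carry the same labeled type, the same sub-derivation. Your closing paragraph shows where this is overlooked: you argue that distinct occurrences of $x_i$ in $R$ get distinct labeled types (``one for each occurrence''), but this is guaranteed only for labeled types $(S,\sigma)$ with $S\neq\emptyset$, where disjointness of leaf-sets indeed forces uniqueness of the occurrence. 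For a higher-order argument $M_i$ (sort $\alpha_i\neq o$) the type system also admits labeled types of the form $(\emptyset,\sigma)$ --- these are pervasive, e.g.\ every argument that produces leaves only through its own later arguments is typed this way --- and several occurrences of $x_i$ in $R$ may carry the \emph{same} $(\emptyset,\sigma)$ while the corresponding occurrences of $M_i$ in $D_1$ carry \emph{different} derivations $E_1\neq E_2$ of $\vdash M_i:(\emptyset,\sigma)$ (they may, for instance, assign different types to nonterminals inside $M_i$, so that $\tr(E_1)$ and $\tr(E_2)$ are built from different nonterminals of $\Ss'$). Since the application rule takes its premises indexed by labeled types \emph{without repetition}, the derivation $D$ you build for $A\,M_1\,\dots\,M_k$ can keep only one of $E_1,E_2$; after firing the selected rule of $\Ss'$, every occurrence of $x_i\restr_{(\emptyset,\sigma)}$ receives that single term, whereas $\trcum(D_1)$ contains $\tr(E_1)$ in one place and $\tr(E_2)$ in another. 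The resulting term is then neither $\merge$-equivalent to $\trcum(D_1)$ nor related to it in any way your induction hypothesis can exploit, so the inductive step collapses.

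This is exactly the ``additional difficulty'' the paper isolates at the start of its completeness subsection, and repairing it needs a new idea, not bookkeeping: the paper restricts the entire backwards simulation to \emph{maximal} derivations, proves that a maximal derivation of a derivable judgment exists (Lemma~\ref{lem:c-exists-max}) and is \emph{unique} for judgments of the form $\Gamma\vdash K:(\emptyset,\tau)$ (Lemma~\ref{lem:c-unique-max}), and maintains maximality as an invariant through Lemmas~\ref{lem:c-start}, \ref{lem:c-subst} and~\ref{lem:c-step}. Uniqueness makes $E_1=E_2$ automatic, which is what legitimizes your Decomposition Lemma; note also that maximality must be threaded through the substitution step (point 1 of Lemma~\ref{lem:c-subst} forces $\Lambda$ to contain \emph{all} derivable empty-set types of $N$, precisely so that the reconstructed derivation is again maximal). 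Your remarks on narrowness, separation, and disjointness of the $S_{i,j}$ are correct, but they settle only the nonempty-set half of the problem (no letter lost or double-counted); without the maximality mechanism the empty-set half, and with it the lemma, remains unproven. (A smaller issue: in your $a^r$-headed case the sub-reductions need not be strictly shorter --- if all reductions happen in one subtree, its reduction has the full length --- so the induction should pair the length lexicographically with the structure of the term, which is how the paper organizes Lemma~\ref{lem:c-step} inside Corollary~\ref{cor:compl2}.)
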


\subsection{Soundness}

To prove Lemma~\ref{lem:sound},
we follow a sequence of reductions of $\Ss'$,
and we construct corresponding reductions of $\Ss$.
We however need to assume that the sequence of reductions in $\Ss'$ is leftmost.
We write $P\to_{\Ss'}^\mathit{lf}P'$ to denote that this is the \emph{leftmost} reduction: in $\bullet^k\,P_1\,\dots\,P_k$ we can reduce inside $P_i$ only when in $P_1,\dots,P_{i-1}$ there are no more nonterminals.
Not surprisingly, the order of reductions does not influence the final result,
as stated in the following lemma.

\begin{restatable}{lemma}{lemmasleftmost}
  \label{lem:s-leftmost}
	Suppose that a tree $Q$ can be reached from a term $P$ using some sequence of reductions of $\Ss'$.
	Then $Q$ can be reached from $P$ using a sequence of reductions of $\Ss'$ of the same length in which all reductions are leftmost. 
\end{restatable}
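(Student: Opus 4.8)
The statement is a standard confluence-style result: reordering reductions of $\Ss'$ so that they become leftmost, while preserving both the endpoints $P$, $Q$ and the total number of reduction steps. The plan is to prove it by a local-swap argument combined with induction. The key observation is that $\Ss'$ uses only auxiliary symbols $\bullet^k$ as internal tree nodes (all $\Delta$-symbols and letters are emitted at rank $0$), so every non-leftmost reduction takes place inside some argument $P_i$ of a term $\bullet^k\,P_1\,\dots\,P_k$ while an earlier argument $P_{\ell}$ ($\ell<i$) still contains a nonterminal. The heart of the matter is a \emph{swapping lemma}: if $P \to_{\Ss'} P_1 \to_{\Ss'} Q$ where the first step is \emph{not} leftmost but the second step is the leftmost-enabled one, then these two steps can be exchanged, i.e.\ there is a term $P_1'$ with $P \to_{\Ss'}^{\mathit{lf}} P_1' \to_{\Ss'} Q$, still reaching the same $Q$ in the same number of steps.

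First I would establish this swapping lemma. The two redexes involved reduce disjoint subterms of $P$: one redex sits inside some $P_i$, the leftmost-enabled redex sits inside some $P_\ell$ with $\ell < i$ (or more generally to the left in the leftmost order). Because the rules of $\Ss'$ operate on the outermost nonterminal of a closed subterm and the two subterms overlap in neither position, the two contractions commute---contracting one does not create, destroy, or duplicate the other redex, precisely because the reduction relation $\to_{\Ss'}$ reduces exactly \emph{one} outermost nonterminal occurrence and substitutes \emph{closed} terms (so no redex is copied into several places in a way that would change the step count). This disjointness gives the one-step diamond that realizes the swap with equal length.

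Next I would turn the swapping lemma into the global statement by induction. Given an arbitrary reduction sequence $P = R_0 \to_{\Ss'} R_1 \to_{\Ss'} \dots \to_{\Ss'} R_m = Q$, I would use a measure argument: define a well-founded rank of a reduction sequence counting, say, the sum over all steps of the "leftward distance" between the contracted redex and the leftmost-enabled redex at that point (or, more simply, the number of steps that are not leftmost, weighted by their position). If the sequence is not already leftmost, locate the first non-leftmost step and repeatedly apply the swapping lemma to push the leftmost-enabled redex earlier; each swap strictly decreases the measure while keeping both endpoints and the length $m$ fixed. Iterating until the measure is zero yields a fully leftmost sequence of length $m$ from $P$ to $Q$, as required.

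The main obstacle I expect is not the high-level induction but getting the swapping lemma exactly right with respect to the step \emph{count}. One must check carefully that contracting the non-leftmost redex first versus the leftmost redex first produces \emph{literally} the same resulting term, with no spurious duplication or erasure of the other redex. This is where the specific shape of $\Ss'$ matters: substitutions involve only closed terms (as noted in the preliminaries, so no $\alpha$-conversion is needed), and each $\to_{\Ss'}$ step touches a single outermost nonterminal, so a redex lying in a disjoint argument is neither copied nor deleted by contracting another. Making this disjointness and non-duplication precise---essentially verifying that $\Ss'$ reductions enjoy the needed local-commutation diamond that preserves length---is the delicate point; once it is in place, the remaining induction is routine.
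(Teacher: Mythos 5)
Your swapping lemma and its justification are sound: in $\Ss'$ the reducible positions of a term are exactly the maximal nonterminal-headed subterms reachable from the root through symbols only; any two such redexes occupy disjoint subtrees, contracting one neither destroys nor duplicates another, and newly created redexes appear only inside the contracted position. Hence two adjacent steps contracting disjoint redexes commute, preserving both endpoints and length. (For calibration: the paper itself gives no proof of this lemma at all -- it is presented as unsurprising -- so your attempt can only be judged on its own merits.)

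The gap is in the global argument, and it sits exactly where the hypothesis that $Q$ is a \emph{tree} must be used -- a hypothesis you never invoke, even though the statement is false without it. Take rules $A\to a^0$ and $B\to b^0$: the one-step sequence $\bullet^2\,A\,B\to_{\Ss'}\bullet^2\,A\,b^0$ cannot be replaced by a leftmost sequence of length $1$, since the unique leftmost step produces $\bullet^2\,a^0\,B$. Concretely, your recipe ``locate the first non-leftmost step and apply the swapping lemma'' does not go through: the swapping lemma requires the step \emph{following} the non-leftmost one to be leftmost, and after the first non-leftmost step the next step may again be non-leftmost, so nothing guarantees that a swappable pair exists and your measure argument never gets started. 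What rescues the proof is tree-ness of $Q$: the leftmost redex of any intermediate term persists untouched under contractions of disjoint redexes, so if it were never contracted it would survive into $Q$, contradicting that $Q$ contains no nonterminal. Therefore the contraction of (the residual of) the leftmost redex occurs at some later step, that step is itself leftmost (no redex ever appears to its left in the meantime), and the step immediately before it is non-leftmost -- this is the adjacent pair to which your swap applies, and bubbling that contraction to the front (equivalently, inducting on the length of the sequence, or running your position-weighted measure) completes the proof. As written, however, the inductive engine is missing the one fact that makes it turn over.
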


%
%

We need to generalize the definition of equivalence from trees to (lists of) terms of sort $o$ possibly containing nonterminals.
We say that two lists of terms of sort $o$ are \emph{$\merge$-equivalent} if one can be obtained from the other by:
\begin{itemize}
\item	permuting its elements,
\item	adding or removing the $\bullet^0$ term, 
\item	merging/unmerging some list elements using the symbol $\bullet^k$.
\end{itemize}
The following property of \emph{$\merge$-equivalent} lists should be clear.

\begin{restatable}{lemma}{lemmasmergeequiv}
	\label{lem:s-merge-equiv}
	Let $\lista$ and $\lista'$ be two $\merge$-equivalent lists of terms of sort $o$.
	Suppose that a tree $Q$ can be generated by $\Ss'$ from $\merge(\lista)$.
	Then some tree $Q'$ equivalent to $Q$ can be generated by $\Ss'$ from $\merge(\lista')$ using a sequence of reductions of the same length.
\end{restatable}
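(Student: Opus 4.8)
The plan is to exploit the fact that the symbol $\bullet^k$ is inert under $\to_{\Ss'}$: in every reduction rule the head of the contracted redex is a nonterminal, so a $\bullet$-node is never itself reduced. Consequently, any single reduction takes place strictly inside one of the maximal subterms of $\merge(\lista)$ that is \emph{not} a $\bullet$-node reachable from the root through $\bullet$-nodes, and it leaves the surrounding $\bullet$-structure untouched (although it may create fresh $\bullet$-nodes deeper inside, since right-hand sides of $\Ss'$ are built with $\merge$).

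Concretely, I would introduce an auxiliary relation $P \approx P'$ on terms of sort $o$, declaring $P \approx P'$ when $P'$ can be obtained from $P$ by reorganizing the connected region of $\bullet$-nodes containing the root, that is, by permuting children of such $\bullet$-nodes, inserting or deleting $\bullet^0$-leaves, and flattening or nesting such $\bullet$-nodes, without altering any non-$\bullet$ subterm hanging off this region. Two observations make this relation do all the work. First, if $\lista$ and $\lista'$ are $\merge$-equivalent then $\merge(\lista)\approx\merge(\lista')$: each of the three generating operations is exactly one of the allowed reorganizations of the top $\bullet$-region, and it suffices to check this for a single generating operation and compose, since $\approx$ is transitive and $\merge$-equivalence is generated by these operations. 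Second, any two terms with $P\approx P'$ contain, for every letter $a\neq\bullet$, the same number of occurrences of $a$; in particular, if they are both trees they are equivalent in the sense of the lemma.

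The core is a simulation argument by induction on the length of the reduction $\merge(\lista)\to_{\Ss'}^* Q$, with the generalized statement: if $P\approx P'$ and $P\to_{\Ss'}^* Q$ has length $\ell$, then there is $Q'$ with $P'\to_{\Ss'}^* Q'$ of length $\ell$ and $Q\approx Q'$. For $\ell=0$ take $Q'=P'$. For the inductive step, the first reduction $P\to_{\Ss'}P_1$ contracts a redex lying inside some non-$\bullet$ subterm $R$ of the top region; since $\approx$ only rearranges the region and keeps $R$ intact, the identical redex occurs in $P'$, so firing the same rule gives $P'\to_{\Ss'}P_1'$ with $P_1\approx P_1'$ (the newly created $\bullet$-nodes sit inside the image of $R$ and become part of the non-$\bullet$ data of both terms identically). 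Applying the induction hypothesis to $P_1\approx P_1'$ and the remaining reduction of length $\ell-1$ yields the claim, and composing gives a length-$\ell$ reduction of $P'$ to some $Q'$ with $Q\approx Q'$. Taking $P=\merge(\lista)$ and $P'=\merge(\lista')$, and noting that $Q\approx Q'$ implies that $Q$ and $Q'$ are equivalent trees, completes the proof; note that leftmostness is not needed here.

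The step I expect to require the most care is the claim that a redex of $P$ reappears verbatim in $P'$ and that $\approx$ is preserved after contraction. The subtlety is that contraction introduces new $\bullet$-nodes (from $\merge$ on the right-hand sides), so one must be precise that these are created \emph{below} the non-$\bullet$ subterm being reduced and hence do not belong to the reorganizable top $\bullet$-region; keeping the definition of $\approx$ anchored at the root, rather than at arbitrary $\bullet$-nodes, is what makes the induction close cleanly.
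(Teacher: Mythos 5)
Your proposal is correct, and it takes the approach the paper itself leaves implicit: the paper states this lemma with only the remark that the property ``should be clear'' and gives no proof, the intended (routine) argument being exactly yours. Your simulation invariant --- reductions always fire at nonterminal-headed redexes lying inside the maximal non-$\bullet$ subterms hanging off the root-connected $\bullet$-region, so two terms with the same multiset of such subterms reduce in lockstep, step counts and letter counts (for letters other than $\bullet$) being preserved even when a contractum's fresh $\bullet$-nodes enlarge that region --- is a sound and complete formalization of that omitted argument, including the delicate grafting point you flag at the end.
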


The next lemma contains an important observation, needed later in the proof of Lemma \ref{lem:s-subst}.

\begin{lemma}\label{lem:s-empty}
	If $D$ derives $\vdash K:(\emptyset,\tau)$, then $\trcum(D)$ is empty.
\end{lemma}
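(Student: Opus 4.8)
The plan is to prove a slightly stronger statement by structural induction on the derivation $D$, namely that \emph{every} node of $D$ is labeled by a type judgment whose label set is empty. Granting this, the lemma follows at once: by definition $\trcum(D)$ collects exactly one term $\tr(D')$ for each subderivation $D'$ (including $D$ itself) whose root derives a term of sort $o$, i.e.\ a labeled type of the form $(S',\r)$; but every labeled type of sort $o$ satisfies $S'\neq\emptyset$ by the defining constraint on $\LT^o$, so a node with empty label set can never derive a term of sort $o$. Hence if all label sets are empty, no node is a sort-$o$ node, and $\trcum(D)$ contains no terms, i.e.\ it is the empty list.

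To establish the auxiliary statement, I would show that the label set of any node is the union of the label sets of its children, and then combine this with the hypothesis at the root. Inspecting the rules of Figure~\ref{fig:type}: the axiom rules (for variables, nonterminals, and symbols $a^r$) have no premises, so there is nothing to propagate; the only rule with premises is the application rule for $L\,M$, whose conclusion carries the label set $S_0\cup S_1\cup\cdots\cup S_k$, the union of the label set of the premise for $L$ and of the label sets of the premises for the arguments $M$. Therefore, whenever a node has empty label set, the union above is empty, forcing each of $S_0,S_1,\dots,S_k$ to be empty as well; that is, every child of an empty-label node is again an empty-label node. Since the root of $D$ has empty label set by hypothesis, a straightforward induction then shows that every node of $D$ does too.

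The argument is essentially routine, and the only point requiring a little care is the interplay between the hypothesis $(\emptyset,\tau)$ and the validity constraints on labeled types. In particular, one must observe that $\tau\neq\r$ at the root, since otherwise $(\emptyset,\r)$ would violate the $\LT^o$ constraint; thus the root itself is not a sort-$o$ node and contributes no $\tr(D)$ to $\trcum(D)$. The same constraint, applied at every node, is exactly what rules out sort-$o$ labels once emptiness of the label sets has been propagated throughout $D$. No genuine obstacle arises here, as the propagation of emptiness is immediate from the additive form of the application rule.
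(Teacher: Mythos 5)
Your proof is correct and follows essentially the same route as the paper's: a structural induction over the derivation resting on the same two key facts, namely that the application rule makes a node's label set the union of its children's (so emptiness propagates downward) and that $\LT^o$ forbids empty label sets at sort $o$ (so no node can contribute a $\tr(\cdot)$ term to $\trcum$). The only cosmetic difference is that you factor the argument into a strengthened invariant (``all nodes of $D$ carry empty label sets'') followed by an unrolling of the definition of $\trcum$, whereas the paper performs a single induction computing $\trcum(D)$ directly.
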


\begin{proof}
	By induction on the structure of $D$.
	Recall that $\LT^o$ does not contain pairs with $\emptyset$ on the first coordinate, so the sort of $K$ is not $o$, 
	and thus $\trcum(D)$ is defined as the concatenation of $\trcum(\cdot)$ for the subtrees of $D$ starting in the children of the root.
	When $D$ consists of a single node, we immediately have that $\trcum(D)$ is empty.
	Otherwise $K=L\,M$, and the subtrees of $D$ starting in the children of the root are $D_0$ deriving $\vdash L:(S_0,\{(S_1,\tau_1),\dots,(S_k,\tau_k)\}\to\tau)$ and $D_i$ deriving $\vdash M:(S_i,\tau_i)$ for $i\in\{1,\dots,k\}$.
	Since $\emptyset=S_0\cup\dots\cup S_k$, we have $S_i=\emptyset$ for every $i\in\{0,\dots,k\}$.
	The induction assumption implies that $\trcum(D_i)$ is empty for every $i\in\{0,\dots,k\}$, and thus $\trcum(D)$ is empty.
\end{proof}

Before relating reductions in the HORSes, we analyze what happens during a substitution.

\begin{lemma}\label{lem:s-subst}
Consider a derivation $D_K$ for $\Gamma\vdash K^{\alpha_K}:(S,\tau)$.
Suppose all bindings for a variable $x^{\alpha_x}$ in $\Gamma$ are
$(x:\lambda_1),\dots,(x:\lambda_k)$, where
$\lambda_1<\dots<\lambda_k$, and
$(\{\lambda_1,\dots,\lambda_k\}\to\r)$ is a type in $\Tt^{\alpha_x\to o}$.
Suppose also that we have a closed term $N^{\alpha_x}$ with, for every
$i=1,\dots,k$, a  derivation $D_i$ for $\vdash N:\lambda_i$.
Then there is a derivation $D'$ for $\Gamma\vdash K[N/x]:(S,\tau)$
such that $\trcum(D')$ is $\merge$-equivalent to
$\trcum(D_K)[\eta];\trcum(D_{i_1});\dots;\trcum(D_{i_m})$
where
$\eta=(\tr(D_1)/x\restr_{\lambda_1},\dots,\tr(D_k)/x\restr_{\lambda_k})$,
and $i_1<\dots<i_m$ are  those among $i\in\{1,\dots,k\}$ for which
in $D_K$ there is a node labeled by $\Gamma\vdash x:\lambda_i$.
Moreover, if $\alpha_K\neq o$ then $\tr(D')=\tr(D_K)[\eta]$.
\end{lemma}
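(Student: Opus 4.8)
The plan is to proceed by induction on the structure of the term $K$ (equivalently, on the derivation $D_K$), building $D'$ by mirroring the shape of $D_K$ and performing the substitution at each leaf that reads the variable $x$. Since every node of $D_K$ carries the same environment $\Gamma$, and substituting the closed term $N$ at its prescribed labeled types does not alter any of the set components appearing in the judgments, the types attached to $D'$ coincide node-for-node with those of $D_K$. Consequently every side condition of the application rule (disjointness of the $S_i$ and separation of the $\Lambda_i$) transfers verbatim from $D_K$ to $D'$, so $D'$ is a valid derivation for $\Gamma\vdash K[N/x]:(S,\tau)$; the hypothesis that $(\{\lambda_1,\dots,\lambda_k\}\to\r)\in\Tt^{\alpha_x\to o}$ is likewise preserved under recursion, as $\Gamma$ never changes.

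First I would dispatch the base cases. If $K$ is a symbol, a nonterminal, or a variable $y\neq x$, then $K[N/x]=K$, the variable $x$ does not occur, so $m=0$, the substitution $\eta$ acts trivially, and I take $D'=D_K$; both claims hold on the nose. The essential base case is $K=x$: here $D_K$ is a single node deriving $\Gamma\vdash x:\lambda_i=(S,\tau)$ for the unique $i$ with $\lambda_i=(S,\tau)$, so $\{i_1,\dots,i_m\}=\{i\}$. Because $N$ is closed, $D_i$ uses no variable bindings and can be weakened to $\Gamma$; I take $D'$ to be this weakening. If $\alpha_x\neq o$, then $\trcum(D_K)$ is empty and $\tr(D_K)=x\restr_{\lambda_i}$, so $\tr(D_K)[\eta]=\tr(D_i)=\tr(D')$ (the moreover clause) and $\trcum(D')=\trcum(D_i)$ matches the target exactly. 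If $\alpha_x=o$, then $\tr(D_K)=\bullet^0$ and $\eta$ is empty on sort-$o$ variables, so the target $\trcum(D_K)[\eta];\trcum(D_i)=\bullet^0;\trcum(D_i)$ is $\merge$-equivalent to $\trcum(D')=\trcum(D_i)$ precisely because $\merge$-equivalence permits deleting the $\bullet^0$ element.

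For the inductive step $K=L\,M$, I would apply the induction hypothesis to the subderivation $D_0$ for $L$ and to each subderivation $D_j$ for the premise $M:\mu_j$, obtaining $D_0'$ and $D_j'$, and reassemble $D'$ by the application rule. The sort of $L$ is a function sort, and $M$ enters $\tr(D_K)$ only when its sort is $\neq o$, so the moreover hypotheses apply where needed and distributing $[\eta]$ through the application yields $\tr(D')=\tr(D_K)[\eta]$ (for applications this holds regardless of $\alpha_K$, so the leading $\tr(D')$ in the $\alpha_K=o$ case matches the prefix of $\trcum(D_K)[\eta]$). For the $\trcum$ part I would expand $\trcum(D')$ as the concatenation of $\trcum(D_0')$ and the $\trcum(D_j')$ (prefixed by $\tr(D')$ when $\alpha_K=o$), rewrite each by the induction hypothesis, and collect terms up to permutation: the $[\eta]$-images reassemble into $\trcum(D_K)[\eta]$, and what remains are the contributions $\trcum(D_i)$ arising from the occurrences of $x$ scattered across $L$ and $M$.

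The main obstacle, and the only genuinely delicate point, is the bookkeeping of multiplicities in $\trcum$: the target lists each $\trcum(D_i)$ exactly once per index $i$ occurring in $D_K$, whereas a single $\lambda_i$ could be read at several leaves spread over $L$ and the various $M$-premises. I would resolve this by the separation invariant together with Lemma~\ref{lem:s-empty}. If $S_i\neq\emptyset$, the disjointness side condition forbids $\lambda_i$ from occurring at two distinct leaves, since their common set component $S_i$ would be forced into disjoint argument positions of the application node that is their lowest common ancestor; hence $\lambda_i$ is read at most once and no multiplicity arises. If $S_i=\emptyset$, then by Lemma~\ref{lem:s-empty} the list $\trcum(D_i)$ is empty, so any number of copies of it collapses, up to $\merge$-equivalence, to a single empty contribution. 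In either case the assembled list is $\merge$-equivalent to $\trcum(D_K)[\eta];\trcum(D_{i_1});\dots;\trcum(D_{i_m})$, which completes the induction.
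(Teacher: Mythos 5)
Your proof is correct and follows essentially the same route as the paper's: induction on the structure of $K$, with the same treatment of the trivial cases, the weakening of $D_i$ to environment $\Gamma$ when $K=x$, and the same two-pronged resolution of potential duplicates (Lemma~\ref{lem:s-empty} for labeled types with empty set component, disjointness of set components for the rest). Your lowest-common-ancestor phrasing of the disjointness argument is just a globalized version of the paper's observation that a leaf $\Gamma\vdash x:\lambda_{i}$ with nonempty set component can occur under at most one premise of the root application, so the two proofs coincide in substance.
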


\begin{proof}
	Induction on the structure of $K$.
	We consider three cases.
	
	The trivial case is when $K$ is a nonterminal, or a symbol, or a variable other than $x$.
	Then $K[N/x]=K$, so as $D'$ we can take $D_K$.
	Notice that we have $m=0$ and that the substitution $\eta$ does
        not change neither $\trcum(D_K)$ nor $\tr(D_K)$ since
        variables $x\restr_{\lambda_i}$ do not appear in these
        terms. 
	
	Another easy case is when $K=x$, and thus $(S,\tau)=\lambda_l$ for some $l\in\{1,\dots,k\}$.
	We have $m=1$, and $i_1=l$, and $K[N/x]=N$.
	The required derivation $D'$ is obtained from $D_l$ by prepending the type judgment in every its node by the type environment $\Gamma$.
	Clearly $D'$ remains a valid derivation, and $\tr(D')=\tr(D_l)$ and $\trcum(D')=\trcum(D_l)$.
	We see that $\trcum(D_K)$ is either the empty list (when $\alpha_K\neq o$) or $\bullet^0$ (when $\alpha_K=o$), so attaching $\trcum(D_K)[\eta]$ does not change the class of $\merge$-equivalence.
	If $\alpha_K\neq o$, we have $\tr(D_K)[\eta]=x\restr_{\lambda_l}[\eta]=\tr(D_l)$.  
	
	A more involved case is when $K=L^{\alpha_L}\,M^{\alpha_M}$.
	Then in $D_K$, below its root, we have a subtree $C_0$ deriving $\Gamma\vdash L:(S_0,\{(S_1,\tau_1),\dots,(S_n,\tau_n)\}\to\tau)$, and for each $j\in\{1,\dots,n\}$ a subtree $C_j$ deriving $\Gamma\vdash M:(S_j,\tau_j)$,
	where $(S_1,\tau_1)<\dots<(S_n,\tau_n)$, and $S_0\cap(S_1\cup\dots\cup S_n)=\emptyset$, and $S=S_0\cup\dots\cup S_n$.
	We apply the induction assumption to all these subtrees, obtaining a derivation $C_0'$ for $\Gamma\vdash L[N/x]:(S_0,\{(S_1,\tau_1),\dots,(S_n,\tau_n)\}\to\tau)$ and for each $j\in\{1,\dots,n\}$ a derivation $C_j'$ for $\Gamma\vdash M[N/x]:(S_j,\tau_j)$.
	We compose these derivations into a single derivation $D'$ for $\Gamma\vdash K[N/x]:(S,\tau)$ using the application rule.
	It remains to prove the required equalities about $\trcum$ and $\tr$.

	Let us first see that $\tr(D')=\tr(D_K)[\eta]$ (not only if $\alpha_K\neq o$, but also if $\alpha_K=o$).
	From the induction assumption we know that $\tr(C_0')=\tr(C_0)[\eta]$, as surely $\alpha_L\neq o$. 
	If $\alpha_M=o$, we simply have $\tr(D')=\tr(C_0')$ and $\tr(D_K)=\tr(C_0)$, so clearly $\tr(D')=\tr(D_K)[\eta]$ holds.
	If $\alpha_M\neq o$, from the induction assumption we also know that $\tr(C'_j)=\tr(C_j)[\eta]$ for every $j\in\{1,\dots,n\}$;
	we have $\tr(D')=\tr(C_0')\,\tr(C'_1)\,\dots\,\tr(C'_n)$ and similarly $\tr(D_K)=\tr(C_0)\,\tr(C_1)\,\dots\,\tr(C_n)$, so we also obtain $\tr(D')=\tr(D_K)[\eta]$.
	
	Next, we prove that $\trcum(D')$ is $\merge$-equivalent to the
        list $\trcum(D_K)[\eta];\trcum(D_{i_1});\dots;\trcum(D_{i_m})$.
	For each $j\in\{0,\dots,n\}$, let $i_{j,1}<\dots<i_{j,m_j}$ be those among $i\in\{1,\dots,k\}$ for which in $C_j$ there is a node labeled by $\Gamma\vdash x:\lambda_i$.
	By definition $\trcum(D')$ consists of $\trcum(C_j')$ for $j\in\{0,\dots,n\}$, and if $\alpha_K=o$ then also of $\tr(D')$.
	Similarly, $\trcum(D_K)[\eta]$ consists of
        $\trcum(C_0)[\eta];\dots;\trcum(C_n)[\eta]$, and
        of $\tr(D_K)[\eta]$ if $\alpha_K=o$. 
	We have already shown that $\tr(D')=\tr(D_K)[\eta]$.
	The induction assumption implies that $\trcum(C_j')$ is $\merge$-equivalent to $\trcum(C_j)[\eta];\trcum(D_{i_{j,1}});\dots;\trcum(D_{i_{j,m_j}})$ for each $j\in\{0,\dots,n\}$.
	It remains to observe that the concatenation of the lists $\trcum(D_{i_{j,1}});\dots;\trcum(D_{i_{j,m_j}})$ for $j\in\{0,\dots,n\}$ is $\merge$-equivalent to $\trcum(D_{i_1});\dots;\trcum(D_{i_m})$.
	By definition every $i_{j,l}$ equals to some $i_{l'}$ and every $i_{l}$ equals to some $i_{j,l'}$; the only question is about duplicates on these lists.
	Let us write $\lambda_i=(T_i,\sigma_i)$ for every $i\in\{1,\dots,k\}$.
	When some $i_l$ is such that $T_{i_l}=\emptyset$, then the list $\trcum(D_{i_l})$ is empty (Lemma \ref{lem:s-empty}),
	so anyway we do not have to care about duplicates.
	On the other hand, when $T_{i_l}\neq\emptyset$ and a node labeled by $\Gamma\vdash x:\lambda_{i_l}$ appears in some $C_j$, then $T_{i_l}\subseteq S_j$.
	Since the sets $S_0,\dots,S_n$ are disjoint, such node appears in $C_j$ only for one $j$, and thus such $i_l$ equals to only one among the $i_{j,l'}$'s.
\end{proof}

We can now formulate and prove the key lemma of this section, allowing us to simulate a single step of $\Ss'$ by a single step of $\Ss$.

\begin{lemma}\label{lem:s-step}
	Let $D$ be a derivation for $\vdash L:(S,\r)$, where $L$ does not contain the initial nonterminal of $\Ss$.
	If $\merge(\trcum(D))\to^\mathit{lf}_{\Ss'}P$, then there
        exists a term $L'$ and a derivation $D'$ for $\vdash L':(S,\r)$ such that $L\to_\Ss L'$ and $\trcum(D')$ is $\merge$-equivalent to $P$.
\end{lemma}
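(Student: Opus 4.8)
The plan is to induct on the structure of the closed sort-$o$ term $L$ (equivalently, on the derivation $D$), splitting into cases according to the head of $L$. Write $\merge(\trcum(D))=\bullet^{m}\,R_1\cdots R_{m}$, and recall that $R_1=\tr(D)$ while $R_2,\dots,R_m$ are the lists collected by $\trcum$ from the immediate subderivations of $D$. Hence the leftmost reduction fires either at the head of $R_1$ or inside one of the later $R_j$, and $R_1=\tr(D)$ contains a nonterminal precisely when the head of $L$ is a nonterminal (when the head is a symbol $a^r$ we have $\tr(D)=a^0$). Since $L$ is closed of sort $o$, its head is a symbol or a nonterminal; the degenerate case $L=a^0$ admits no reduction, so the statement holds vacuously there.

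\emph{Head a nonterminal.} Write $L=A\,L_1\cdots L_n$, where $A$ is not the initial nonterminal by hypothesis. In $D$ the head is typed $(\emptyset,\tau)$ with $\tau=\Lambda_1\to\cdots\to\Lambda_n\to\r$, so $R_1=\tr(D)=A\restr_\tau\,V_1\cdots V_p$ (the $V$'s being the transformed arguments of sort $\neq o$), and $A\restr_\tau$ is the leftmost nonterminal of $\merge(\trcum(D))$. As a term headed by a nonterminal can only be reduced at its head, the step fires some rule of $\Ss'$ for $A\restr_\tau$, which was built from a derivation $D_K$ of the right-hand side $K$ of a rule $A\,x_1\cdots x_n\to K$ of $\Ss$ under the environment $x_1:\Lambda_1,\dots,x_n:\Lambda_n$. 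I take $L'=K[L_1/x_1,\dots,L_n/x_n]$, so that $L\to_\Ss L'$. To build $D'$ I apply Lemma~\ref{lem:s-subst} once for each closed argument $L_i$, using the subderivations of $D$ that type each $L_i$ against the labeled types of $\Lambda_i$; this yields a derivation $D'$ for $\vdash L':(S,\r)$ with $\trcum(D')$ $\merge$-equivalent to $\trcum(D_K)[\eta]$ followed by the lists $\trcum(D_{i,j})$ for the labeled types actually used in $D_K$. On the other side, $P$ is $R_1$ rewritten to $\merge(\trcum(D_K))[\eta]$ sitting next to $R_2,\dots,R_m$ (the firing substitution coincides with $\eta$ on the non-$o$ formals $\Vars{i}$, and is vacuous on the dropped sort-$o$ arguments). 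Flattening the nested $\bullet$, a $\merge$-equivalence move, shows that $P$ carries $\trcum(D_K)[\eta]$ together with the floating transforms of the arguments, which I then match against $\trcum(D')$.

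\emph{Head a symbol.} Write $L=a^r\,L_1\cdots L_r$ with $r\geq1$; all arguments have sort $o$, $\tr(D)=a^0$, and $\trcum(D)=a^0;\trcum(D_1);\cdots;\trcum(D_r)$ where $D_i$ derives $\vdash L_i:(S_i,\r)$. The leftmost nonterminal lies in $\trcum(D_i)$ for the least $i$ for which $\trcum(D_i)$ contains a nonterminal, and it is also the leftmost nonterminal of $\merge(\trcum(D_i))$ since all earlier list entries are nonterminal-free. Hence $\merge(\trcum(D_i))\to^{\mathit{lf}}_{\Ss'}P_i$, and the induction hypothesis gives $L_i\to_\Ss L_i'$ together with a derivation $D_i'$ for $\vdash L_i':(S_i,\r)$ with $\trcum(D_i')$ $\merge$-equivalent to $P_i$. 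Taking $L'=a^r\,L_1\cdots L_i'\cdots L_r$ (so $L\to_\Ss L'$ by the symbol-congruence rule) and letting $D'$ be $D$ with $D_i$ replaced by $D_i'$, the lists $\trcum(D')$ and $P$ agree up to replacing $\trcum(D_i)$ by $P_i$, and are therefore $\merge$-equivalent.

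I expect the main obstacle to be the accounting in the nonterminal case: the sort-$o$ arguments are silently dropped when the $\Ss'$-rule fires, so their transforms survive as floating entries of $P$, and one must show these coincide with the extra $\trcum(D_{i,j})$-terms that Lemma~\ref{lem:s-subst} attaches to $\trcum(D')$. The delicate point is that $\trcum(D')$ only collects the labeled types \emph{used} in $D_K$, whereas typing $L=A\,L_1\cdots L_n$ forces every labeled type of every $\Lambda_i$ to be present. These reconcile because, after the initial normalization, non-initial rules contain no nullary symbol, so every element of the support $S$ must be produced by a variable occurrence; by the disjointness and separation conditions on supports built into $\Tt^\alpha$, each labeled type with nonempty support is then used exactly once in $D_K$, while those with empty support contribute the empty list by Lemma~\ref{lem:s-empty}. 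This is precisely what makes the two sides $\merge$-equivalent.
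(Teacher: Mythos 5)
Your proposal is correct and follows essentially the same route as the paper's proof: induction on the structure of $L$, with the symbol case handled by restricting the leftmost reduction to one argument's list and applying the induction hypothesis, and the nonterminal case handled by iterating Lemma~\ref{lem:s-subst} and then reconciling the full list of argument transforms with the ones actually used in $D_K$. In particular, your resolution of the key accounting issue --- nonempty-support labeled types must occur in $D_K$ because non-initial rules contain no nullary symbols and supports are disjoint, while empty-support types contribute empty lists by Lemma~\ref{lem:s-empty} --- is exactly the argument the paper gives.
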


\begin{proof}
	We proceed by induction on the structure of $L$.

	Suppose first that $L=a^r\,M_1\,\dots\,M_r$ (where surely $r\geq 1$).
	Then $D$ starts with a sequence of $r$ application rules followed
        by a single-node derivation for $\vdash
        a^r:(\emptyset,\{(S_1,\r)\}\to\dots\to\{(S_r,\r)\}\to\r)$, and
        by derivations $D_i$ for $\vdash M_i:(S_i,\r)$, for each
        $i\in\{1,\dots,r\}$. In particular, $S_1,\dots,S_r$ are
        disjoint and their union is $S$. 
	It holds that $\trcum(D)=(a^0;\trcum(D_1);\dots;\trcum(D_r))$.
	The reduction $\merge(\trcum(D))\to^\mathit{lf}_{\Ss'}P$ concerns one of terms on one of the lists $\trcum(D_i)$,
	and thus we can write $P=\merge(a^0;\lista_1';\dots;\lista_r')$, where for some $l\in\{1,\dots,r\}$ we have $\merge(\trcum(D_l))\to_{\Ss'}^\mathit{lf}\merge(\lista_l')$, and $\trcum(D_i)=\lista_i'$ for $i\neq l$.
	We apply the induction assumption to $M_l$, obtaining a term $M_l'$ and a derivation $D_l'$ for $\vdash M_l':(S_l,\r)$ such that $M_l\to_\Ss M_l'$ and that $\trcum(D_l')$ is $\merge$-equivalent to $\lista_l'$.
	Taking $D_i'=D_i$ and $M_i'=M_i$ for $i\neq l$, and $L'=a^r\,M_1'\,\dots\,M_r'$, we have $L\to_\Ss L'$.
	Out of a node labeled by $\vdash a^r:(\emptyset,\{(S_1,\r)\}\to\dots\to\{(S_r,\r)\}\to\r)$ and of derivations $D_i'$ for $i\in\{1,\dots,r\}$ we compose a derivation $D'$, using the application rule $r$ times.
	We have $\trcum(D')=(a^0;\trcum(D'_1);\dots;\trcum(D'_r))$, and thus $\trcum(D')$ is $\merge$-equivalent to $P$.
	
	The remaining possibility is that $L=A\,N_1^{\alpha_1}\,\dots\,N_k^{\alpha_k}$.
	Then $D$ starts with a sequence of application rules ending in
        a single-node derivation for $\vdash A:(\emptyset,\tau)$ with $\tau=\{\lambda_{1,1},\dots,\lambda_{1,n_1}\}\to\dots,\to\{\lambda_{k,1},\dots,\lambda_{k,n_k}\}\to\r$,
	and in derivations $D_{i,j}$ for $\vdash N_i:\lambda_{i,j}$, for
        each $i\in\{1,\dots,k\}$, $j\in\{1,\dots,n_i\}$. 
        Suppose $\lambda_{i,1}<\dots<\lambda_{i,n_i}$ for every
        $i\in\{1,\dots,k\}$, and
        $\lambda_{i,j}=(S_{i,j},\tau_{i,j})$ for every $i,j$. 
	Since we consider the leftmost reduction of $\merge(\trcum(D))$, it necessarily concerns its part $\tr(D)$ (which is the first term in the list $\trcum(D)$), 
	that consists of the nonterminal $A\restr_\tau$ to which some of the terms $\tr(D_{i,j})$ are applied (namely, terms $\tr(D_{i,j})$ for those $i$ for which $\alpha_i\neq o$).
	This reduction uses some rule $A_\tau\,\Vars{1}\,\dots\,\Vars{k}\to\merge(\trcum(D_K))$, 
	where in $\Ss$ we have a rule $A\,x_1\,\dots\,x_k\to K$, and we have a derivation $D_K$ for $\Gamma\vdash K:(S,\tau)$ with $\Gamma=\bigcup_{i\in\{1,\dots,k\}}\bigcup_{j\in\{1,\dots,n_i\}}\{x_i:\lambda_{i,j}\}$,
	and where $\Vars{i}$ denotes $x_i\restr_{\lambda_{i,1}}\,\dots\,x_i\restr_{\lambda_{i,n_i}}$ if $\alpha_i\neq o$ and the empty sequence of variables if $\alpha_i=o$ (for $i\in\{1,\dots,k\}$). 

	As $L'$ we take the result of applying the rule
$A\,x_1\,\dots\,x_k\to K$ to $L$,
i.e.~$L'=K[N_1/x_1,\dots,N_k/x_k]$. To construct a derivation for it,
we construct derivations $D_{i,K,}$ for $K[N_1/x_1,\dots,N/x_i]$, for
$i=1,\dots,k$. We take $D_{0,K}=D_K$. To obtain $D_{i,K}$ we apply
Lemma \ref{lem:s-subst} to $N_i$, $D_{i-1,K}$ and $D_{i,1},\dots,D_{i,n_i}$.
	The derivation $D_{k,K}$ derives $\Gamma\vdash L':(S,\r)$.
	Let $D'$ be the derivation for $\vdash L':(S,\r)$ obtained from $D_{k,K}$ by removing the type environment $\Gamma$ from type judgments in all its nodes;
	we obtain a valid derivation since $L'$ is closed.

	It remains to see that $\trcum(D')$ is $\merge$-equivalent to $P$.
	Let $\lista$ be the concatenation of lists $\trcum(D_{i,j})$ for all $i\in\{1,\dots,k\}$, $j\in\{1,\dots,n_i\}$
	and let $Q=\merge(\trcum(D_K))[\eta_1,\dots,\eta_k]$ where $\eta_i=(\tr(D_{i,1})/x_i\restr_{\lambda_{i,1}},\dots,\tr(D_{i,n_i})/x_i\restr_{\lambda_{i,n_i}})$ for $i\in\{1,\dots,k\}$;
	we see that $Q$ is the result of applying the considered rule to $\tr(D)$ (substitutions $\eta_i$ for $i$ such that $\alpha_i=o$ can be skipped, since anyway variables $x_i\restr_{\lambda_{i,j}}$ for such $i$ do not appear in $\tr(D_K)$).
	For $i\in\{1,\dots,k\}$, let $j_{i,1}<\dots<j_{i,m_i}$ be those among $j\in\{1,\dots,n_i\}$ for which in $D_K$ there is a node labeled by $\Gamma\vdash x:\lambda_{i,j}$.
	By definition $\trcum(D)=(\tr(D);\lista)$, and thus $P=\merge(Q;\lista)$.
	On the other hand Lemma \ref{lem:s-subst} says that $\trcum(D')$ is $\merge$-equivalent to $Q;\lista'$, where $\lista'$ is the concatenation of $\trcum(D_{i,j_1});\dots;\trcum(D_{i,j_{m_i}})$ for $i\in\{1,\dots,k\}$.
	We notice, however, that $\lista=\lista'$.
	Indeed, if $S_{i,j}=\emptyset$ for some $i,j$, then $\trcum(D_{i,j})$ is empty by Lemma \ref{lem:s-empty}.
	Suppose that $S_{i,j}\neq\emptyset$.
	The rules of the type system ensure that the subset of $\Delta$ in the root of $D_K$ (that is $S$) is the union of those subsets in all leaves of $D_K$.
	We have assumed that symbols from $\Delta$ do not appear in $K$ (they are allowed to appear only in the rule from the initial nonterminal).
	Moreover, $S_{i,j}\subseteq S$, and all other sets $S_{i'\!,j'}$ are disjoint from $S_{i,j}$ (by the definition of types).
	Thus necessarily a node labeled by $\Gamma\vdash x_i:\lambda_{i,j}$ appears in $D_K$ (this is the only way the elements of $S_{i,j}$ can be introduced in $S$).
	This means that our $j$ is listed among $j_{i,1},\dots,j_{i,m_i}$, and hence $\trcum(D_{i,j})$ appears in $\lista'$.
	This proves that $\lista=\lista'$, and in consequence that $\trcum(D')$ is $\merge$-equivalent to $P$.
\end{proof}

\begin{lemma}\label{lem:s-end}
	Let $D$ be a derivation for $\vdash L:(S,\r)$ such that $\merge(\trcum(D))$ is a tree.
	Then $L$ is a tree, and is equivalent to $\merge(\trcum(D))$.
\end{lemma}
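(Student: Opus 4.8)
The plan is to prove the statement by induction on the structure of $L$. Since $D$ derives $\vdash L:(S,\r)$, the term $L$ is closed and of sort $o$, so its head is either a symbol or a nonterminal. This yields three cases — a nullary symbol $L=a^0$, a symbol of positive rank $L=a^r\,M_1\dots M_r$, and a nonterminal head $L=A\,N_1\dots N_k$ — of which the last turns out to be vacuous under the tree hypothesis. In each non-vacuous case I would establish both conclusions at once: that $L$ is a tree and that $L$ is equivalent to $\merge(\trcum(D))$.

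First I would dispose of the nonterminal head. If $L=A\,N_1\dots N_k$, then at its root $D$ consists of a chain of application rules over a single-node derivation for $\vdash A:(\emptyset,\tau)$, whose translation is the nonterminal $A\restr_\tau$ of $\Ss'$. Dropping arguments of sort $o$ never removes the head, so $\tr(D)$ still has $A\restr_\tau$ at its head, and since the sort of $L$ is $o$ the term $\tr(D)$ is the first element of the list $\trcum(D)$. Hence $A\restr_\tau$ occurs in $\merge(\trcum(D))$, contradicting the hypothesis that $\merge(\trcum(D))$ is a tree; so this case cannot arise. The base case $L=a^0$ is immediate: then $S=\{a^0\}$, $\trcum(D)=a^0$, and $\merge(\trcum(D))=\bullet^1\,a^0$, which is a tree containing the single letter $a$ exactly once, as does $L$.

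The main case is $L=a^r\,M_1\dots M_r$ with $r\ge 1$. Here $D$ consists of $r$ application rules over the single-node derivation for $\vdash a^r:(\emptyset,\{(S_1,\r)\}\to\dots\to\{(S_r,\r)\}\to\r)$ together with derivations $D_i$ for $\vdash M_i:(S_i,\r)$, each $M_i$ being closed since $L$ is. Unfolding the definitions (exactly as in the proof of Lemma~\ref{lem:s-step}) gives $\trcum(D)=a^0;\trcum(D_1);\dots;\trcum(D_r)$, so $\merge(\trcum(D))$ is a $\bullet$-headed term whose immediate subterms are $a^0$ and the elements of the lists $\trcum(D_i)$. The key observation is that $\merge(\trcum(D))$ is a tree if and only if every element of every $\trcum(D_i)$ is a tree, which is equivalent to each $\merge(\trcum(D_i))$ being a tree, since the outer $\bullet$ introduced by $\merge$ never creates a nonterminal. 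I would then apply the induction hypothesis to each $M_i$: each $M_i$ is a tree equivalent to $\merge(\trcum(D_i))$, whence $L$ is a tree. For equivalence, note that for every letter $b\neq\bullet$ the number of occurrences of $b$ in $\merge(\trcum(D))$ equals $[b=a]$ plus the summed occurrences of $b$ in the elements of the lists $\trcum(D_i)$; the contribution of each list equals the number of occurrences of $b$ in $\merge(\trcum(D_i))$ (again because $\merge$ only adds $\bullet$-labelled nodes), which by the induction hypothesis equals the number of occurrences of $b$ in $M_i$. Summing, the count of $b$ in $\merge(\trcum(D))$ equals $[b=a]+\sum_i(\text{occurrences of }b\text{ in }M_i)$, which is exactly its count in $L=a^r\,M_1\dots M_r$.

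I do not expect a genuinely hard step: the argument is a structural induction driven entirely by the definitions of $\tr$, $\trcum$, and $\merge$. The only point requiring care is the bookkeeping in the symbol case — namely that passing from $\trcum(D)$ to the individual $\trcum(D_i)$ preserves both treehood and, modulo the auxiliary symbol $\bullet$, the multiset of letters — together with the clean but easy-to-overlook observation that a nonterminal head would survive the translation as $A\restr_\tau$ and thereby prevent $\merge(\trcum(D))$ from being a tree.
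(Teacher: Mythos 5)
Your proof is correct and follows essentially the same route as the paper's: structural induction on $L$, ruling out a nonterminal head because its translation $A\restr_\tau$ would survive into $\merge(\trcum(D))$, and in the symbol case decomposing $D$ into the subderivations $D_i$, applying the induction hypothesis to each $M_i$, and reassembling. You merely spell out details the paper leaves implicit (the separate $r=0$ base case, the explicit head-survival argument, and the letter-counting for equivalence), which is fine.
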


\begin{proof}
	Induction on the structure of $L$.
	If $L$ was of the form $A\,M_1\,\dots\,M_k$, then in $D$ we would necessarily have a node for the nonterminal $A$, which would imply that $\merge(\trcum(D))$ is not a tree, i.e., it contains a nonterminal.
	Thus $L$ is of the form $a^r\,M_1\,\dots\,M_r$.
	Looking at the type system we notice that $D$ necessarily starts with a sequence of $r$ application rules followed by a single-node derivation for 
	$\vdash a^r:(S_0,\tau_0)$, and derivations $D_i$ for $\vdash M_i:(S_i,\r)$, for $i\in\{1,\dots,r\}$.
	Recall that $\trcum(D)=(a^0;\trcum(D_1);\dots;\trcum(D_r))$.
	For $i\in\{1,\dots,r\}$ we know that $\merge(\trcum(D_i))$ is a tree; the induction assumption implies that $M_i$ is a tree, and is equivalent to $\merge(\trcum(D_i))$.
	It follows that $L$ is a tree, and is equivalent to $\merge(\trcum(D))$.
\end{proof}

\begin{corollary}\label{cor:sound2}
	Let $D$ be a derivation for $\vdash L:(S,\r)$, where $L$ does not contain the initial nonterminal.
	If a tree $Q$ can be generated by $\Ss'$ from $\merge(\trcum(D))$, then a tree equivalent to $Q$ can be generated by $\Ss$ from $L$.
\end{corollary}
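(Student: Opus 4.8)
The plan is to derive Corollary~\ref{cor:sound2} as a straightforward consequence of Lemma~\ref{lem:s-step} together with Lemma~\ref{lem:s-end}, by inducting on the length of the reduction sequence in $\Ss'$ that produces the tree $Q$ from $\merge(\trcum(D))$. First I would invoke Lemma~\ref{lem:s-leftmost} to assume without loss of generality that the given reduction $\merge(\trcum(D))\to_{\Ss'}^* Q$ is leftmost, so that Lemma~\ref{lem:s-step} becomes applicable at each step. The induction is on the number $n$ of reduction steps.

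For the base case $n=0$, the term $\merge(\trcum(D))$ is already the tree $Q$, and Lemma~\ref{lem:s-end} tells us directly that $L$ itself is a tree equivalent to $\merge(\trcum(D))=Q$; since $L\to_\Ss^* L$ trivially, we are done. For the inductive step, suppose the leftmost sequence has length $n+1$, and let $\merge(\trcum(D))\to_{\Ss'}^\mathit{lf} P$ be its first step. Applying Lemma~\ref{lem:s-step} to $D$, I obtain a term $L'$ and a derivation $D'$ for $\vdash L':(S,\r)$ with $L\to_\Ss L'$ and $\trcum(D')$ $\merge$-equivalent to $P$. The one subtlety to check is that $L'$ does not contain the initial nonterminal: this holds because $L$ does not, and the standing assumption (stated at the start of Section~\ref{sec:lowering}) is that no rule of $\Ss$ other than the one from the initial nonterminal mentions $A_\mathit{init}$, so reducing a non-initial redex cannot reintroduce it.

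The remaining task is to relate the reduction sequence continuing from $P$ to one continuing from $\merge(\trcum(D'))$. Here I would use Lemma~\ref{lem:s-merge-equiv}: since $\trcum(D')$ and $P$ (viewed as the list underlying $\merge$) are $\merge$-equivalent, and a tree $Q$ is generated by $\Ss'$ from $P=\merge(\trcum(D'')$-list$)$ in $n$ further steps, some tree $Q''$ equivalent to $Q$ can be generated by $\Ss'$ from $\merge(\trcum(D'))$ using the same number $n$ of reductions. Now the induction hypothesis applies to $D'$ (which still has no initial nonterminal) and this length-$n$ generation of $Q''$, yielding a tree $Q'''$ equivalent to $Q''$ generated by $\Ss$ from $L'$. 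Composing $L\to_\Ss L'\to_\Ss^* Q'''$ and chaining the equivalences ($Q'''$ equivalent to $Q''$ equivalent to $Q$, and equivalence is clearly transitive and symmetric since it only compares multisets of non-$\bullet$ letters) gives a tree generated by $\Ss$ from $L$ that is equivalent to $Q$, completing the step.

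The main obstacle I anticipate is the careful bookkeeping of $\merge$-equivalence versus genuine equality along the reduction: Lemma~\ref{lem:s-step} only guarantees $\trcum(D')$ is $\merge$-equivalent to $P$, not equal, so I cannot simply iterate Lemma~\ref{lem:s-step} directly on the successor term. The role of Lemma~\ref{lem:s-merge-equiv} is precisely to absorb this discrepancy while preserving both the reachability of an equivalent tree and the length of the reduction (so the induction on length is well-founded). Provided these two lemmas are applied in the right order, the argument is otherwise routine; no new combinatorial insight beyond what is already packaged in Lemmata~\ref{lem:s-step}, \ref{lem:s-end}, \ref{lem:s-leftmost}, and \ref{lem:s-merge-equiv} is needed.
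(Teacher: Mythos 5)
Your proposal is correct and follows essentially the same route as the paper's proof: induction on the length of the reduction sequence, with Lemma~\ref{lem:s-end} for the base case, and the inductive step combining Lemma~\ref{lem:s-leftmost} (to make the first step leftmost), Lemma~\ref{lem:s-step}, and Lemma~\ref{lem:s-merge-equiv} to transfer the remaining length-$n$ reduction to $\merge(\trcum(D'))$ before invoking the induction hypothesis. You also correctly identify the same side condition the paper checks, namely that $L'$ cannot contain the initial nonterminal because it appears on the right side of no rule.
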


\begin{proof}
	Induction on the smallest length of a sequence of reductions $\merge(\trcum(D))\to_{\Ss'}^* Q$.
	If this length is $0$, we apply Lemma \ref{lem:s-end}.
	Suppose that the length is positive.
	Thanks to Lemma \ref{lem:s-leftmost} we can write $\merge(\trcum(D))\to^\mathit{lf}_{\Ss'}P\to_{\Ss'}^*Q$ (without changing the length of the sequence of reductions).
	Using Lemma \ref{lem:s-step} we obtain a term $L'$ and a derivation $D'$ for $\vdash L':(S,\r)$ such that $L\to_\Ss L'$ and that $\trcum(D')$ is $\merge$-equivalent to $P$.
	The initial nonterminal does not appear in $L'$ since by assumption it does not appear on the right side of any rule.
	Because $P\to_{\Ss'}^* Q$, by Lemma \ref{lem:s-merge-equiv} we also have a sequence of reductions of the same length $\merge(\trcum(D'))\to_{\Ss'}^* Q'$ to some tree $Q'$ equivalent to $Q$;
	to this sequence of reductions we apply the induction assumption.
\end{proof}

\begin{proof}[Proof of Lemma \ref{lem:sound}]
  Let $n=|\Delta|$.  Consider the rule $A_\mathit{init}\!\to\!
A\,e_1^0\,\dots\,e_n^0$ from the initial nonterminal of $\Ss$. 
 Let $D$ be a derivation for
  $\vdash \!A\,e_1^0\,\dots\,e_n^0\!:\!(\Delta,\r)$ that consists of a node
  labeled by $\vdash \!A\!:\!(\emptyset,\tau)$ with
  $\tau=\{(\{e_1^0\},\r)\}\!\to\!\dots\!\to\!\{(\{e_n^0\},\r)\}\!\to\!\r$, and of
  nodes labeled by $\vdash \!e_i^0\!:\!(\{e_i^0\},\r)$ for $i\in\{1,\dots,n\}$,
  joined together by application rules.  
  We see that $\trcum(D)=(A_\tau;e_1^0;\dots;e_n^0)$.
	
	Take a tree $Q$ generated by $\Ss'$.
	Since the only rule of $\Ss'$ from the initial nonterminal is $A_\mathit{init}\to\merge(A_\tau;e_1^0;\dots;e_n^0)$, the tree $Q$ is generated by $\Ss'$ also from $\merge(\trcum(D))$.
	By Corollary \ref{cor:sound2} a tree $Q'$ equivalent to $Q$ can be generated by $\Ss$ from $A\,e_1^0\,\dots\,e_n^0$, and thus also from the initial nonterminal.
\end{proof}

\subsection{Completeness}

The proof of Lemma \ref{lem:compl} is similar to the one of Lemma \ref{lem:sound};
we just need to proceed in the opposite direction. 
Namely, we take a sequence of reductions of $\Ss$ finishing in a
finite tree, and then working from the end of the sequence
we construct backwards a sequence of reductions of $\Ss'$.

There is one additional difficulty that was absent in the previous subsection: we need some kind of uniqueness of derivations.
Indeed, while proceeding forwards from $A\,N_1\,\dots\,N_k$ to $K[N_1/x_1,\dots,N_k/x_k]$, we take a derivation for $N_1$ from the single place where $N_1$ appears in the first term, and we put it in multiple places where $N_1$ appears in the second term.
This time we proceed backwards, so there are multiple places in the
second term where we have a derivation for $N_1$. 
Our type system can accommodate different derivations for the
occurrences of $N_1$ having different types, but for each type we have
to ensure that in different occurrences of $N_1$ with this type the
derivations are  the same. 
Because of that we only consider maximal derivations.

A derivation $D$ is called \emph{maximal} if for every internal node of $D$ the following holds: if the label of this node is $\Gamma\vdash L\,M:(S,\tau)$
and it is possible to derive $\Gamma\vdash M:(\emptyset,\sigma)$ for some $\sigma$, then necessarily this node has a child labeled by $\Gamma\vdash M:(\emptyset,\sigma)$.
The following two lemmata say that it is enough to consider only maximal derivations, and that maximal derivations are unique if we restrict ourselves to labeled types with empty subset of $\Delta$.
We will see later that for other types the multiple occurrence problem
mentioned above does not occur.

\begin{lemma}\label{lem:c-exists-max}
	If $\vdash K:(S,\tau)$ can be derived, then it can be derived by a maximal derivation.
\end{lemma}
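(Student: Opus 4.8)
The plan is to prove Lemma~\ref{lem:c-exists-max} by induction on the structure of the derivation, transforming an arbitrary derivation for $\vdash K:(S,\tau)$ into a maximal one while preserving the judgment at the root. First I would observe what maximality demands: at every application node labeled $\Gamma\vdash L\,M:(S,\tau)$, whenever $\Gamma\vdash M:(\emptyset,\sigma)$ is derivable for some $\sigma$, the node must actually have a child deriving it. Since labeled types of the form $(\emptyset,\sigma)$ carry no symbols from $\Delta$, adding such a child leaves $S$ unchanged (it contributes $\emptyset$ to the union $S_0\cup S_1\cup\dots\cup S_k$) and does not violate the disjointness condition $S_0\cap(S_1\cup\dots\cup S_k)=\emptyset$, because the new set is empty. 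This is the crucial structural slack that makes the statement true: empty-labeled premises are ``free'' to insert.

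The induction would proceed as follows. If $K$ is a nonterminal, symbol, or variable, the derivation is a single node and is vacuously maximal, so there is nothing to do. If $K=L\,M$, the given derivation has a subderivation $D_0$ for $\Gamma\vdash L:(S_0,\{(S_1,\tau_1),\dots,(S_k,\tau_k)\}\to\tau)$ and subderivations for each $\Gamma\vdash M:(S_i,\tau_i)$. By the induction hypothesis I would first replace every subderivation by a maximal one deriving the same judgment; this makes all proper descendants satisfy the maximality condition. It then remains only to fix the root node itself. Here I would enumerate all $\sigma$ for which $\Gamma\vdash M:(\emptyset,\sigma)$ is derivable, and for each such $\sigma$ not already appearing among the existing premises, attach a (maximal, by induction) child deriving $\Gamma\vdash M:(\emptyset,\sigma)$. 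Because $\LT^{\alpha_M}$ is a finite set, there are only finitely many such $\sigma$, so this augmentation terminates and produces a finite tree.

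The remaining point to check is that after this augmentation the root node is still a valid type judgment, i.e.\ that the application rule's side conditions are preserved. Writing the new list of $M$-premises as the old $(S_1,\tau_1),\dots,(S_k,\tau_k)$ together with the added $(\emptyset,\sigma_1),\dots,(\emptyset,\sigma_m)$, I need the combined support $\{\Lambda^{\neq\emptyset}\}$ to remain pairwise disjoint and separated, and the union of all first components to remain disjoint from $S_0$. Since every added set is $\emptyset$, the support $\Lambda^{\neq\emptyset}$ is literally unchanged, the union of first components is unchanged, and separatedness is unaffected; so all conditions carry over verbatim and the resulting judgment at the root is exactly $\Gamma\vdash L\,M:(S,\tau)$ as before.

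The main obstacle I anticipate is not the root augmentation—which is easy thanks to the emptiness slack—but ensuring that the recursive maximalization of subderivations and the added empty-labeled children interact correctly, namely that I do not inadvertently create \emph{duplicate} premises or leave some newly derivable $(\emptyset,\sigma)$ unaccounted for. The cleanest way to handle this is to phrase the induction so that the hypothesis already yields maximal subderivations, and then to define the augmentation at the root as ``add a child for every derivable $(\emptyset,\sigma)$ that is currently missing''; the finiteness of $\LT^{\alpha_M}$ guarantees this is well defined. One subtlety worth flagging explicitly is that the premises of the application rule are indexed by distinct labeled types, so I must add at most one child per $\sigma$, which is automatic since I only add those not already present. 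With this bookkeeping the transformed derivation is maximal and derives the same root judgment, completing the induction.
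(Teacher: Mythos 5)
There is a genuine gap at the heart of your application case. In the type system of Figure~\ref{fig:type}, the premises of the application rule are not free-standing: from $\Gamma\vdash L:(S_0,\{(S_1,\tau_1),\dots,(S_k,\tau_k)\}\to\tau)$ the rule demands a premise $\Gamma\vdash M:(S_i,\tau_i)$ for \emph{exactly} the labeled types $(S_1,\tau_1),\dots,(S_k,\tau_k)$ occurring in the argument set of $L$'s type --- no more, no fewer. So you cannot ``attach a child deriving $\Gamma\vdash M:(\emptyset,\sigma)$'' while, as you stipulate, the subderivation for $L$ still ``derives the same judgment'': the patched root is then not an instance of the application rule at all, because $(\emptyset,\sigma)$ does not belong to $L$'s argument set $\Lambda$. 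Your validity check at the root only examines the side conditions (disjointness of the $S_i$'s, separatedness), which indeed survive since the added sets are empty; but it misses this structural constraint tying the $M$-premises to the type of $L$, which is exactly where the construction breaks.

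Repairing this forces you to change the type of $L$ from $(S_0,\Lambda\to\tau)$ to $(S_0,(\Lambda\cup\{(\emptyset,\sigma)\})\to\tau)$ and hence to re-derive $L$ at this new type; since $L$ may itself be an application, the modification propagates down the entire derivation, so induction ``for the same judgment'' is too weak. This is precisely why the paper proves a strengthened statement by induction on the structure of $K$: writing $\tau=\Lambda_1\to\dots\to\Lambda_n\to\r$, for \emph{any} sets $T_1,\dots,T_n$ such that $\tau'=(\Lambda_1\cup(\{\emptyset\}\times T_1))\to\dots\to(\Lambda_n\cup(\{\emptyset\}\times T_n))\to\r$ is a type, the judgment $\vdash K:(S,\tau')$ has a maximal derivation. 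In the case $K=L\,M$, one takes $T_0$ to consist of the derivable-but-missing types $\sigma$ with $\vdash M:(\emptyset,\sigma)$, and applies the induction hypothesis to $L$ at the \emph{enlarged} type $(\Lambda_0\cup(\{\emptyset\}\times T_0))\to\tau'$. The enlargement can always be absorbed because a nonterminal can be assigned any type $(\emptyset,\tau')$ by a single-node derivation, and symbols never need enlarging (their arguments have sort $o$, and $\LT^o$ contains no empty-labeled elements). Taking all $T_i=\emptyset$ at the top recovers the lemma. Your key observation --- that empty-labeled premises cost nothing against the side conditions --- is correct and is used by the paper too, but without the strengthened induction hypothesis the root-patching step does not produce a derivation.
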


\begin{proof}
	Let $\tau=\Lambda_1\!\to\!\dots\!\to\!\Lambda_n\!\to\!\r$
	and suppose $D$ is a derivation for $\vdash K^\alpha:(S,\tau)$.
	We prove a stronger statement: if $T_1,\dots,T_n$ are such
that $\tau'=((\Lambda_1\cup(\{\emptyset\}\times
T_1))\to\!\dots\!\to(\Lambda_n\cup(\{\emptyset\}\times T_n))\to\r)$ is a
type in $\Tt^\alpha$ then there exists a maximal derivation $D'$ for
$\vdash \! K \!:\!(S,\tau')$.
	This is shown by induction on the structure of $K$.
	Surely $K$ is not a variable, as then a type judgment with empty type environment could not be derived.
	If $K$ is a nonterminal, then $S=\emptyset$, and $\vdash \! K \! : \! (S,\tau')$ (for any $\tau'\!\in\!\Tt^\alpha$) can be derived by
        a single-node derivation; this is a maximal derivation.
	If $K$ is a symbol, its sort is $o^n \!\to\! o$; by definition of $\LT^o$ we know that $T_i=\emptyset$ for every $i\in\{1,\dots,n\}$, which implies $\tau'=\tau$.
	Thus $D$ derives $\vdash \!K\!:\!(S,\tau')$ and is maximal, since it consists of a single node.

	Finally, suppose that $K=L\,M$.
	Then in $D$ we have a subtree $D_i$ deriving $\vdash \!L\!:\!(S_0,\Lambda_0\!\to\!\tau)$, and for every $\lambda\in\Lambda_0$ a subtree $D_\lambda$ deriving $\vdash M:\lambda$.
	Let $T_0$ contain those $\sigma$ for which we can derive $\vdash \!M\!:\!(\emptyset,\sigma)$ but $(\emptyset,\sigma)\not\in\Lambda_0$.
	Then by the induction assumption there exists a maximal derivation $D_0'$ for $\vdash \!L\!:\!(S_0,(\Lambda_0\cup(\{\emptyset\}\times T_0))\to\tau')$,
	and for every $\lambda\in(\Lambda_0\cup(\{\emptyset\}\times T_0))$ there exists a maximal derivation $D_\lambda'$ for $\vdash \! M \!:\!\lambda$.
	By composing these derivations together, we obtain a maximal
        derivation $D'$ for $\vdash \! K \!: \!(S,\tau')$: the side condition
        of the application rule still holds since we have added
        only derivations for labeled types of the form $(\es,\sigma)$.
\end{proof}

\begin{lemma}\label{lem:c-unique-max}
	For every type judgment of the form $\Gamma\vdash K:(\emptyset,\tau)$ there exists at most one maximal derivation $D$ deriving it.
\end{lemma}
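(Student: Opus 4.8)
The plan is to induct on the structure of the term $K$, crucially exploiting the hypothesis that the first component of the labeled type is $\emptyset$. First I would dispose of the base cases. If $K$ is a variable, a nonterminal, or a symbol, then the only rule whose conclusion can be $\Gamma\vdash K:(\emptyset,\tau)$ is the corresponding axiom, so any derivation of it is a single node, completely determined by its root label; uniqueness is immediate. (For a nullary symbol $a^0$ there is in fact no such derivation, since its only type is $(\{a^0\},\r)$ with a nonempty first component, but this only helps the ``at most one'' claim.)

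The interesting case is an application $K=L\,M$. Any maximal derivation $D$ of $\Gamma\vdash L\,M:(\emptyset,\tau)$ must end with the application rule, hence it decomposes into a subderivation $D_0$ of $\Gamma\vdash L:(S_0,\Lambda\to\tau)$ with $\Lambda=\{(S_1,\tau_1),\dots,(S_k,\tau_k)\}$, together with subderivations $D_i$ of $\Gamma\vdash M:(S_i,\tau_i)$, subject to $S_0\cup S_1\cup\dots\cup S_k=\emptyset$. The key first observation is that this union being empty forces $S_i=\emptyset$ for every $i$; consequently every labeled type in $\Lambda$ has an empty first component, and both the label $(\emptyset,\Lambda\to\tau)$ of $D_0$ and each label $(\emptyset,\tau_i)$ of $D_i$ again fall under the induction hypothesis.

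The crux is to show that the set $\Lambda$ itself is uniquely determined by $\Gamma$ and $M$, and this is exactly what maximality buys us. By definition of maximality, every $\sigma$ for which $\Gamma\vdash M:(\emptyset,\sigma)$ is derivable must label a child of the root; since $L$ and $M$ have distinct sorts (that of $L$ being a proper arrow extension of that of $M$), the children labeled $\Gamma\vdash M:(\cdot)$ are \emph{precisely} the subderivations $D_1,\dots,D_k$, i.e.\ the members of $\Lambda$. Thus $\Lambda$ contains the whole set $\{(\emptyset,\sigma)\mid \Gamma\vdash M:(\emptyset,\sigma)\text{ is derivable}\}$; conversely each member of $\Lambda$ is witnessed by its own $D_i$ and so lies in that set. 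Hence $\Lambda$ equals this derivable-type set, which depends only on $\Gamma$ and $M$. (Note that, as all these labeled types have empty first component, $\Lambda^{\neq\emptyset}=\emptyset$, so the disjointness and separation conditions are automatically met and $\Lambda\to\tau$ is a legal type.)

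It remains to assemble uniqueness. Since maximality is a condition imposed on \emph{all} internal nodes, every subderivation of the maximal $D$ is itself maximal. With $\Lambda$ now pinned down, $D_0$ is a maximal derivation of the fixed judgment $\Gamma\vdash L:(\emptyset,\Lambda\to\tau)$ and each $D_i$ is a maximal derivation of the fixed judgment $\Gamma\vdash M:(\emptyset,\tau_i)$, so by the induction hypothesis each of them is unique. Because derivations are taken as ordered trees under the fixed ordering on labeled types, the children occur in a determined order, and the whole tree $D$ is reconstructed unambiguously from $D_0$ and the $D_i$. I expect the main obstacle to be precisely the step that fixes $\Lambda$: one must combine the emptiness of the $\Delta$-component (which collapses all $S_i$ to $\emptyset$) with the maximality requirement, arguing that no derivable $(\emptyset,\sigma)$ for $M$ can be omitted from $\Lambda$ and no spurious one can be added.
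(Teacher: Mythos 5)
Your proof is correct and follows essentially the same route as the paper's: structural induction on $K$, single-node uniqueness for the base cases, and in the application case the observation that emptiness of the union forces all first components to be $\emptyset$, after which maximality pins down $\Lambda$ as exactly the set of derivable $(\emptyset,\sigma)$ for $M$, with the subderivations unique by the induction hypothesis. Your additional remarks (that subderivations of a maximal derivation are maximal, that $L$ and $M$ have different sorts so the $M$-children are unambiguous, and that the fixed ordering on labeled types determines the tree) only make explicit details the paper leaves implicit.
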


\begin{proof}
	By induction on the structure of $K$.
	If $K$ is a variable, a symbol, or a nonterminal, then $D$ necessarily consists of a single node labeled by the resulting type judgment, so it is unique.
	Suppose that $K=L\,M$.
	Then below the root of $D$, labeled by $\Gamma\vdash \! K\!:\!(\emptyset,\tau)$, we have a subtree $D_0$ deriving $\Gamma\vdash \!L\!:\!(\emptyset,\{\emptyset\}\times T\to\tau)$, and for every $\sigma\in T$ a subtree $D_\sigma$ deriving $\Gamma\vdash \!M\!:\!(\emptyset,\sigma)$.
	By maximality, whenever we can derive $\Gamma\vdash \!M\!:\!(\emptyset,\sigma)$ for some $\sigma$, there should be a child of the root of $D$ labeled by $\Gamma\vdash \!M\!:\!(\emptyset,\sigma)$, and then $\sigma\in T$.
	This fixes the set $T$, and thus the set of child labels.
	The derivations $D_0$ and $D_\tau$ for $\tau\in T$ are unique by the induction assumption.
\end{proof}

After these preparatory results about derivations we come back to our
proof. 
The next lemma deals with the base case: for the last term in a
sequence of reductions in $\Ss$ (this term is a narrow tree) we create an
equivalent term that will be the last term in the corresponding sequence of
reductions in $\Ss'$. 

\begin{lemma}\label{lem:c-start}
	Let $S\subseteq\Delta$, and let $K$ be an $S$-narrow tree.
	Then there exists a maximal derivation $D$ for $\vdash K:(S,\r)$ such that $\merge(\trcum(D))$ is a tree equivalent to $K$.
\end{lemma}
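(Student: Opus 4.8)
The plan is to proceed by induction on the structure of the $S$-narrow tree $K$. Since $K$ is a tree it contains no nonterminals and has the form $a^r\,M_1\,\dots\,M_r$ for some letter $a$ and trees $M_1,\dots,M_r$. In the base case $r=0$, so $K=a^0$ is a single leaf and $S=\set{a^0}$. I would take $D$ to be the single-node derivation of $\vdash a^0:(\set{a^0},\r)$ obtained from the rule for nullary symbols. It is maximal, having no internal node, and $\trcum(D)=(a^0)$, so $\merge(\trcum(D))=\bullet^1\,a^0$, which is a tree equivalent to $a^0=K$ (the extra $\bullet$ does not count).

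For the inductive step, the leaves of $K$ are partitioned among the subtrees $M_1,\dots,M_r$. Letting $S_i\subseteq S$ be the set of leaf labels occurring in $M_i$, the sets $S_1,\dots,S_r$ are pairwise disjoint with union $S$; moreover each is nonempty, since every tree has at least one leaf. Thus each $M_i$ is $S_i$-narrow, and by the induction hypothesis there is a maximal derivation $D_i$ for $\vdash M_i:(S_i,\r)$ such that $\merge(\trcum(D_i))$ is a tree equivalent to $M_i$. Nonemptiness of the $S_i$ is exactly what is needed for $\set{(S_1,\r)}\to\dots\to\set{(S_r,\r)}\to\r$ to be a valid type in $\Tt^{o^r\to o}$, since labeled types of sort $o$ must carry a nonempty subset of $\Delta$; disjointness of the $S_i$ gives the separatedness and pairwise-disjointness conditions. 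I would then build $D$ from the single-node derivation $\vdash a^r:(\emptyset,\set{(S_1,\r)}\to\dots\to\set{(S_r,\r)}\to\r)$ by applying the application rule $r$ times, feeding $D_i$ as the unique premise for the $i$-th argument. At the $i$-th application the side condition reads $(S_1\cup\dots\cup S_{i-1})\cap S_i=\emptyset$, which holds because the $S_i$ are disjoint, and the root ends up labeled $\vdash a^r\,M_1\,\dots\,M_r:(S,\r)$ as required.

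It remains to check maximality and equivalence, which I expect to be the only points requiring care. For maximality, the new internal nodes are the $r$ application nodes, and at each of them the applied argument $M_i$ has sort $o$; since a labeled type of sort $o$ always has a nonempty first component, one can never derive $\vdash M_i:(\emptyset,\sigma)$, so the maximality condition holds vacuously. The internal nodes inside the $D_i$ retain their labels and children, hence stay maximal by the induction hypothesis, and maximality is a local condition, so $D$ is maximal. For equivalence, $\trcum(D)=(a^0;\trcum(D_1);\dots;\trcum(D_r))$, so $\merge(\trcum(D))$ is $\bullet^m$ applied to $a^0$ followed by the terms of all the lists $\trcum(D_i)$. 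For any letter $c\neq\bullet$, its number of occurrences in $\merge(\trcum(D))$ is $[c=a]+\sum_i(\text{occurrences of }c\text{ among the terms of }\trcum(D_i))$; since the top $\bullet$ contributes nothing, each inner sum equals the number of occurrences of $c$ in $\merge(\trcum(D_i))$, which by the induction hypothesis equals that in $M_i$. This matches the count $[c=a]+\sum_i(\text{occurrences in }M_i)$ in $K=a^r\,M_1\,\dots\,M_r$, so the two trees are equivalent.

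Finally, $\merge(\trcum(D))$ is genuinely a tree: it is a closed term of sort $o$ built only from the symbols $a^0$ and $\bullet$, with no nonterminals, since $K$ has none and $D$ therefore never uses the nonterminal rule. The only genuinely delicate observation in the whole argument is that the nonemptiness requirement on sort-$o$ labels simultaneously validates the symbol type used for $a^r$ and trivializes the maximality condition at every application node; everything else is routine bookkeeping over the definitions of $\tr$, $\trcum$, and $\merge$.
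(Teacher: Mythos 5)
Your proof is correct and follows essentially the same route as the paper's: induction on the structure of $K = a^r\,M_1\,\dots\,M_r$, a single-node derivation in the base case, and in the inductive step attaching the derivations $D_i$ to the symbol's typing via $r$ applications, with maximality holding vacuously because $\LT^o$ contains no pairs of the form $(\emptyset,\sigma)$. The additional bookkeeping you supply (the side conditions of the application rule, nonemptiness of the $S_i$, and the letter-counting argument for equivalence) is left implicit in the paper but matches it exactly.
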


\begin{proof}
	We proceed by induction on the structure of $K$,
	which is necessarily of the form $a^r\,M_1\,\dots\,M_r$.
	If $r=0$, then $S=\{a^0\}$ and we take $D$ to be the single-node derivation for $\vdash \! a^0 \! :\!(\{a^0\},\r)$; we have $\trcum(D)=a^0$.
	Suppose that $r\!\geq \!1$.
	Then $S$ can be represented as a union of disjoint sets $S_1,\dots,S_r$
	s.t. $M_i$ is a $S_i$-narrow tree for each $i\!\in\!\{1,\dots,r\}$.
	By induction, $\forall i\!\in\!\{1,\dots,r\}$ we obtain a maximal derivation $D_i$ for $\vdash \! M_i \!:\!(S_i,\r)$
	s.t. $\merge(\trcum(D_i))$ is a tree equivalent to $M_i$.
	The derivation $D$ is obtained by deriving $\vdash\!
        a^r\!:\!(\emptyset,\{(S_1,\r)\}\!\to\!\dots\!\to\!\{(S_r,\r)\}\!\to\!\r)$ and
        attaching $D_1,\dots,D_r$ using the application rule $r$ times.
	Because the $M_i$'s are of sort $o$, and $\LT^o$ does not contain pairs of the form $(\emptyset,\sigma)$, the definition of maximality requires no additional children for the new internal nodes of $D$, and hence $D$ is maximal.
	Thus $\trcum(D)=(a^0;\trcum(D_1);\dots;\trcum(D_r))$. 
\end{proof}

We now describe what happens during a substitution.

\begin{lemma}\label{lem:c-subst}
	Suppose that $D'$ is a maximal derivation for $\Gamma\vdash
        K^{\alpha_K}[N/x^{\alpha_x}]:(S,\tau)$, where $N$ is closed. 
        Let $\Lambda_\emptyset$ be the set of those
          $(\emptyset,\sigma)\in\LT^{\alpha_x}$ for which $\vdash
          N:(\emptyset,\sigma)$ can be derived.
	Then there exists a set $\Lambda\in\LT^{\alpha_x}$, a maximal derivation $D_K$ for $\Gamma'\vdash K:(S,\tau)$ with $\Gamma'=\Gamma\cup\{x:\lambda\mid\lambda\in\Lambda\}$,
	and for each $\lambda\in\Lambda$ a maximal derivation $D_\lambda$ for $\vdash N:\lambda$, such that 
	\begin{enumerate}
	\item
          $\Lambda_\emptyset\subseteq\Lambda$,
	\item	for every $\lambda\!\in\!\Lambda\!\setminus\!\Lambda_\emptyset$ in $D_K$ there is a node labeled by $\Gamma'\vdash \!x\!:\!\lambda$,
	\item	the list $\trcum(D')$ is $\merge$-equivalent to the list\\ $\trcum(D_K)[\eta];\trcum(D_{\lambda_1});\dots;\trcum(D_{\lambda_k})$,
		where\\ $\Lambda=\{\lambda_1,\dots,\lambda_k\}$ with $\lambda_1<\dots<\lambda_k$, and\\ $\eta=(\tr(D_{\lambda_1})/x\restr_{\lambda_1},\dots,\tr(D_{\lambda_k})/x\restr_{\lambda_k})$, and
	\item	if $\alpha_K\neq o$ then also $\tr(D')=\tr(D_K)[\eta]$.
	\end{enumerate}
\end{lemma}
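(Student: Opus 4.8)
The plan is to proceed by induction on the structure of $K$, mirroring the forward substitution Lemma~\ref{lem:s-subst} but reconstructing its data backwards. The three cases are the same as there: $K$ a nonterminal, a symbol, or a variable different from $x$; $K=x$; and $K=L\,M$. In the first case $K[N/x]=K$, so I would take $D_K=D'$ and $\Lambda=\Lambda_\es$, choosing for each $\lambda\in\Lambda_\es$ the maximal derivation $D_\lambda$ for $\vdash N:\lambda$, which exists by Lemma~\ref{lem:c-exists-max} and is unique by Lemma~\ref{lem:c-unique-max}. Since $x$ does not occur in $K$, condition~2 holds vacuously, the substitution $\eta$ acts trivially, and every $\trcum(D_\lambda)$ is empty by Lemma~\ref{lem:s-empty}, so conditions~3--4 are immediate.

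In the case $K=x$ we have $K[N/x]=N$, so $D'$ itself derives $\vdash N:(S,\tau)$; I would set $\Lambda=\Lambda_\es\cup\{(S,\tau)\}$, take $D_{(S,\tau)}=D'$, and take $D_K$ to be the one-node derivation for $x:(S,\tau)$. When $S=\es$ this collapses to $\Lambda=\Lambda_\es$; separatedness of $\Lambda$ holds because every member of $\Lambda_\es$ has empty support. Conditions~3--4 again reduce, via Lemma~\ref{lem:s-empty} and the freedom to add or remove the $\bullet^0$ term in a $\merge$-equivalence, to an identity, distinguishing only whether $\alpha_x=o$ (where $\tr(D_K)=\bullet^0$) or not.

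The application case $K=L\,M$ carries the real work. Here $D'$ splits into a maximal subderivation $C_0'$ for $\Gamma\vdash L[N/x]:(S_0,\Lambda_0\to\tau)$ and maximal subderivations $C_j'$ for $\Gamma\vdash M[N/x]:(S_j,\tau_j)$, where $\Lambda_0=\{(S_1,\tau_1),\dots,(S_n,\tau_n)\}$ and $S_0,\dots,S_n$ are disjoint. Applying the induction hypothesis to $C_0'$ and to each $C_j'$ yields sets $\Lambda^{(0)},\dots,\Lambda^{(n)}$, maximal derivations $C_0,\dots,C_n$ of $L$ respectively $M$ (under environments extended by $x:\Lambda^{(j)}$), and maximal derivations of $N$ for each labeled type used. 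I would then set $\Lambda=\bigcup_{j}\Lambda^{(j)}$. This union is consistent: every $\lambda\in\Lambda_\es$ lies in all $\Lambda^{(j)}$ (condition~1 of the hypothesis), and its maximal derivation is \emph{unique} by Lemma~\ref{lem:c-unique-max}, so the candidate derivations $D_\lambda$ agree; every non-empty $\lambda\in\Lambda^{(j)}\setminus\Lambda_\es$ forces a node $x:\lambda$ inside $C_j$ (condition~2), whose support is therefore contained in $S_j$, so by disjointness of the $S_j$ such $\lambda$ occurs in exactly one child and no clash arises. The same disjointness shows $\Lambda$ is separated, hence $\Lambda\in\LT^{\alpha_x}$, and composing $C_0,\dots,C_n$ under the common environment $\Gamma\cup\{x:\lambda\mid\lambda\in\Lambda\}$ by the application rule produces $D_K$.

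The main obstacle is checking that this composed $D_K$ is \emph{maximal}. This rests on the observation that a non-empty binding $x:(T,\sigma)$ with $T\neq\es$ can never be used in a derivation with empty-labeled conclusion, since such a use would contribute $T\neq\es$ to the support. Consequently, weakening the environment of each $C_j$ from $x:\Lambda^{(j)}$ to $x:\Lambda$---which only adds non-empty bindings, as the empty ones already lie in $\Lambda_\es\subseteq\Lambda^{(j)}$---creates no new empty-labeled derivations, so maximality of the $C_j$ survives; and the same fact, together with Lemmata~\ref{lem:c-exists-max} and~\ref{lem:s-subst}, shows that the empty-labeled types of $M$ derivable under $\Gamma\cup\{x:\lambda\mid\lambda\in\Lambda\}$ coincide with those of $M[N/x]$ under $\Gamma$, so the children $C_1,\dots,C_n$ cover exactly the labels required for maximality at the root. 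Once $D_K$ and the $D_\lambda$ are in hand, conditions~3 and~4 follow exactly as in Lemma~\ref{lem:s-subst}: one recombines the lists $\trcum(C_j')$ supplied by the induction hypothesis, discards the empty contributions of empty-labeled types via Lemma~\ref{lem:s-empty}, and propagates the equality $\tr(D')=\tr(D_K)[\eta]$ through the application.
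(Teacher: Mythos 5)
Your proposal is correct and follows essentially the same route as the paper's own proof: the same induction on the structure of $K$, the same choices of $\Lambda$, $D_K$, and $D_\lambda$ in each case (with uniqueness of maximal derivations from Lemma~\ref{lem:c-unique-max} resolving potential clashes, and disjointness of the $S_j$ placing each non-empty $\lambda$ in exactly one child), and the same two-part maximality argument — non-empty bindings cannot be used in empty-support derivations, and root maximality via Lemma~\ref{lem:s-subst} together with maximality of $D'$. The only omission is cosmetic bookkeeping of type environments in the first two cases: $D_K$ must have its environments extended to $\Gamma'$, and $D_{(S,\tau)}$ must have $\Gamma$ removed from its nodes (as it derives a judgment with empty environment), which the paper makes explicit.
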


\begin{proof}
	We proceed by induction on the structure of $K$.
	By Lemma~\ref{lem:c-exists-max}, for $\lambda\in\Lambda_\emptyset$,
	there exists a maximal derivation for $\vdash N:\lambda$,
	which is unique by Lemma \ref{lem:c-unique-max}.
	We denote this unique derivation by $D_\lambda$.
	
	We consider three cases.
	First suppose that $K$ is a nonterminal, or a symbol, or a variable other than $x$.
	In this case $K[N/x]=K$.
	We take $\Lambda=\Lambda_\emptyset$, and to obtain $D_K$ we just extend the type environment in the only node of $D'$ by $\{x:\lambda\mid\lambda\in\Lambda\}$.
	Points 1-2 hold trivially.
	For points 3-4 we observe that neither $\tr(D_K)$ nor $\trcum(D_K)$ contains a variable $x\restr_\lambda$ (so the substitution $\eta$ does not change these terms);
	additionally $\trcum(D_\lambda)$ for $\lambda\in\Lambda$ are empty (Lemma \ref{lem:s-empty}).
	
	Next, suppose that $K=x$.
	We take $\Lambda=\Lambda_\emptyset\cup\{(S,\tau)\}$.
	As $D_K$ we take the single-node derivation for $\Gamma'\vdash x:(S,\tau)$,
	and as $D_{(S,\tau)}$ we take $D'$ in which we remove the type environment from every node.
	Since $N$ is closed, $D_{(S,\tau)}$ remains a valid derivation and it remains maximal (when $(S,\tau)\in\Lambda_\emptyset$, we have already defined $D_{(S,\tau)}$ previously, but these two definitions give the same derivation).
	Points 1-2 hold trivially.
	We have $\tr(D')=\tr(D_{(S,\tau)})$ and $\trcum(D')=\trcum(D_{(S,\tau)})$.
	We see that $\trcum(D_K)$ is either an empty list (when $\alpha_K\neq o$) or $\bullet^0$ (when $\alpha_K=o$), so attaching $\trcum(D_K)[\eta]$ does not change the class of $\merge$-equivalence.
	Moreover $\trcum(D_\lambda)$ for $\lambda\in\Lambda_\emptyset$ are empty (Lemma \ref{lem:s-empty}), which gives point 3.
	If $\alpha_K\neq o$, we have $\tr(D_K)[\eta]=x\restr_{(S,\tau)}[\eta]=\tr(D_{(S,\tau)})=\tr(D')$ (point 4).

	Finally suppose that $K=L^{\alpha_L}\,M^{\alpha_M}$, which is a more involved case.
	In $D'$, below its root, we have a subtree $C_0'$ deriving $\Gamma\vdash L[N/x]:(S_0,\{(S_1,\tau_1),\dots,(S_n,\tau_n)\}\to\tau)$, and for each $j\in\{1,\dots,n\}$ a subtree $C_j'$ deriving $\Gamma\vdash M[N/x]:(S_j,\tau_j)$,
	where $(S_1,\tau_1)<\dots<(S_n,\tau_n)$, and $S_0\cap(S_1\cup\dots\cup S_n)=\emptyset$, and $S=S_0\cup\dots\cup S_n$.
	We apply the induction assumption to all these subtrees, obtaining a maximal derivation $C_0$ for $\Gamma\cup\{x:\lambda\mid\lambda\in\Lambda_0\}\vdash L:(S_0,\{(S_1,\tau_1),\dots,(S_n,\tau_n)\}\to\tau)$ 
	and for each $j\in\{1,\dots,n\}$ a maximal derivation $C_j$ for $\Gamma\cup\{x:\lambda\mid\lambda\in\Lambda_j\}\vdash M:(S_j,\tau_j)$,
	and for each $j\in\{0,\dots,n\}$ and $\lambda\in\Lambda_j$ a maximal derivation $D_{j,\lambda}$ for $\vdash N:\lambda$.

	Let $\Lambda=\bigcup_{j\in\{0,\dots,n\}}\Lambda_j$.
	For $\lambda\in\Lambda_\emptyset$ we have already defined $D_\lambda$, and we have $D_\lambda=D_{j,\lambda}$ for every $j\in\{0,\dots,n\}$.
	Recall that for every $\lambda\in\Lambda_j\setminus\Lambda_\emptyset$ there is a node in $C_j$ deriving the labeled type $\lambda$, and hence the set on the first coordinate of $\lambda$ is a subset of $S_j$ (point 2).
	Since the sets $S_j$ are disjoint, for every $\lambda\in\Lambda\setminus\Lambda_\emptyset$ there is exactly one $j$ for which $\lambda\in\Lambda_j$, and we define $D_\lambda$ to be $D_{j,\lambda}$ for this $j$.
	
	We extend the type environment in every node of every $C_j$ to $\Gamma'=\Gamma\cup\{x:\lambda\mid\lambda\in\Lambda\}$,
	and we compose  these derivations into a single derivation $D_K$ for $\Gamma'\vdash K:(S,\tau)$ using the rule for application.
	In order to see that $D_K$ is maximal, take some internal node of $D_K$.
	Suppose first that this node is contained inside some $C_j$ and it is labeled by $\Gamma'\vdash P\,Q$, and it is possible to derive $\Gamma'\vdash Q:(\emptyset,\sigma)$.
	Then it is as well possible to derive $\Gamma\cup\{x:\lambda\mid\lambda\in\Lambda_j\}\vdash Q:(\emptyset,\sigma)$, 
	because $\Lambda\setminus\Lambda_j$ contains only labeled types with nonempty set on the first coordinate and they anyway cannot be used while deriving a labeled type with empty set on the first coordinate.
	Thus by maximality of $C_j$ our node has a child labeled by $\Gamma'\vdash Q:(\emptyset,\sigma)$.
	Next, consider the root of $D_K$, and suppose that it is possible to derive $\Gamma'\vdash M:(\emptyset,\sigma)$.
	Then by Lemma \ref{lem:s-subst} it is as well possible to derive $\Gamma'\vdash M[N/x]:(\emptyset,\sigma)$, 
	so also $\Gamma\vdash M[N/x]:(\emptyset,\sigma)$ (since $x$ does not appear in $M[N/x]$),
	which by maximality of $D'$ means that $(\emptyset,\sigma)$ is one of $(S_j,\tau_j)$, 
	and thus the root of $D_K$ has a child labeled by $\Gamma'\vdash M:(\emptyset,\sigma)$ (created out of the root of $C_j$).
	
	Points 1, 2 follow from the induction assumption.
	It remains to prove points 3, 4.
	Let $\Lambda=\{\lambda_1 < \dots < \lambda_k\}$ 
	and $\eta=(\tr(D_{\lambda_1})/x\restr_{\lambda_1},\dots,\tr(D_{\lambda_k})/x\restr_{\lambda_k})$.
	Similarly, let $\Lambda_j=\{\lambda_{j,1} < \dots < \lambda_{j,k_j}\}$ 
	and $\eta_j=(\tr(D_{\lambda_{j,1}})/x\restr_{\lambda_{j,1}},\dots,\tr(D_{\lambda_{j,k}})/x\restr_{\lambda_{j,k_j}})$.
	Let us first see that $\tr(D')=\tr(D_K)[\eta]$ (not only if $\alpha_K\neq o$, as in point 4, but also if $\alpha_K=o$).
	By induction we know that $\tr(C_0')=\tr(C_0)[\eta_0]$, as surely $\alpha_L\neq o$. 
	Thus $\tr(C_0')=\tr(C_0)[\eta]$, since $\tr(C_0')$ (hence also $\tr(C_0)[\eta_0]$) does not contain variables $x\restr_\lambda$, so substituting for them does not change anything.
	If $\alpha_M=o$, we simply have $\tr(D')=\tr(C_0')$ and $\tr(D_K)=\tr(C_0)$, so clearly $\tr(D')=\tr(D_K)[\eta]$ holds.
	If $\alpha_M\neq o$, by induction we also know that $\tr(C'_j)=\tr(C_j)[\eta_j]$  $\forall j\in\{1,\dots,n\}$, and thus also $\tr(C'_j)=\tr(C_j)[\eta]$;
	we have $\tr(D')=\tr(C_0')\,\tr(C'_1)\,\dots\,\tr(C'_n)$ and similarly $\tr(D_K)=\tr(C_0)\,\tr(C_1)\,\dots\,\tr(C_n)$, so we also obtain $\tr(D')=\tr(D_K)[\eta]$.
	To show  point 3 we prove that  $\trcum(D')$ is $\merge$-equivalent to the list $\trcum(D_K)[\eta];\trcum(D_{\lambda_1});\dots;\trcum(D_{\lambda_k})$.
	By definition $\trcum(D')$ consists of $\trcum(C_j')$ for $j\in\{0,\dots,n\}$, and if $\alpha_K=o$ then also of $\tr(D')$.
	Similarly, $\trcum(D_K)[\eta]$ equals to
        $\trcum(C_0)[\eta];\dots;\trcum(C_n)[\eta]$, prepended by $\tr(D_K)[\eta]$ if $\alpha_K=o$.
	We have already shown that $\tr(D')=\tr(D_K)[\eta]$.
        By the induction assumption, the list $\trcum(C_j')$ is
        $\merge$-equivalent to the list $\trcum(C_j)[\eta_j];\trcum(D_{\lambda_{j,1}});\dots;\trcum(D_{\lambda_{j,k_j}})$ for all $j\in\{0,\dots,n\}$.
	We can replace here $\eta_j$ by $\eta$, since $\trcum(C_j')$
        does not contain variables $x\restr_\lambda$ with
        $\lambda\in\Lambda\backslash \Lambda_j$.
	To finish the proof it is enough to observe that the concatenation of the lists $\trcum(D_{\lambda_{j,1}});\dots;\trcum(D_{\lambda_{j,k_j}})$ for $j\in\{0,\dots,n\}$ is $\merge$-equivalent to $\trcum(D_{\lambda_1});\dots;\trcum(D_{\lambda_k})$.
	Indeed, for $\lambda\in\Lambda_\emptyset$ by Lemma
        \ref{lem:s-empty} $\trcum(D_\lambda)$ is empty, and, as we have already shown, every
        $\lambda\in\Lambda\setminus\Lambda_\emptyset$ belongs to
        exactly one $\Lambda_j$.
\end{proof}

\begin{restatable}{lemma}{lemmacstep}
	\label{lem:c-step}
	Let $D'$ be a maximal derivation for $\vdash L':(S,\r)$, and let $L$ be a term that does not contain the initial nonterminal of $\Ss$ and such that $L\to_\Ss L'$.
	Then there exists a maximal derivation $D$ for $\vdash L:(S,\r)$ and a term $P$ that is $\merge$-equivalent to $\trcum(D')$ and such that $\merge(\trcum(D))\to_{\Ss'}P$.
\end{restatable}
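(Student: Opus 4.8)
This lemma is the completeness counterpart of Lemma~\ref{lem:s-step}, and the plan is to run the same simulation backwards: rather than firing a step of $\Ss'$ and reconstructing one of $\Ss$, I am handed the step $L\to_\Ss L'$ and a maximal derivation $D'$ of its target, and must produce a maximal derivation $D$ for $L$ together with a matching one-step reduction of $\Ss'$. I would argue by induction on the structure of $L$, splitting on the two possible shapes of a closed term of sort $o$.

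If the head of $L$ is a symbol, then $L=a^r\,M_1\,\dots\,M_r$, the step is a congruence step $M_l\to_\Ss M_l'$, and $D'$ decomposes into an $a^r$-leaf and maximal subderivations $D_i'$ for the arguments (subderivations of a maximal derivation are maximal, since the maximality clause is a condition on every internal node). I apply the induction hypothesis to $D_l'$, recompose the result with the untouched $D_i'$ and the $a^r$-leaf to get $D$, and note that the new internal nodes carry arguments of sort $o$, for which $\LT^o$ has no labeled type of empty support, so the maximality clause is vacuous there and $D$ is maximal. Since $\trcum(D)=(a^0;\trcum(D_1);\dots;\trcum(D_r))$ exposes the reducible subterm flatly among the arguments of the outermost $\bullet$, the same redex fires in $\merge(\trcum(D))$, and closure of $\merge$-equivalence under concatenation yields the required $P$.

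The substantive case is the head reduction $L=A\,N_1\,\dots\,N_k\to_\Ss K[N_1/x_1,\dots,N_k/x_k]=L'$ coming from a rule $A\,x_1\,\dots\,x_k\to K$. I would strip the (closed, hence commuting) substitutions one variable at a time, applying Lemma~\ref{lem:c-subst} $k$ times and enlarging the type environment at each peeling. This yields sets $\Lambda_1,\dots,\Lambda_k$, a maximal derivation $D_K$ for $x_1\!:\!\Lambda_1,\dots,x_k\!:\!\Lambda_k\vdash K\!:\!(S,\r)$, and for each $\lambda\in\Lambda_i$ a maximal derivation $D_{i,\lambda}$ for $\vdash N_i\!:\!\lambda$. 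I then assemble $D$ from a leaf $\vdash\! A\!:\!(\emptyset,\tau)$ with $\tau=\Lambda_1\to\dots\to\Lambda_k\to\r$ and the $D_{i,\lambda}$ applied in the fixed order.

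Three checks, where I expect the real work to lie, make this assembly legitimate. First, $\tau$ is a genuine type: every labeled type in $\Lambda_i$ of nonempty support is used inside $D_K$ (point~2 of Lemma~\ref{lem:c-subst}), and the type system forces the supports at distinct leaves of $D_K$ to be disjoint, so the $\Lambda_i^{\neq\emptyset}$ are pairwise disjoint and $\Lambda_1\cup\dots\cup\Lambda_k$ is separated. Second, $A_\tau\,\Vars{1}\,\dots\,\Vars{k}\to\merge(\trcum(D_K))$ is a rule of $\Ss'$: since no nullary symbol occurs in $K$, the support $S$ at the root of $D_K$ is the union of the supports of its variable-leaves, which by the same argument (every nonempty-support type is used) equals the union of all supports appearing in $\Lambda_1,\dots,\Lambda_k$, matching the rule-creation scheme. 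Third, $D$ is maximal: its only new maximality obligations sit at the nodes feeding $N_i$ into $A$, and point~1 of Lemma~\ref{lem:c-subst} ensures $\Lambda_i$ already contains every empty-support labeled type derivable for $N_i$. Granting these, $\tr(D)$ is precisely $A_\tau$ applied to the $\tr(D_{i,\lambda})$ with $\alpha_i\neq o$, i.e.\ the redex of the above rule; firing it turns $\tr(D)$ into $\merge(\trcum(D_K))[\eta]$, so one step of $\Ss'$ takes $\merge(\trcum(D))$ to a term $P$, and the $\merge$-equivalence between $P$ and $\trcum(D')$ follows by composing the $k$ equivalences from Lemma~\ref{lem:c-subst}, discarding the empty lists arising from empty-support types (Lemma~\ref{lem:s-empty}) and matching the rest without duplication exactly as in Lemma~\ref{lem:s-step}. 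The main obstacle is keeping the supports, the usage bookkeeping, and the maximality conditions synchronized across the $k$ peeling steps, so that the reconstructed $\tau$, the selected rule of $\Ss'$, and the final $\merge$-equivalence all line up.
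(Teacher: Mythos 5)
Your proposal is correct and follows essentially the same route as the paper's own proof: structural induction on $L$, the congruence case handled by recomposition with the untouched subderivations, and the head-reduction case by iterating Lemma~\ref{lem:c-subst} over the $k$ arguments, with the same three verifications (well-formedness of $\tau$ via disjointness of supports, membership of the rule in $\Ss'$ via the absence of nullary symbols in $K$, and maximality via point~1 of Lemma~\ref{lem:c-subst}). The only detail the paper makes explicit that you elide is that $S\neq\emptyset$ forces $k\geq 1$, so that $(\emptyset,\tau)$ is indeed a labeled type.
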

\noindent
The lemma is proved by induction on the structure of $L$; cf. App.~\ref{app:lem:c-step}.
The case when $L$ starts with a nonterminal uses Lemma~\ref{lem:c-subst}.

\begin{corollary}\label{cor:compl2}
	Let $L$ be a term that is of sort $o$ and does not contain the initial nonterminal of $\Ss$, and let $M$ be an $S$-narrow tree generated by $\Ss$ from $L$.
	Then there exists a maximal derivation $D$ for $\vdash L:(S,\r)$ such that a tree equivalent to $M$ can be generated by $\Ss'$ from $\merge(\trcum(D))$.
\end{corollary}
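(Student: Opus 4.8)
The plan is to mirror the proof of Corollary~\ref{cor:sound2}, but reading the reduction sequence of $\Ss$ backwards. I would proceed by induction on the length of a sequence of reductions $L\to_\Ss^* M$ witnessing that the $S$-narrow tree $M$ is generated from $L$, peeling off the first reduction at each step and assembling the corresponding $\Ss'$-reduction by means of Lemma~\ref{lem:c-step}. In the base case, when the length is $0$, the term $L=M$ is itself an $S$-narrow tree, and Lemma~\ref{lem:c-start} directly supplies a maximal derivation $D$ for $\vdash L:(S,\r)$ such that $\merge(\trcum(D))$ is a tree equivalent to $M$; as a tree is generated from itself by the empty reduction, this $D$ already witnesses the claim.

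\emph{Inductive step.} Suppose the length is positive and write $L\to_\Ss L'\to_\Ss^* M$. The term $L'$ is again of sort $o$, and since by our normalising assumption the initial nonterminal occurs on no right-hand side of $\Ss$, a reduction cannot introduce it; hence $L'$ does not contain the initial nonterminal and the induction hypothesis applies to $L'\to_\Ss^* M$. This yields a maximal derivation $D'$ for $\vdash L':(S,\r)$ such that $\Ss'$ generates from $\merge(\trcum(D'))$ a tree $Q$ equivalent to $M$. Now I apply Lemma~\ref{lem:c-step} to the maximal derivation $D'$ and the step $L\to_\Ss L'$, obtaining a maximal derivation $D$ for $\vdash L:(S,\r)$ and a term $P$, whose list $\lista_P$ of top-level arguments under its root $\bullet$-symbol is $\merge$-equivalent to $\trcum(D')$, such that $\merge(\trcum(D))\to_{\Ss'}P$. (Here $P$ is genuinely of the form $\merge(\lista_P)$: the root symbol of $\merge(\trcum(D))$ is not reducible, so the single step takes place inside one argument.)

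It remains to transfer the generation of $Q$ across this $\merge$-equivalence. Applying Lemma~\ref{lem:s-merge-equiv} to the $\merge$-equivalent lists $\trcum(D')$ and $\lista_P$, and using that $\Ss'$ generates $Q$ from $\merge(\trcum(D'))$, I obtain a tree $Q'$ equivalent to $Q$ generated by $\Ss'$ from $P=\merge(\lista_P)$. Prepending the single step $\merge(\trcum(D))\to_{\Ss'}P$ then produces a generation of $Q'$ from $\merge(\trcum(D))$, and by transitivity of equivalence $Q'$ is equivalent to $M$; thus $D$ is the required derivation.

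\emph{Main obstacle.} At the level of this corollary the argument is a routine assembly, so the genuine difficulty has already been discharged upstream, in Lemma~\ref{lem:c-step} and (through it) Lemma~\ref{lem:c-subst}: the backward pass must collapse the \emph{several} occurrences of a substituted closed subterm $N$ back into a \emph{single} premise of the type system, which forces the derivations attached to distinct occurrences of $N$ of the same type to coincide. This is exactly why only \textbf{maximal} derivations are considered, their uniqueness for labeled types with empty $\Delta$-component (Lemma~\ref{lem:c-unique-max}) guaranteeing the required agreement. The only point that still needs care in the present assembly is the $\merge$-equivalence bookkeeping, namely that the single $\Ss'$-step and the $\merge$-equivalence transfer compose correctly; this is precisely what Lemma~\ref{lem:s-merge-equiv} provides, and I note that, unlike the soundness direction, no leftmost-reduction assumption is needed here, since we construct the $\Ss'$-derivation ourselves rather than simulate a given one.
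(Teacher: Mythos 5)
Your proof is correct and follows essentially the same route as the paper's: induction on the length of the reduction sequence $L\to_\Ss^* M$, with Lemma~\ref{lem:c-start} in the base case, and in the inductive step Lemma~\ref{lem:c-step} applied to the first reduction followed by Lemma~\ref{lem:s-merge-equiv} to transfer the generated tree across the $\merge$-equivalence. Your additional remarks (that $P$ indeed has the form $\merge(\lista_P)$, and that no leftmost-reduction assumption is needed in this direction) are accurate clarifications of details the paper leaves implicit.
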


\begin{proof}
	We proceed by induction on the smallest length of the sequence of reductions $L\to_\Ss^* M$.
	If $L=M$, we just apply Lemma \ref{lem:c-start}.
	Suppose that the length is positive, and write $L\to_\Ss L'\to_\Ss^* M$.
	The initial nonterminal does not appear in $L'$ since by assumption it does not appear on the right side of any rule.
	By induction we obtain a maximal derivation $D'$ for $\vdash L':(S,\r)$ such that a tree $Q$ equivalent to $M$ can be generated by $\Ss'$ from $\merge(\trcum(D'))$.
	Then, from Lemma \ref{lem:c-step} we obtain a maximal derivation $D$ for $\vdash L:(S,\r)$ and a term $P$ that is $\merge$-equivalent to $\trcum(D')$ and such that $\merge(\trcum(D))\to_{\Ss'}P$.
	By Lemma \ref{lem:s-merge-equiv} a tree equivalent to $Q$ (and hence to $M$) can be generated by $\Ss'$ from $P$, and hence also from $\merge(\trcum(D))$.
\end{proof}

\begin{proof}[Proof of Lemma \ref{lem:compl}]
	Consider a tree $M$ generated by $\Ss$, and a sequence of reductions of $\Ss$ leading to $M$.
	In the first step the initial nonterminal reduces to $A\,e_1^0\,\dots\,e_{|\Delta|}^0$.
	Corollary \ref{cor:compl2} gives us a derivation $D$ for $\vdash A\,e_1^0\,\dots\,e_{|\Delta|}^0:(\Delta,\r)$ such that $\merge(\trcum(D))$ generates a tree equivalent to $M$.
	Necessarily $\trcum(D)=(A_{\tau_0};e_1^0;\dots;e_{|\Delta|}^0)$, so $\merge(\trcum(D))$ is obtained as the result of the initial rule of $\Ss'$.
\end{proof}


\section{Conclusions}
\label{sec:conclusions}


This work leaves open the question of the exact complexity of the
diagonal problem. The only known lower bound is given by the
emptiness problem, that is the same as for the model-checking
problem~\cite{kobong11}. Our procedure is probably not
optimal, one of the reasons being the use of reflection in
operation Theorem~\ref{thm:HORS:transd}.



\bibliographystyle{abbrv}
\bibliography{local}

\eject
\appendix

\section{Closure under linear transductions and full trio}

In this section we prove that finite tree languages generated by HORSes are closed under \emph{linear} bottom-up tree transductions.

%
%
An FTT is \emph{complete} if every variable $x_i$ appearing on the left side of any transition also appears in the term $t$ on the right side of the transition,
i.e., no subtree is discarded.
%
A \emph{restriction} is a special case of an FTT where there is only one control state, and where every transition is of the form 
$a^r\,(q_,x_1)\,\dots\,(q,x_r)\goesto{}q,b^n\,x_{i_1}\,\dots\,x_{i_n}$ with $1 \leq i_1 < \cdots < i_n \leq r$,
i.e., it relabels the tree and discards some its subtrees.
Clearly, every FTT is the composition of a complete FTT with a restriction.

A \emph{higher-order recursion scheme with states} (HORSS)
is a triple $\Hh = (Q, (q_\mathit{init}, A_\mathit{init}), \Rr)$,
where $Q$ is a finite set of control states,
$(q_\mathit{init}, A_\mathit{init})$ is the \emph{initial process}
with $q_\mathit{init}$ the \emph{initial control state} and $A_\mathit{init}$ the \emph{initial nonterminal} that is of sort $o$,
and $\Rr$ is a finite set of rules of the form
\begin{align*}
	\textrm{(I)}	&&
		p, A^{\alpha_1\to\dots\to\alpha_k\to o}\,x_1^{\alpha_1}\,\cdots\,x_k^{\alpha_k} &\to q, K^o \\
	\textrm{(II)}	&&
		p, a^r \, x_1^o \, \cdots \, x_r^o &\to a^r \, (p_1, x_1) \, \cdots \, (p_r, x_r)
\end{align*}
where the term $K$ uses only variables from the set $\{x_1^{\alpha_1},\dots,x_k^{\alpha_k}\}$.
Rules of type (I) are as in standard HORS except that they are guarded by control states.
Rules of type (II) correspond to a finite top-down tree automaton reading the tree produced by the HORS.
The order of $\Ss$ is defined as the highest order of a nonterminal for which there is a rule in $\Ss$.
%
%
%
Let us now describe the dynamics of HORSSes.
A \emph{process} is a pair $(p, M)$ where $M$ is a closed term of sort $o$ and $p$ is a state in $Q$.
%
%
A \emph{process tree} is a tree built of symbols and processes, where the latter are seen as symbols of rank $0$.
A HORSS $\Hh$ defines a reduction relation $\to_\Hh$ on process trees:
\begin{mathpar}
	\inferrule
		{(p, A\,x_1\,\dots\,x_k \to q, K)\in\Rr(\Hh)}
		{(p, A\,M_1\,\dots\,M_k) \to_\Hh (q, K[M_1/x_1,\dots,M_k/x_k])}
	\and
	\inferrule
		{(p, a^r \, x_1 \, \cdots \, x_r \to a \, (p_1, x_1) \, \cdots \, (p_r, x_r)) \in \Rr(\Hh)}
		{(p, a^r \, M_1 \cdots M_r) \to_\Hh a \, (p_1, M_1) \cdots (p_r, M_r)}
	\and
	\inferrule{
			K_l\to_\Hh K_l'\mbox{ for some }l\in\{1,\dots,r\}
		\\
			K_i=K_i'\mbox{ for all }i\neq l
		}{
			a^r\,K_1\,\dots\,K_r\to_\Hh a^r\,K_1'\,\dots\,K_r'
		}
\end{mathpar}
We are interested in finite trees generated by HORSSes.
A process tree $T$ is a \emph{tree} if it does not contain any process.
A HORSS $\Hh$ \emph{generates} a tree $T$ from a process $(p, M)$ if $(p, M)\to_\Hh^* T$.
The language $\lang \Hh$ is the set of trees generated by the initial process $(q_\mathit{init}, A_\mathit{init})$.

%
A HORS can be seen as a special case of a HORSS where $Q$ has only one state $\hat p$
with the trivial rule $\hat p, a \, x_1 \cdots x_k \to a \, (\hat p, x_1) \cdots (\hat p, x_k)$.
It is well known that this extension does not increase expressive power of HORS,
in the sense that given a HORSS $\Hh$
it is possible to construct a (standard) HORS $\Ss$ of the same order as $\Hh$ (but where the arity of nonterminals is increased)
such that $\lang \Hh = \lang \Ss$ \cite{HagueMurawskiOngSerre:Collapsible:2008}.
However, while combining a HORS with an FTT it is convenient to create a HORSS, as its states can be used to simulate states of the FTT.

On the other hand, it is also useful to have the input HORS in a special normalized form, defined next.
We say that a HORS is \emph{normalized} if every its rule is of the form 
\[ A\, x_1\,\dots\, x_p \to h\, (B_1\, x_1\,\dots\, x_p)\, \dots\, (B_r\, x_1\,\dots\, x_p)\ , \]
where $r \geq 0$,
$h$ is either one of the $x_i$'s, a nonterminal, or a symbol,
and the $B_j$'s are nonterminals.
The arity $p$ may be different in each rule.
We will not detail the rather standard procedure of transforming any HORS into a normalized HORS without increasing the order.
It amounts to splitting every rule into multiple rules, using fresh nonterminals in the cut points.

\begin{lemma}
	\label{lemma:HORS:transd:complete}
	HORSes are affectively closed under complete linear tree transductions.
\end{lemma}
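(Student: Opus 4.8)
Lemma~\ref{lemma:HORS:transd:complete} asserts that HORSes are effectively closed under \emph{complete} linear tree transductions. Recall that an FTT is complete when no subtree is discarded, and linear when each variable occurs at most once on the right of any transition. Together, completeness and linearity mean each input variable $x_i$ appears \emph{exactly once} on the right-hand side, so the transducer merely relabels and rearranges nodes without duplicating or deleting subtrees.
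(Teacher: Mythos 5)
Your proposal stops exactly where the proof has to begin: after restating the lemma and unpacking the definitions of \emph{complete} and \emph{linear}, no construction or argument is given, so nothing is established. The entire content of the lemma is missing: given a HORS $\Ss$ and a complete linear FTT $\Aa$, one must effectively build a scheme of the same order whose language is $\trans\Aa(\lang\Ss)$, and then prove this language equality. The paper does this by (i) normalizing $\Ss$ so that every rule has the shape $A\,x_1\dots x_p \to h\,(B_1\,x_1\dots x_p)\dots(B_r\,x_1\dots x_p)$ with $h$ a variable, a nonterminal, or a symbol; (ii) building a HORS \emph{with states} (HORSS) whose control states are those of $\Aa$: rules whose head $h$ is not a symbol are copied for every state $p$, rules whose head is a symbol $a^r$ are combined with each transition of $\Aa$ reading $a^r$ --- the output term $t$ of the transition is emitted, with each pair $(p_i,x_i)$ replaced by the subterm $B_i\,\vec y$ guarded by state $p_i$ --- and $\varepsilon$-transitions of $\Aa$ become state-changing rules; (iii) invoking the fact that a HORSS can be converted into an ordinary HORS of the same order; and (iv) proving both inclusions between $\lang\Hh$ and $\trans\Aa(\lang\Ss)$ by induction on the length of derivations. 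None of these steps, nor any substitute for them, appears in your text.

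It is also worth flagging that your reading of the hypotheses identifies exactly where they earn their keep, but you never use it. Linearity is what makes the product construction sound: if a variable $x_i$ occurred twice in the output term $t$, the resulting rule would contain two copies of $B_i\,\vec y$, and each copy could rewrite independently to a \emph{different} tree, producing outputs the transducer cannot produce. Completeness is what lets this lemma avoid the productivity problem handled later (Lemmata~\ref{lemma:HORS:productive} and~\ref{lem:HORS:restriction}): if a subtree were discarded, the construction would additionally have to certify that the discarded subterm generates at least one tree on which $\Aa$ has a run, which requires the reflection machinery. Finally, your gloss that the transducer ``merely relabels and rearranges nodes'' undersells it: the output term $t$ may introduce arbitrary new tree structure above the variables (as in the paper's example producing $b^2\,(c^1\,x_1)\,x_2$), which is precisely why the construction must splice $t$ into right-hand sides of rules rather than merely relabel symbols.
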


\begin{proof}
	Let $\Ss$ be a HORS and let $\Aa$ be a linear FTT.
	We construct a HORSS $\Hh$ s.t. $\lang \Hh = \trans \Aa (\lang \Ss)$.
	The set of control states of $\Hh$ is taken to be the set of control states of the FTT $\Aa$.
	As noted above, we can assume w.l.o.g.~that $\Ss$ is normalized.	
	
	First, if $\Ss$ contains a rule $A\, \vec x \to h\, M_1 \cdots M_r$ with $h$ not a symbol,
	then $\Hh$ contains the rule $p, A\, \vec x \to p, h\, M_1 \cdots M_r$ for every control state $p$.
	
	Next, for every such rule with $h$ being a symbol $a^r$, and for every transition of $\Aa$ having $a^r$ on the left side, 
	we take to $\Hh$ one rule illustrated by means of a representative example:
	if $\Aa$ contains a transition
	\[ a^2 \, (p_1, x_1) \, (p_2, x_2) \goesto {} p, b^2 \, (c^1 \, x_1) \, x_2 \]
	and $\Ss$ contains a rule
		$A \, \vec y \to a^2 \, (B_1 \, \vec y) \, (B_2 \, \vec y)$,
	then $\Hh$ contains the rule
	\[ p, A \, \vec y \to b^2 \, (c^1 \, (p_1, B_1 \, \vec y))\, (p_2, B_2 \, \vec y) \]
	%
	Technically speaking, this is not a HORSS rule,
	but it can be turned into one type (I) rule 
	and several type (II) rules by adding new states.
	
	Finally, we also add rules corresponding to $\varepsilon$-transitions of $\Aa$, what is again defined by an example:
	if $\Aa$ contains a transition
	\[ p , x_1 \goesto {} q, a^1 \, x_1 \]
	then, for every nonterminal $A$ of $\Ss$,
	$\Hh$ contains the rule
	\[ q, A \, \vec y \to a^1 \, (p, A \, \vec y) \]
	%
	
	The two inclusions needed to show that $\lang \Hh = \trans \Aa (\lang \Ss)$
	can be proved straightforwardly by induction on the length of derivations.
\end{proof}

The difficulty in proving closure under possibly non-complete FTTs
is that when combining a (non-complete) FTT transition of the form e.g.~$a^2 \, (p,x_1) \, (p,x_2) \goesto {} p, b^1 \, x_1$
with a HORS rule of the form e.g.~$A \, \vec y \to a^2 \, (B_1 \, \vec y) \, (B_2 \, \vec y)$,
we cannot simply discard the subterm $B_2 \, \vec y$, but we have to make sure that it generates at least one tree on which the FTT has some run.
While concentrating on closure only under restrictions, one think becomes easier: a restriction has a run almost on every tree.
There is, however, one exception: a restriction $\Aa$ does not have a run on a tree that uses a symbol for which $\Aa$ has no transition.
We deal with this in Lemma \ref{lemma:HORS:restrict-to-symbols}, below.
However, knowing that on every tree there is a run of $\Aa$ is not enough;
we also need to know that $B_2 \, \vec y$ generates at least one tree.
This problem is resolved by Lemma \ref{lemma:HORS:productive}.

\begin{lemma}
	\label{lemma:HORS:restrict-to-symbols}
	For every set of (ranked) symbols $\Theta$ and every HORS $\Ss$ we can build a HORS $\Ss'$ of the same order, such that $\Ll(\Ss')$ contains those trees from $\Ll(\Ss)$ which use only symbols from $\Theta$.
\end{lemma}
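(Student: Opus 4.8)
The plan is to realize the desired language as the image of $\Ll(\Ss)$ under a particularly simple \emph{complete} linear transduction, so that the lemma becomes an immediate instance of Lemma~\ref{lemma:HORS:transd:complete} and none of the productivity machinery needed for genuinely discarding restrictions is required. The key is to view ``using only symbols from $\Theta$'' not as a discarding operation but as an identity relabeling that is \emph{total on good symbols and undefined on bad ones}.

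Concretely, I would let $T_\Theta$ denote the (regular) tree language of all trees every node of which is labeled by a symbol from $\Theta$, and define a linear FTT $\Aa$ with a single state $\top$ that is also final, equipped with one transition
\[
	a^r\,(\top,x_1)\,\cdots\,(\top,x_r)\goesto{}\top,\;a^r\,x_1\,\cdots\,x_r
\]
for each $a^r\in\Theta$, and with \emph{no} transition for symbols outside $\Theta$. This $\Aa$ simply copies its input while checking, node by node, that every label lies in $\Theta$; it has an accepting run on a tree $t$ if and only if $t\in T_\Theta$, and in that case its only output is $t$ itself. Hence $\trans{\Aa}$ is the identity relation restricted to $T_\Theta$, and $\trans{\Aa}(\Ll(\Ss))=\Ll(\Ss)\cap T_\Theta$, which is exactly the set of trees of $\Ll(\Ss)$ using only symbols from $\Theta$.

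The observation that makes this work is that $\Aa$ is linear (each variable occurs once on each right-hand side) and, crucially, \emph{complete}: every variable on a left-hand side reappears on the corresponding right-hand side, so $\Aa$ discards no subtree. Therefore I may apply Lemma~\ref{lemma:HORS:transd:complete} to $\Ss$ and $\Aa$, obtaining a HORS $\Ss'$ of the same order as $\Ss$ with $\Ll(\Ss')=\trans{\Aa}(\Ll(\Ss))=\Ll(\Ss)\cap T_\Theta$, as required.

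Because the transduction is complete, I do not expect a real obstacle; the two inclusions follow verbatim from the correctness of the construction in Lemma~\ref{lemma:HORS:transd:complete}. The two points I would be careful to check are: (i) that the \emph{absence} of a transition for a bad symbol is precisely what eliminates trees containing such a symbol, since a derivation of $\Ss$ that would place a symbol outside $\Theta$ at a node of the output has no matching rule in the combined HORSS and thus dies; and (ii) that symbols outside $\Theta$ occurring only inside arguments which $\Ss$ itself discards never reach the output tree and therefore, correctly, do not trigger blocking. The genuine difficulty — the productivity analysis via reflection that is forced by non-complete restrictions, where a discarded subtree must still be shown to generate at least one tree — simply does not arise here, precisely because $\Aa$ retains every subtree of its input.
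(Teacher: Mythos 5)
Your proposal is correct, and it takes a genuinely different route from the paper's own proof. The paper proves this lemma directly and elementarily: it normalizes $\Ss$ (so every rule has the form $A\,\vec y \to h\,(B_1\,\vec y)\,\dots\,(B_r\,\vec y)$) and then simply deletes every rule whose head $h$ is a symbol outside $\Theta$; correctness is immediate because in a normalized scheme each such rule, when used, necessarily creates a node labeled by the offending symbol, so deleting these rules removes exactly the derivations producing trees outside your $T_\Theta$. You instead realize $\Ll(\Ss)\cap T_\Theta$ as the image of $\Ll(\Ss)$ under a one-state, identity-on-$\Theta$ transducer, which is linear and complete, and invoke Lemma~\ref{lemma:HORS:transd:complete}; this is legitimate and non-circular, since Lemma~\ref{lemma:HORS:transd:complete} is proved independently of Lemma~\ref{lemma:HORS:restrict-to-symbols}, and your two checkpoints (i) and (ii) are exactly the right ones. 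Note that if you unfold the HORSS construction of Lemma~\ref{lemma:HORS:transd:complete} on your transducer, you recover essentially the paper's construction: normalized rules with heads in $\Theta$ (or with non-symbol heads) survive with trivial state decoration, while rules with heads outside $\Theta$ are dropped because no transition matches them. What your packaging buys is that the correctness argument comes for free from the already-proved closure lemma; what it costs is routing through the HORSS construction and the HORSS-to-HORS conversion, plus one point you use only implicitly: Lemma~\ref{lemma:HORS:transd:complete} is stated merely as effective closure, and the fact that it preserves the order of the scheme---which the present lemma explicitly requires---must be read off from its construction rather than from its statement.
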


\begin{proof}
	We start by assuming w.l.o.g.~that $\Ss$ is normalized.	
	Then, we simply remove from $\Ss$ all rules that use symbols not in $\Theta$.
	Then surely trees in $\Ll(\Ss')$ use only symbols from $\Theta$.
	On the other hand, since $\Ss$ was normalized, every removed rule was of the form 
	$A \, \vec y \to a^r \, (B_1 \, \vec y)\,\dots \, (B_r \, \vec y)$ (with $a^r\not\in\Theta$), 
	so whenever such a rule was used, an $a^r$-labeled node was created.
	In consequence, removing these rules has no influence on generating trees that use only symbols from $\Theta$.
\end{proof}

A HORS $\Ss=(A_\mathit{init},\Rr)$ is \emph{productive}
if, whenever we can reduce $A_\mathit{init}$ to a term $M$ (which may contain nonterminals),
then $M$ can be reduced to some finite tree.
By using the reflection operation~\cite{broadbent10:_recur_schem_logic_reflec},
we can easily turn a HORS into a productive one.
\begin{lemma}
	\label{lemma:HORS:productive}
	For every HORS $\Ss$ we can build a productive HORS $\Ss'$ of the same order generating the same trees.
\end{lemma}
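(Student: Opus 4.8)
The plan is to make productivity explicit inside the scheme by means of the reflection operation~\cite{broadbent10:_recur_schem_logic_reflec}, and then to delete the parts of the scheme that can never yield a finite tree. Call a closed term $M$ of sort $o$ \emph{productive} if $M\to_\Ss^* t$ for some finite tree $t$. The scheme $\Ss'$ that I would construct will have the property that every term reachable from $A_\mathit{init}$ is productive, while $\lang{\Ss'}=\lang{\Ss}$; productivity of $\Ss'$ then follows immediately, since every reachable term, being productive, reduces to some finite tree.

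First I would identify productivity as an automaton-recognizable property of terms. A closed term of sort $o$ reduces to some finite tree exactly when the language it generates is nonempty, and nonemptiness of the language generated by a HORS is decidable and characterizable by the intersection-type system for tree generation; equivalently, encoding the nondeterministic choices of $\Ss$ as branching and nontermination as a $\bot$-leaf, productivity becomes a simple reachability condition (``some choice-branch reaches a finite, $\bot$-free subtree'') on the single infinite value tree of the resulting deterministic scheme, hence a property recognizable by a tree automaton. The delicate point, and the main obstacle, is that productivity of a term $A\,M_1\cdots M_k$ is \emph{not} a property of the head nonterminal $A$ alone: it depends on which of the higher-order arguments $M_i$ are themselves productive, and the $M_i$ range over arbitrary terms. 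A finite abstraction of this higher-order dependency is therefore needed, and supplying exactly such a finite abstraction while preserving the order of the scheme is precisely what the type-based reflection construction does.

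Next I would apply reflection to $\Ss$ with respect to the productivity property, obtaining a HORS $\Ss''$ of the same order that generates the same trees but in which every nonterminal is refined by a productivity type that records, for each of its arguments, whether that argument is productive, and whether the whole term is productive. Since this decoration is finite and is maintained along every reduction, the distinction between productive and unproductive subterms is purely syntactic in $\Ss''$.

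Finally, to obtain $\Ss'$ I would delete from $\Ss''$ every rule whose right-hand side is annotated as unproductive, and forbid passing an argument annotated as unproductive where a productive one is required; all remaining nonterminals and rules are those carrying a productive annotation. The removed rules could only have begun derivations that never reach a finite tree, so deleting them removes no generated tree and $\lang{\Ss'}=\lang{\Ss''}=\lang{\Ss}$, while by construction every term reachable from $A_\mathit{init}$ in $\Ss'$ is productive. The two inclusions needed for $\lang{\Ss'}=\lang{\Ss}$ and the verification that the pruning preserves the order are then routine inductions on the length of derivations; the only genuinely nontrivial ingredient is the reflection step that turns the semantic, argument-dependent notion of productivity into the finite syntactic annotation carried by $\Ss''$.
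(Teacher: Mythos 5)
Your proposal is correct and follows essentially the same route as the paper: the paper also determinizes the scheme by encoding nondeterministic choice as explicit $+$-nodes, interprets productivity in the two-element lattice (a finite monotone applicative structure), applies the reflection operation of \cite{broadbent10:_recur_schem_logic_reflec} to make these semantic annotations syntactic while preserving the order, and then prunes unproductive parts (there by replacing annotated $+$-nodes with projections or a nondeterministic-choice nonterminal, which is just your rule-deletion step phrased at the level of choice nodes). The only differences are presentational, so no further comparison is needed.
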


\begin{proof}
	First, we construct a deterministic scheme $\mathcal{T}$ from
	the non-deterministic scheme $\mathcal{S}$.
	To $\Tt$ we will be then able to apply a reflection transformation.
	We use a letter $+$ to eliminate
	non-determinism.
	For every nonterminal $A$ of $\mathcal S$ we collect all its rules:
	$A\, x_1\,\dots\, x_p \to K_1, \dots, A\, x_1\,\dots\, x_p\to K_m$, and add to 
	$\mathcal{T}$ the single rule:
	$$A\,x_1\,\dots\, x_p \to +^2\, K_1\, (+^2\, K_2\, (\dots\, (+^2\, K_{m-1}\, K_m)\dots))\,.$$ 
	The (possibly infinite) tree generated by $\mathcal{T}$ represents the
	language of trees generated from $\mathcal{S}$ since the
	non-deterministic choices that can be made in $\mathcal{S}$ are
	represented by nodes labeled by $+$ in the tree generated by
	$\Tt$.
	In this latter tree, we can find every tree generated by $\Ss$ using a
	finite number of rewriting steps consisting of replacing a subtree
	rooted in $+$ by one of its children. 

	We now take the monotone applicative structure
	(see~\cite{salvati15:_using,kobele2015}) $\mathcal{M} =
	(\mathcal{M}_\alpha)_{\alpha\in\mathrm{Sorts}}$ where  $\mathcal{M}_o$
	is the two element lattice, with maximal element $\top$ and minimal
	element $\bot$. 
	Intuitively, $\top$ means nonempty language and $\bot$ means empty language.
	We interpret $+^2$ as the join (max) of its arguments, and every other symbol
	$a^r$ as the meet (min) of its arguments; in particular symbols of rank $0$ are
	interpreted as $\top$.
	This allows us to define the semantics $\sem{M,\chi,\nu}$ of a
	 term given a valuation $\chi$ for nonterminals and  $\nu$
	for variables (these valuations assign to a variable/nonterminal a value in $\Mm$ of
	an appropriate sort).
	The definition of $\sem{M,\chi,\nu}$ is standard, in particular $\sem{K\,L,\chi,\nu} = \sem{K,\chi,\nu}(\sem{L,\chi,\nu})$.

	The meaning of nonterminals in $\Tt$ is given by the least fixpoint
	computation. 
	For a valuation $\chi$ of the nonterminals of $\mathcal{T}$, we write
	$\mathcal{T}(\chi)$ for the valuation $\chi'$ such that
	$\chi'(A)=\lambda g_1.\cdots.\lambda g_p.\sem{K,\chi,[g_1/x_1,\dots,g_p/x_p]}$
	where $A\, x_1\,\dots\, x_p \to K$ is the rule for $A$ in $\mathcal{T}$.
	Then the meaning of nonterminals is given by the valuation that is the
	least fixpoint of this operator:  $\chi_\Tt=\bigwedge\set{\chi :
	  \Tt(\chi)\subseteq \chi}$. 
	Having $\chi_\Tt$ we can define the semantics of a term $M$ in a valuation
	$\nu$ of its free variables as $\sem{M,\nu}=\sem{M,\chi_\Tt,\nu}$. 

	Least fixed point models of schemes induce an interpretation on 
	infinite trees by finite approximations. An infinite tree has value $\top$
	iff  it represents a non-empty
	language~\cite{kobele2015}.
	The important point is that the semantics of a term and that of the
	infinite tree generated from the term coincide. 

	We can now apply to $\Tt$ the reflection
	operation~\cite{broadbent10:_recur_schem_logic_reflec}
	with respect to the above interpretation  $\mathcal{M}$.
	The result is a scheme $\Tt'$ that generates the same tree as
	$\mathcal{T}$ but where every node is additionally marked by a tuple
	$(a_1,\dots, a_r,b)$ where $a_1$, \dots, $a_r$ is the semantics of the
	arguments of that node (i.e., subtrees rooted at its children) and $b$ is the semantics of the subtree rooted
	at that node.  
	What is important here is that $\mathcal{T'}$ has the same order as
	$\mathcal{T}$ which is the same as that of $\mathcal{S}$. 
	The additional labels allow us to remove unproductive parts of the
	tree generated by $\Tt'$.
	For this we introduce two more nonterminals $\Pi_1$ and
	$\Pi_2$  of sort $o\to o \to o$.
	We then add the rules $\Pi_1\, x_1\, x_2 \to x_1$,
	$\Pi_2\, x_1\, x_2 \to x_2$.
	Now we 
	replace every occurrence of $+^2$ labeled by $(\top,\bot,\top)$ by
	$\Pi_1$, and every occurrence of $+^2$ labeled by $(\bot,\top,\top)$ by
	$\Pi_2$. 
	After these transformations we obtain a scheme $\mathcal{T}''$ 
	generating a tree which contains exactly those nodes of $\mathcal{T}'$
	that are labeled with $(\top,\dots,\top,\top)$.  

	We convert
	$\mathcal{T}''$ into a HORS $\mathcal S'$
	whose language is the same as that of $\mathcal{S}$.
	For this we replace every remaining occurrence of $+^2$ (thus labeled by $(\top,\top,\top)$) by a nonterminal $C$ of sort
	$o\to o\to o$, and we add two rewrite rules $C\, x\, y \to x$ and
	$C\, x\, y\to y$. We also remove the additional labels from symbols.
	By construction, $\mathcal S'$ is productive and $\lang{\Ss'} \subseteq \lang{\Ss}$.
	Moreover, since we only eliminated non-productive nonterminals,
	$\lang{\Ss'} = \lang{\Ss}$.
\end{proof}

\begin{lemma}
	\label{lem:HORS:restriction}
	Let $\Ss$ be a productive HORS, and $\Aa$ a restriction such that for every symbol $a^r$ appearing in any tree generated by $\Ss$ there is a transition of $\Aa$ having $a^r$ on the left side.
	Then we can build a HORS $\Ss'$ whose language is $\Tt(\Aa)(\Ll(\Ss))$.
\end{lemma}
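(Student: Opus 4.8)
The plan is to fold the restriction $\Aa$ directly into $\Ss$, reusing the machinery of Lemma~\ref{lemma:HORS:transd:complete} but now permitting rules that \emph{drop} subterms, with productivity guaranteeing that no information is lost when a subterm is dropped. First I would assume w.l.o.g.\ that $\Ss$ is normalized: normalizing a productive scheme keeps it productive, since the fresh nonterminals introduced at the cut points have forced expansions that reconstruct the original right-hand sides, so every reachable term still reduces to a finite tree. I would then build a HORSS $\Hh$ whose control states are the single state $q$ of $\Aa$ together with auxiliary states used to decompose right-hand sides, and finally convert $\Hh$ into a HORS $\Ss'$ of the same order with $\lang{\Ss'}=\lang\Hh$, exactly as recalled after the definition of HORSS~\cite{HagueMurawskiOngSerre:Collapsible:2008}.

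The rules of $\Hh$ are as follows. For a normalized rule $A\,\vec y\to h\,M_1\cdots M_r$ with $h$ a variable or a nonterminal, keep it verbatim, guarded by $q$. For a rule $A\,\vec y\to a^r\,(B_1\,\vec y)\cdots(B_r\,\vec y)$ and every transition $a^r\,(q,x_1)\cdots(q,x_r)\goesto{}q,\,b^n\,x_{i_1}\cdots x_{i_n}$ of $\Aa$, add the rule $q,\,A\,\vec y\to b^n\,(q,B_{i_1}\,\vec y)\cdots(q,B_{i_n}\,\vec y)$, thereby discarding the subterms $B_j\,\vec y$ for $j\notin\{i_1,\dots,i_n\}$; as in Lemma~\ref{lemma:HORS:transd:complete}, this compound rule is realized by one type-(I) and several type-(II) rules with fresh states. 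Since a restriction has no $\varepsilon$-transitions, there is nothing else to add, and the hypothesis that $\Aa$ has a transition for every symbol occurring in $\lang\Ss$ ensures that at every symbol node of a generated tree at least one such rule applies.

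Correctness amounts to two inclusions, both by induction on the length of derivations. For $\trans\Aa(\lang\Ss)\subseteq\lang{\Ss'}=\lang\Hh$: given $t\in\lang\Ss$, a run of $\Aa$ on $t$, and its output $s$, I would replay the $\Ss$-derivation of $t$ inside $\Hh$, at each symbol-creating step using the $\Hh$-rule attached to the transition that the run takes there; the children that the run discards are simply never generated by $\Hh$, so this direction does not use productivity. For the converse, $\lang\Hh\subseteq\trans\Aa(\lang\Ss)$, I would take an $\Hh$-derivation of a tree $s$ and reconstruct \emph{both} a tree $t\in\lang\Ss$ and a run of $\Aa$ on $t$ with output $s$, by reinstating every subterm dropped along the way. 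This is precisely where productivity is indispensable: each dropped subterm is a closed term of sort $o$ occurring in a reachable $\Ss$-term, hence by productivity reduces to some finite tree $t_j$; all symbols of $t_j$ lie in $\lang\Ss$ and therefore, by hypothesis, carry a transition, so $\Aa$ (being single-state) has a run on $t_j$ reaching $q$. Plugging the trees $t_j$ back in yields a genuine $t\in\lang\Ss$ together with a run of $\Aa$ whose output is exactly $s$.

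The main obstacle is this converse inclusion. One must maintain, through the induction, a correspondence between the $\Hh$-derivation of $s$ and a reachable $\Ss$-derivation in which the dropped parts survive (un-reduced) as subterms $B_j\,\vec N$ with $\vec N$ closed, so that productivity genuinely applies to them and completing subtrees $t_j$ exist. Making this bookkeeping precise—matching positions across the two derivations and verifying that the reconstructed run of $\Aa$ indeed outputs $s$—is the only delicate point; everything else follows the template of Lemma~\ref{lemma:HORS:transd:complete}.
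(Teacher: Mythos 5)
Your proposal is correct and follows essentially the same route as the paper: normalize $\Ss$ (preserving productivity), fold each restriction transition into the symbol-producing rules by dropping the discarded arguments, and use productivity together with the every-symbol-has-a-transition hypothesis to establish the converse inclusion $\lang{\Ss'}\subseteq\trans\Aa(\lang\Ss)$. The only difference is cosmetic: since a restriction has a single state and no $\varepsilon$-transitions, the paper writes the new rules directly as plain HORS rules $A\,\vec y \to b^n\,(B_{i_1}\,\vec y)\cdots(B_{i_n}\,\vec y)$, so your detour through a HORSS (and its conversion back to a HORS) is unnecessary, though harmless.
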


\begin{proof}
	First, w.l.o.g.~we assume that $\Ss$ is normalized (notice that while converting a productive HORS to a normalized one, it remains productive).
	Every rule $S\,\vec y\to h\,(B_1\,\vec y)\,\dots\,(B_r\,\vec y)$ of $S$ in which $h$ is not a symbol is also taken to $\Ss'$.
	If $h=a^r$ is a symbol, we consider every transition of $\Aa$ having $a^r$ on the left side.
	Since $\Aa$ is a restriction, this transition is of the form
	\[ a^r\,(p,x_1)\,\dots\,(p,x_r) \goesto {} p,b^n \, x_{i_1} \cdots x_{i_n}\,, \]
	where $1 \leq i_1 < \cdots < i_n \leq r$.
	Then, to $\Ss'$ we take the rule 
	\[ A \, \vec y \to b^n \, (B_{i_1} \, \vec y) \cdots (B_{i_n} \, \vec y)\,. \]
	
	In general, $\trans{\Aa}(\lang \Ss) \subseteq \lang{\Ss'}$.
	Since $\Ss$ is productive,
	the subterms $B_i \, \vec y$ obtained by rewriting the initial nonterminal $A_\mathit{init}$
	produce at least one tree, and since for every symbol in this tree there is a transition of $\Aa$ having this symbol on the left side, $\Aa$ has some run on this tree.
	Thus $\trans{\Aa}(\lang \Ss) = \lang{\Ss'}$.
\end{proof}
\thmtransd*
\begin{proof}
	A transduction $\Aa$ realized by an FTT is the composition of a complete one $\Bb$ and a restriction $\Cc$.
	We first apply Lemma~\ref{lemma:HORS:transd:complete} to the complete transduction realized by $\Bb$.
	Then, using Lemma \ref{lemma:HORS:restrict-to-symbols} we remove from the generated language all trees that use symbols not appearing on the left side of any transition of $\Cc$.
	Next, we turn the resulting HORS into a productive one by Lemma~\ref{lemma:HORS:productive},
	and, finally, we apply Lemma~\ref{lem:HORS:restriction} to the resulting productive HORS and the restriction realized by $\Cc$.
	We end up with a HORS producing the image of $\Aa$ applied to the original HORS, and being of the same order.
\end{proof}


\ignore{
	\section{Proof of Lemma~\ref{lem:prod-aut}}\label{app:lem:prod-aut}

	For convenience let us recall the statement of the lemma.

	\lemmaprodaut*
	%
	
\todo[inline]{For deciding emptiness of HORS shouldn't the citation be Ong LICS'06 instead of \cite{kobele2015}?}

\begin{proof}
From \cite{kobele2015}, we know that we can decide whether a HORS
defines an
empty language. In the case when $\mathcal{S}$ defines an empty
language the lemma trivially holds. We thus assume that $\mathcal{S}$
defines a non-empty language.

The proof of the lemma comprises two steps:
First, we construct a scheme $\mathcal{S}''$ whose language is the language of
run trees of $\mathcal{A}$ over trees in the language of $\mathcal{S}$.
Second, we construct a scheme $\mathcal{S}'$ from
$\mathcal{S}''$ where we keep only the parts of the run trees staying in $Q'$.

Formally, each symbol should have a fixed rank.
Thus, while representing a run in a tree generated by $\Ss'$, for each state $q$ of $\Aa$ and for each $r\in\Nat$ we will use a symbol $(q,r)$ of rank $r$.\footnote{%
	We remark here that in the proof of Corollary \ref{coro:narrowing} we have assumed for simplicity that for each state there is just one symbol.
	Although this is formally incorrect, it is just a technicality to consider separate symbols for different ranks.}
In the auxiliary HORS $\Ss''$ we need to use a richer alphabet than in $\Ss'$: 
for each transition $\rho = (a\, q_1\,\dots\, q_r \to q)$ of
$\mathcal{A}$ we will use a symbol $\rho$ of rank $r$. 

Before we start our constructions, we assume w.l.o.g.~that 
every rule of $\Ss$ is of the form 
$$A\, x_1\,\dots\, x_p \to h\, (B_1\, x_1\,\dots\, x_p)\, \dots\, (B_r\, x_1\,\dots\, x_p)\,,$$
where $h$ is
either one of the $x_i$'s, a nonterminal, or a symbol. Here,
$r$ can be $0$, and thus we also account for rules
of the form $A\, x_1\,\dots\, x_p\to x_i$, $A\, x_1\,\dots\, x_p \to a$, or
$A\, x_1\,\dots\, x_p\to B$.
The arity $p$ may be different in each rule. We will not detail the
rather standard procedure of transforming a grammar to this form
without increasing the order.

In the sequel of the proof, we assume that $Q = \{q_1,\dots,q_n\}$.

Given a sort $\alpha$, we inductively define the sort $[\alpha]$ as
follows:
$$[o] = o,\,[\alpha\to\beta] = \underbrace{[\alpha] \to \cdots \to
    [\alpha]}_{n}\to [\beta]\,.$$
The set of nonterminals of $\mathcal{S}'$ is the set of pairs
$(A,q)$, noted $A_q$, where $A$ is a nonterminal of $\mathcal{S}$ and
$q$ is in $Q$. Moreover, if $A$ has sort $\alpha$ then $A_q$ has sort
$[\alpha]$. Given a term $M$ made only of nonterminals of
$\mathcal{S}$, we inductively define a term $\langle  M,q
\rangle$ made of nonterminals from $\mathcal{S}'$:
$$
\begin{array}{l@{}l}
\langle A\, M_1\,\dots\, M_p,q \rangle = A_q\,
  & \langle  M_1,q_1 \rangle\, \dots\,
    \langle M_1,q_n \rangle\\ 
  &\phantom{\langle  M_1,q_1 \rangle\, }\dots\\
  &\langle M_p,q_1 \rangle\,\dots\, \langle
    M_p,q_n \rangle\,.
\end{array}
$$
As a shorthand, we write $\overrightarrow{\langle N\rangle}$ to denote a
sequence of arguments of the form  $\langle N,q_1 \rangle\, \dots\, \langle
N,q_n \rangle$. With this notation, we can write the definition of
$\langle M,q \rangle$ as follows:
$$\langle A\, M_1\,\dots\, M_p,q \rangle = A_q\, \overrightarrow{\langle  M_1 \rangle}\, \dots\,
\overrightarrow{\langle M_p \rangle}\,.$$ Similarly we are going to write
$\overrightarrow{x_i}$ to denote a sequence of variables $x_{i,1}\,\dots\,x_{i,n}$.

The rules of $\mathcal{S}''$ are built from the rules of $\mathcal{S}$
as follows:
\begin{itemize}
\item if
  $A\, x_1\,\dots\, x_p \to x_i\, (B_1\,x_1\,\dots\, x_p)\,\dots\, (B_r\,x_1\,\dots\, x_p)$ is
  a rule of $\mathcal{S}$, then, for every $j$ in $\{1,\dots,n\}$, $\mathcal S''$ contains the rule:
  $$
  \begin{array}{l@{}c}
    A_{q_j}\,\overrightarrow{y_1}\,\dots\, \overrightarrow{y_p} \to y_{i,j}\, 
    &
      (B_{1,q_1}\, \overrightarrow{y_1}\,\dots\, \overrightarrow{y_p})\, \dots\,
      (B_{1,q_n}\, \overrightarrow{y_1}\,\dots\, \overrightarrow{y_p})\\ 
    &\dots\\
    & (B_{r,q_1}\,\overrightarrow{y_1}\,\dots\, \overrightarrow{y_p})\, \dots\,
      (B_{r,q_n}\,\overrightarrow{y_1}\,\dots\, \overrightarrow{y_p})
  \end{array}
  $$
\item If
  $A\, x_1\,\dots\, x_p \to B\, (B_1\,x_1\,\dots\, x_p)\,\dots\, (B_r\,x_1\,\dots\, x_p)$ is
  a rule of $\mathcal{S}$, then, for every $j$ in $\{1,\dots,n\}$, $\mathcal S''$ contains the rule:
   $$
  \begin{array}{l@{}c}
    A_{q_j}\, \overrightarrow{y_1}\,\dots\, \overrightarrow{y_p} \to B_{q_j}\,
    & (B_{1,q_1}\, \overrightarrow{y_1}\,\dots\, \overrightarrow{y_p})\, \dots\,
    (B_{1,q_n}\, \overrightarrow{y_1}\,\dots\, \overrightarrow{y_p})\\
    &\dots\\
    & (B_{r,q_1}\, \overrightarrow{y_1}\,\dots\, \overrightarrow{y_p})\,
      \dots\, (B_{r,q_n}\, \overrightarrow{y_1}\,\dots\, \overrightarrow{y_p})
  \end{array}
  $$
\item If
  $A\, x_1\,\dots\, x_p \to a\, (B_1\,x_1\,\dots\, x_p)\,\dots\, (B_r\,x_1\,\dots\, x_p)$ is a
  rule of $\mathcal{S}$ and $\rho = (a\, q_{i_1}\, \dots\, q_{i_r} \to q_j)$ is a
  transition of $\mathcal{A}$, then $\mathcal S''$ contains the rule: 
  $$A_{q_j}\,\overrightarrow{y_1}\,\dots\, \overrightarrow{y_p} \to \rho\,
  (B_{1,q_{i_1}}\,\overrightarrow{y_1}\,\dots\, \overrightarrow{y_p})\, \dots\,
  (B_{r,q_{i_r}}\,\overrightarrow{y_1}\,\dots\, \overrightarrow{y_p})
  $$
\end{itemize}
As the rules of $\mathcal{S}'$ are rather long, we will use a more
compact notation and write
$$A\,\overrightarrow{y_1}\,\dots\, \overrightarrow{y_p} \to h\, (B_{1,q_i}\, \overrightarrow{y_1}\, \dots\,
\overrightarrow{y_p})_{i\in[n]}\,\dots\, (B_{r,q_i}\,\overrightarrow{y_1}\,\dots\,
\overrightarrow{y_p})_{i\in[n]}\,,$$ where $h$ may be either a variable or a
nonterminal.

We are now going to prove that $\mathcal{S}''$ indeed generates the
right set of trees.  For this we show the following property.

\smallskip



\noindent\textbf{Claim.} Let $M$ be a term of sort $o$ made only of
nonterminals of $\mathcal{S}$ and let $t$ be a term of sort $o$ made only of terminals.
Let $u$ be a run tree of $\mathcal{A}$ on $t$ starting from a state $q_j$.
We have
\begin{align*}
	M\to^*_\mathcal{S} t \quad \textrm {iff} \quad
	\langle M,q_j \rangle \to^*_{\mathcal{S}''} u. 
\end{align*}

We first start by the left to right part of the equivalence. We
proceed by induction on the smallest length of a sequence of reductions 
$M\to_\mathcal{S}^* t$. In the sequel we suppose that  $M = A\, M_1\,\dots\,
M_p$.

If $A\, x_1\, \dots\, x_p\to x_i\, (B_1\,x_1\,\dots\, x_p)\,\dots\, (B_r\,x_1\,\dots\, x_p)$
is a rule of $\mathcal{S}$ and
$$M\to_{\mathcal{S}}M_i\,(B_1\, M_1\,\dots\, M_p)\,\dots\, (B_r\, M_1\,\dots\, M_p)
\to^*_{\mathcal{S}} t\,,$$ then, by definition, there is a rule
\begin{align*}&A_{q_j}\, \overrightarrow{y_1}\,\dots\, \overrightarrow{y_p} \to\\&\qquad y_{i,j}\,
(B_{1,q_i}\, \overrightarrow{y_1}\, \dots\, \overrightarrow{y_p})_{i\in[n]}\,\dots\,
(B_{r,q_i}\,\overrightarrow{y_1}\,\dots\, \overrightarrow{y_p})_{i\in[n]}\end{align*} in
$\mathcal{S}''$. As
$$\langle M,q_j \rangle = A_{q_j}\, \langle M_1,q_1 \rangle\, \dots\, \langle
M_1,q_n \rangle\, \dots\, \langle M_p,q_1 \rangle\, \dots\, \langle M_p,q_n
\rangle\,,$$ we have that
$$
\begin{array}{r@{}l}
\langle M,q_j \rangle 
  \to_{\mathcal{S}''}
  & \langle M_i,q_j \rangle\,
\overrightarrow{\langle B_1\, M_1\,\dots\, M_p \rangle}\,\dots\,
\overrightarrow{\langle B_r\, M_1\,\dots\, M_p\rangle}\\
    =\ & \langle M_i\,(B_1\,
M_1\,\dots\, M_p)\,\dots\, (B_r\, M_1\,\dots\, M_p),q_j\rangle\,.
\end{array}
$$ 
Thus, by the induction assumption, as
$$M_i\,(B_1\, M_1\,\dots\, M_p)\,\dots\, (B_r\, M_1\,\dots\, M_p)\to_\mathcal{S}^\ast
t\,,$$
we
have that
$\langle M_i\,(B_1\, M_1\,\dots\, M_p)\,\dots\, (B_r\, M_1\,\dots\, M_p),q_j\rangle
\to_{\mathcal{S}''}^\ast u$ and thus
$\langle M,q_j \rangle \to_{\mathcal{S}''}^\ast u$.

If $A\, x_1\,\dots\, x_p \to B\,(B_1\,x_1\,\dots\, x_p)\,\dots\, (B_r\,x_1\,\dots\, x_p)$ is a
rule of $\mathcal{S}$ and
$$M\to_{\mathcal{S}}B\,(B_1\, M_1\,\dots\, M_p)\,\dots\, (B_r\, M_1\,\dots\, M_p)
\to_{\mathcal{S}}^\ast t\,,$$ then, by definition, there is a rule
\begin{align*}&A_{q_j}\, \overrightarrow{y_1}\,\dots\, \overrightarrow{y_p} \to\\&\qquad B_{q_j}\,
(B_{1,q_i}\, \overrightarrow{y_1}\, \dots\,
\overrightarrow{y_p})_{i\in[n]}\,\dots\, (B_{r,q_i}\,\overrightarrow{y_1}\,\dots\,
\overrightarrow{y_p})_{i\in[n]}\end{align*} in
$\mathcal{S}''$. As
$$\langle M,q_j \rangle = A_{q_j}\, \langle M_1,q_1 \rangle\, \dots\, \langle
M_1,q_n \rangle\, \dots\, \langle M_p,q_1 \rangle\, \dots\, \langle M_p,q_n
\rangle\,,$$ we have that
$$
\begin{array}{r@{}l}
\langle M,q_j \rangle \to_{\mathcal{S}''}&  B_{q_j}\,
\overrightarrow{\langle B_1\, M_1\,\dots\, M_p \rangle}\,\dots\,
\overrightarrow{\langle B_r\, M_1\,\dots\, M_p\rangle}\\
=\ & \langle B\,(B_1\,
M_1\,\dots\, M_p)\,\dots\, (B_r\, M_1\,\dots\, M_p),q_j\rangle\,.
\end{array}
$$ Thus, by the
induction assumption, as
$$B\,(B_1\, M_1\,\dots\, M_p)\,\dots\, (B_r\, M_1\,\dots\, M_p)\to_\mathcal{S}^\ast t\,,$$ we
have that
$\langle B\,(B_1\, M_1\,\dots\, M_p)\,\dots\, (B_r\, M_1\,\dots\, M_p),q_j\rangle
\to_{\mathcal{S}''}^\ast u$ and thus
$\langle M,q_j \rangle \to_{\mathcal{S}''}^\ast u$.

If $A\, x_1\,\dots\, x_p \to a\,(B_1\,x_1\,\dots\, x_p)\,\dots\, (B_r\,x_1\,\dots\, x_p)$ is a
rule of $\mathcal{S}$ and
$$M\to_{\mathcal{S}}a\,(B_1\, M_1\,\dots\, M_p)\,\dots\, (B_r\, M_1\,\dots\, M_p)
\to_{\mathcal{S}}^\ast t\,,$$ then, it must be the case that
$t = a\, t_1\,\dots\, t_r$ and that $u = \rho\, u_1\,\dots\, u_r$ where
$\rho = (a\, q_{i_1}\,\dots\, q_{i_r} \to q_j)$ is a transition of $\mathcal{A}$ and
$u_1, \dots, u_r$ are respectively run trees of $\mathcal{A}$ on $t_1$, \dots,
$t_r$ that respectively start from the states $q_{i_1}, \dots,
q_{i_r}$. Moreover, by definition, there is a rule
$$A_{q_j}\, \overrightarrow{y_1}\,\dots\, \overrightarrow{y_p} \to \rho\,
(B_{1,q_{i_1}}\, \overrightarrow{y_1}\,\dots\, \overrightarrow{y_p})\, \dots\,
(B_{r,q_{i_r}}\, \overrightarrow{y_1}\,\dots\, \overrightarrow{y_p})$$ in
$\mathcal{S}''$. As
\begin{align*}
&\rho\, (B_{1,q_{i_1}}\, \overrightarrow{M_1}\,\dots\, \overrightarrow{M_p})\,\dots\,
(B_{r,q_{i_r}}\, \overrightarrow{M_1}\,\dots\, \overrightarrow{M_p}) \\
&\qquad=\rho\,\langle B_1\,M_1\,\dots\, M_p, q_{i_1} \rangle\,\dots\,\langle B_r\,M_1\,\dots\, M_p, q_{i_r} \rangle\,,
\end{align*}
$\langle M, q_j \rangle \to_{\mathcal{S}''} \rho\, \langle B_1\,M_1\,\dots\,
M_p, q_{i_1} \rangle\, \dots\, \langle B_r\,M_1\,\dots\, M_p, q_{i_r} \rangle$.
Now, for every $k\in\{1,\dots,r\}$, by the induction assumption, as
$B_k\,M_1\,\dots\, M_p\to_\mathcal{S}^\ast t_k$, we have that
$\langle B_k\, M_1\,\dots\, M_p, q_{i_k}\rangle \to_{\mathcal{S}''}^\ast u_k$, and
thus $\langle M,q_j \rangle \to_{\mathcal{S}''}^\ast\rho\, u_1\,\dots\, u_r = u$.

The proof of the right to left implication is very similar to the other direction: it is a simple induction
on the derivation where rules used in $\mathcal{S}''$ are mimicked with
the rules of $\mathcal{S}$ they originate from.

The claim says that for the initial symbol of $\mathcal{S}''$ we can take
$\langle S,q_I\rangle$, where $S$ is the initial symbol of $\Ss$ and
$q_I$ the initial state of $\Aa$.

We now turn to the construction of $\mathcal{S}'$ from
$\mathcal{S}''$. 
For this, we construct a deterministic scheme $\mathcal{T}$ from
the non-deterministic scheme $\mathcal{S}''$.
To $\Tt$ we will be then able to apply a reflection transformation.
We use a symbol $+$ of sort $o\to o\to o$ to eliminate
non-determinism.
For every nonterminal $A$ of $\mathcal{S}''$ we collect all its rules:
$A\, x_1\,\dots\, x_p \to K_1, \dots, A\, x_1\,\dots\, x_p\to K_m$, and add to 
$\mathcal{T}$ the single rule:
$$A\,x_1\,\dots\, x_p \to +\, K_1\, (+\, K_2\, (\dots\, (+\, K_{m-1}\, K_m)\dots)).$$ 

The (possibly infinite) tree generated by $\mathcal{T}$ represents the
language of trees generated from $\mathcal{S}''$ since the
non-deterministic choices that can be made in $\mathcal{S}''$ are
represented by nodes labeled by $+$ in the tree generated by
$\Tt$.
In this latter tree, we can find every tree generated by $\Ss'$ using a
finite number of rewriting steps consisting of replacing a subtree
rooted in $+$ by one of its children.

We now take the monotone applicative structure
(see~\cite{salvati15:_using,kobele2015}) $\mathcal{M} =
(\mathcal{M}_\alpha)_{\alpha\in\mathrm{Sorts}}$ where  $\mathcal{M}_o$
is the two element lattice, with maximal element $\top$ and minimal
element $\bot$. 
We interpret $+$ as the join of its arguments, and every other symbol
$b$ as the meet of its arguments; in particular nullary symbols are
interpreted as $\top$.
This allows us to define the semantics $\sem{M,\chi,\nu}$ of a
 term given a valuation $\chi$ for nonterminals and  $\nu$
for variables (these valuations assign to a symbol a value in $\Mm$ of
an appropriate sort).
The definition of $\sem{M,\chi,\nu}$ is standard, in particular $\sem{K\,L,\chi,\nu} = \sem{K,\chi,\nu}(\sem{L,\chi,\nu})$.

The meaning of nonterminals in $\Tt$ is given by the least fixpoint
computation. 
For a valuation $\chi$ of the nonterminals of $\mathcal{T}$, we write
$\mathcal{T}(\chi)$ for the valuation $\chi'$ such that
$\chi'(A)=\lambda g_1.\cdots.\lambda g_p.\sem{K,\chi,[g_1/x_1,\dots,g_p/x_p]}$
where $A\, x_1\,\dots\, x_p \to K$ is the rule for $A$ in $\mathcal{T}$.
Then the meaning of nonterminals is given by the valuation that is the
least fixpoint of this operator:  $\chi_\Tt=\bigwedge\set{\chi :
  \Tt(\chi)\subseteq \chi}$. 
Having $\chi_\Tt$ we can define the semantics of a term $M$ in a valuation
$\nu$ of its free variables as $\sem{M,\nu}=\sem{M,\chi_\Tt,\nu}$. 

Least fixed point models of schemes induce an interpretation on 
infinite trees by finite approximations. An infinite tree has value $\top$
iff  it represents a non-empty
language~\cite{kobele2015}.
The important point is that the semantics of a term and that of the
infinite tree generated from the term coincide. 

We can now apply to $\Tt$ the reflection
operation~\cite{broadbent10:_recur_schem_logic_reflec}
with respect to the above interpretation  $\mathcal{M}$.
The result is a scheme $\Tt'$ that generates the same tree as
$\mathcal{T}$ but where every node is additionally marked by a tuple
$(a_1,\dots, a_r,b)$ where $a_1$, \dots, $a_r$ is the semantics of the
arguments of that node and $b$ is the semantics of the subtree rooted
at that node.  
What is important here is that $\mathcal{T'}$ has the same order as
$\mathcal{T}$ which is the same as that of $\mathcal{S}''$ and
$\mathcal{S}$. 
The additional labels allow us to remove unproductive parts of the
tree generated by $\Tt'$.
For this we introduce two more nonterminals $\Pi_1$ and
$\Pi_2$  of sort $o\to o \to o$.
We then add the rules $\Pi_1\, x_1\, x_2 \to x_1$,
$\Pi_2\, x_1\, x_2 \to x_2$.
Now we 
replace every occurrence of $+$ labeled by $(\top,\bot,\top)$ by
$\Pi_1$, and every occurrence of $+$ labeled by $(\bot,\top,\top)$ by
$\Pi_2$. 
After these transformations we obtain a scheme $\mathcal{T}''$ 
generating a tree which contains exactly those nodes of $\mathcal{T}'$
that are labeled with $(\top,\dots,\top,\top)$.  

We convert
$\mathcal{T}''$ into a HORS $\widetilde{\mathcal{S}}$
whose language is the same as that of $\mathcal{S}''$.
For this we replace every remaining occurrence of $+$ (thus labelled by $(\top,\top,\top)$) by a nonterminal $C$ of sort
$o\to o\to o$ and add two rewrite rules $C\, x\, y \to x$ and
$C\, x\, y\to y$. We also  remove the additional labels from symbols.
The important point is that $\widetilde{\mathcal{S}}$
is \emph{productive} in the sense that whenever its initial nonterminal $S$
reduces to a term $M$ (which may contain nonterminals) then $M$
can be reduced to some finite tree.

Now we can obtain $\mathcal{S}'$ from
$\widetilde{\mathcal{S}}$. For this we need to keep only parts of the
run that use states from $Q'$. 
For every symbol 
$\rho = (a\, q_{i_1}\,\dots\, q_{i_r},q)$ let $j_1,\dots,j_k$ be those among $j\in\{1,\dots,r\}$ for which $q_{i_j}$ is in $Q'$.
Every use of $\rho$ in a rule of $\widetilde\Ss$ is replaced in $\Ss'$ by a 
nonterminal $R_\rho$ with the associated rule
$R_\rho\, x_1\,\dots\, x_r \to (q,k)\, x_{j_1}\, \dots\, x_{j_k}$ (recall that $(q,k)$ is a symbol of rank $k$).
The scheme $\mathcal{S}'$ generates the required language.

The detour we made through the scheme $\mathcal{T}''$ allows us to
avoid the following pitfall: suppose that $\rho\, M_1\, M_2$ is a term
that can be produced from the initial nonterminal of
$\widetilde{\mathcal{S}}$, it could be the case that in the
transformation from $\widetilde{\mathcal{S}}$ to $\mathcal{S}'$ we
would obtain a term $\rho'\,M'_1$ by mimicking the reductions of
$\widetilde{\mathcal{S}}$. If  $M_2$ were non-productive then this
would introduce an element which does not correspond to an element in the
language of $\widetilde{\mathcal{S}}$. But the detour through
$\mathcal{T}''$ guarantees us that all terms derived in
$\widetilde{\mathcal{S}}$, so also $M_2$, are productive.
%
\end{proof}


}

\section{Proof of Lemma~\ref{lem:c-step}}\label{app:lem:c-step}

We recall the the statement of the lemma.

\lemmacstep*
%
\begin{proof}
	We proceed by induction on the structure of $L$.

	Suppose first that $L=a^r\,M_1\,\dots\,M_r$ (where surely $r\geq 1$).
	Then $L'=a^r\,M_1'\,\dots\,M_r'$, where $M_l\to_\Ss M_l'$ for some $l\in\{1,\dots,r\}$, and $M_i=M_i'$ for all $i\neq l$.
	The derivation $D'$ contains a node labeled by $\vdash a^r:(\emptyset,\{(S_1,\r)\}\to\dots\to\{(S_r,\r)\}\to\r)$, and for each $i\in\{1,\dots,r\}$ a subtree $D_i'$ deriving $\vdash M_i':(S_i,\r)$
	(they are merged together by using the application rule $r$ times), where $S_1,\dots,S_r$ are disjoint and their union is $S$.
	We apply the induction assumption to $M_l$, obtaining a derivation $D_l$ for $\vdash M_l:(S_l,\r)$ and a term $P_l$ $\merge$-equivalent to $\trcum(D_l')$ and such that $\merge(\trcum(D_l))\to_{\Ss'}P_l$.
	We can write $P_l=\merge(\lista_l')$ (where the length of $\lista'_l$ and $\trcum(D_l)$ is the same).
	We take $D_i=D_i'$ for $i\neq l$, 
	and out of the single-node derivation for $\vdash a^r:(\emptyset,\{(S_1,\r)\}\to\dots\to\{(S_r,\r)\}\to\r)$ and of derivations $D_i$ for $i\in\{1,\dots,r\}$ we compose a derivation $D$, using the application rule $r$ times.
	We see that $\trcum(D)=(a^0;\trcum(D_1);\dots;\trcum(D_r))$,
        and $\trcum(D')=(a^0;\trcum(D_1');\dots;\trcum(D_r'))$.
        Moreover, taking $\lista'_i=\trcum(D_i)$ for $i\neq l$ we get
	$\merge(\trcum(D))\to_{\Ss'}\merge(a^0;\lista_1';\dots;\lista_r')$, where $\merge(a^0;\lista_1';\dots;\lista_r')$ is $\merge$-equivalent to $\trcum(D')$.
	It remains to observe that $D$ is maximal. 
	Indeed, the nodes inside some $D_i$ have all required children since $D_i$ are maximal, 
	and the new internal nodes created in $D$ describe applications with an argument $M_i$ of sort $o$, and it is impossible to derive $\vdash M_i:(\emptyset,\sigma)$ for any $\sigma$ 
	(since $\LT^o$ does not contain pairs with empty set on the first coordinate).
	
	The remaining possibility is that $L=A\,N_1^{\alpha_1}\,\dots\,N_k^{\alpha_k}$.
	Let $A\,x_1\,\dots\,x_k\to K$ be the rule of $\Ss$ used in the reduction $L\to_\Ss L'$, that is such that $L'=K[N_1/x_1,\dots,N_k/x_k]$.
	Take $D_{0,K}=D'$.
	For $i\in\{1,\dots,k\}$, consecutively, we apply Lemma \ref{lem:c-subst} to $D_{i-1,K}$ and $N_i$, creating sets $\Lambda_i\subseteq\LT^{\alpha_i}$ and maximal derivations $D_{i,K}$ and $D_{i,\lambda}$ for $\lambda\in\Lambda_i$.
	Let $D_K=D_{k,K}$; it derives $\Gamma\vdash K:(S,\r)$, where $\Gamma=\{x_i:\lambda\mid i\in\{1,\dots,k\},\lambda\in\Lambda_i\}$.
	By point 2 of Lemma \ref{lem:c-subst} we know that for every $\lambda\in\Lambda_i$ with a nonempty set on the first coordinate, in $D_K$ there is a node labeled by $\Gamma\vdash x_i:\lambda$.
	On the one hand, since our type systems requires that subsets of $\Sigma_0$ coming from different children are disjoint, we can be sure that the sets on the first coordinate of labeled types in $\Lambda_1,\dots,\Lambda_k$ are disjoint.
	It follows that $\tau_A=\Lambda_1\to\dots\to\Lambda_k\to\r$ is a type.
	On the other hand, nodes labeled by $\Gamma\vdash x_i:\lambda$ give the only possibility for introducing elements of $S$ to our derivation $D_K$ 
	(by assumption in $K$ we do not have nullary symbols, since $A$ is not the initial nonterminal), which means that the union of the sets on the first coordinate of labeled types in $\Lambda_1,\dots,\Lambda_k$ is $S$.
	Since $(S,\r)\in\LT^o$, we have $S\neq\emptyset$, and thus $k\geq 1$, which means that $(\emptyset,\tau_A)$ is a labeled type.
	
	In order to obtain the required derivation $D$ for $\vdash L:(S,\r)$, we start with the single-node derivation for $\vdash A:(\emptyset,\tau_A)$, 
	and using the application rule $k$ times we attach derivations $D_{i,\lambda}$ for each $i\in\{1,\dots,k\}$ and $\lambda\in\Lambda_i$.
	This derivation is maximal, since $D_{i,\lambda}$ were maximal, and by point 1 of Lemma \ref{lem:c-subst} the newly created internal nodes have all required children 
	(whenever it is possible to derive a type judgment $\vdash N_i:(\emptyset,\sigma)$, we are deriving it in $D$).
	
	Recall that $\trcum(D)$ is a concatenation of $\tr(D)$ and of $\trcum(D_{i,\lambda})$ for every $i\in\{1,\dots,k\}$ and $\lambda\in\Lambda_i$.
	For $i\in\{1,\dots,k\}$ let $\eta_i$ be the substitution that maps $x_i\restr_\lambda$ to $\tr(D_{i,\lambda})$ for every $\lambda\in\Lambda_i$.
	In $\Ss'$ we have the rule
        $A_\tau\,\Vars{1}\,\dots\,\Vars{k}\to\merge(\trcum(D_K))$,
        where $\Vars{i}$ lists variables $x_i\restr_{\lambda}$ for
        $\lambda\in\Lambda_i$ if $\alpha_i\neq o$, and is empty if $\alpha_i=o$ (for $i\in\{1,\dots,k\}$).
	Notice that this rule applied to $\tr(D)$ gives $\merge(\trcum(D_K))[\eta_1,\dots,\eta_k]$ 
	(substitutions $\eta_i$ for $i$ such that $\alpha_i=o$ can be skipped, since anyway variables $x_i\restr_{\lambda_{i,j}}$ for such $i$ do not appear in $\trcum(D_K)$).
	As $P$ we take $\merge(\cdot)$ of the concatenation of this term and of all $\trcum(D_{i,\lambda})$; as we have said $\merge(\trcum(D))\to_{\Ss'} P$.
	From point 3 of Lemma \ref{lem:c-subst} it follows that $\trcum(D')$ is $\merge$-equivalent to $P$, what finishes the proof.
\end{proof}



\end{document}